\newtheorem*{rep@theorem}{\rep@title}
\newcommand{\newreptheorem}[2]{%
\newenvironment{rep#1}[1]{%
 \def\rep@title{#2 \ref{##1}'}%
 \begin{rep@theorem}}%
 {\end{rep@theorem}}}
\theoremstyle{plain}
\newtheorem{theorem}{Theorem}
\newtheorem{lemma}[theorem]{Lemma}
\newtheorem{corollary}[theorem]{Corollary}
\newtheorem{proposition}[theorem]{Proposition}
\theoremstyle{definition}
\newcommand{\R}{\mathbb{R}}
\newcommand{\p}{\textrm{P}}
\newcommand{\G}{\mathcal{G}}
\newcommand{\bS}{\mathbb{S}}
\DeclareMathOperator*{\E}{\mathbb{E}}
\DeclareMathOperator{\diag}{diag}
\DeclareMathOperator{\tr}{tr}
\DeclareMathOperator{\poly}{poly}
\DeclareMathOperator{\polylog}{polylog}
\newcommand{\vertiii}[1]{{\left\vert\kern-0.25ex\left\vert\kern-0.25ex\left\vert #1 
    \right\vert\kern-0.25ex\right\vert\kern-0.25ex\right\vert}}
\newcommand{\norm}[1]{{\left\| #1\right\|}}
\title{Embeddings of Schatten Norms with Applications to Data Streams}
\author{
Yi Li\\Division of Mathematics\\
School of Physical \& Mathematical Sciences\\ Nanyang Technological University\\\texttt{yili@ntu.edu.sg}
\and
David P. Woodruff\\IBM Almaden Research Center \\ \texttt{dpwoodru@us.ibm.com}
}
\begin{document}
\maketitle

\begin{abstract}
  Given an $n \times d$ matrix $A$, its Schatten-$p$ norm, $p \geq 1$, is defined as
  $\|A\|_p = \left (\sum_{i=1}^{\textrm{rank}(A)}\sigma_i(A)^p \right )^{1/p}$,
  where $\sigma_i(A)$ is the $i$-th largest singular value of $A$. These norms
  have been studied in functional analysis in the context of non-commutative $\ell_p$-spaces,
  and recently in data stream and linear sketching models of computation.
  Basic questions on the relations between these norms, such as their embeddability, are still
  open. Specifically, given a set of
  matrices $A^1, \ldots, A^{\poly(nd)} \in \mathbb{R}^{n \times d}$, suppose we want to construct a linear map
  $L$ such that $L(A^i) \in \mathbb{R}^{n' \times d'}$ for each $i$, where $n' \leq n$ and $d' \leq d$, and further,  
  $\|A^i\|_p \leq \|L(A^i)\|_q \leq D_{p,q} \|A^i\|_p$ for a given approximation factor $D_{p,q}$ and real number
  $q \geq 1$. Then how large do $n'$ and $d'$ need to be as a function of $D_{p,q}$?

  We nearly resolve this question for every $p, q \geq 1$, for the case where $L(A^i)$ can be expressed
  as $R \cdot A^i \cdot S$, where $R$ and $S$ are arbitrary matrices that are allowed to depend on
  $A^1, \ldots, A^t$, that is,
  $L(A^i)$ can be implemented by left and right matrix multiplication. Namely, for
  every $p, q \geq 1$, we
  provide nearly matching upper and lower bounds on the size of $n'$ and $d'$ as a function of $D_{p,q}$. Importantly,
  our upper bounds are {\it oblivious}, meaning that $R$ and $S$ do not depend on the $A^i$, while our lower bounds hold even if $R$ and $S$ depend on the $A^i$. 
  As an application of our upper bounds,
  we answer a recent open question of Blasiok et al. about space-approximation trade-offs for the Schatten $1$-norm,
  showing in a data stream it is possible to estimate
  the Schatten-$1$ norm up to a factor of $D \geq 1$ using $\tilde{O}(\min(n,d)^2/D^4)$ space. 
\end{abstract}

\section{Introduction}\label{sec:intro}
 Given an $n \times d$ matrix $A$, its Schatten-$p$ norm, $p \geq 1$, is defined to be 
  $\|A\|_p = \left (\sum_{i=1}^{\textrm{rank}(A)}\sigma_i(A)^p \right )^{1/p}$,
 where $\sigma_i(A)$ is the $i$-th largest singular value of $A$, i.e., the square root of the $i$-th largest
 eigenvalue of $A^TA$. The Schatten-$1$ norm is
 the nuclear norm or trace norm, the Schatten-$2$ norm is the Frobenius norm, and 
 the Schatten $\infty$-norm, defined as the limit of the Schatten-$p$ norm when $p\to\infty$, is the operator norm. The Schatten $1$-norm has applications
 in non-convex optimization \cite{cr12}, while Schatten-$2$ and
 Schatten-$\infty$ norms are useful in geometry and linear algebra, see, e.g., \cite{w14}. Schatten-$p$ norms
 for large $p$ also provide approximations to the Schatten-$\infty$ norm. 

 The Schatten
 norms appear to be significantly harder to compute or approximate than the vector $\ell_p$-norms in various
 models of computation, and
 understanding the complexity of estimating them has led to new algorithmic ideas and
 lower bound techniques. The main difficulty is that we do not directly have access to the spectrum
 of $A$, and na\"ively it is costly in space and time to extract useful information about it.
 A line of work has focused
 on understanding the complexity of estimating such norms in
 the data stream model with $1$-pass over the stream \cite{lw16a} as well as with
 multiple passes \cite{bcky16}, the sketching model \cite{akr15,lnw14,lw16b}, statistical models \cite{kv16},
 as well as the general RAM model \cite{MNSUW17,UCS16}. Dimensionality reduction in these norms also has applications
 in quantum computing \cite{hms11,w05}. 
 It has also been asked in 
 places if the Schatten-$1$ norm admits non-trivial nearest neighbor search data structures~\cite{a10}. 
 
\paragraph{Our Results.} In this paper we study the embeddability of the Schatten-$p$ norm into the Schatten-$q$ norm for linear maps
 implementable by matrix multiplication. More concretely, we first ask for the following form of embeddability:
 given $n$ and $t$ (where $t = \Omega(\log n)$), what is the smallest value of $D_{p,q}$, which we call the {\it distortion},
 such that there exists a distribution $\mathcal{R}$ on $\R^{t\times n}$ satisfying, for any given $n \times d$ matrix $A$, 
\[
\Pr_{R\sim\mathcal{R}}\left\{\norm{A}_p \leq \norm{RA}_q \leq D_{p,q}\norm{A}_p\right\} \geq 1-\exp(-ct)?
\]
Here $c > 0$ is an absolute constant. We can assume, w.l.o.g., that $n = d$ because we can first apply a so-called {\it subspace embedding} matrix 
(see, e.g., \cite{w14} for a survey)
to the left or to the right of $A$ to preserve each of its
singular values up to a constant factor - we refer the reader to \cite[Appendix C]{lnw14} for this standard
argument. We shall show that $D_{p,q}\gtrsim \hat{D}_{p,q}$, where 
\begin{equation}\label{eqn:distortion_ans}
\hat D_{p,q} = \begin{cases}
				n^{\frac1p-\frac12}/t^{\frac1q-\frac12}, & 1\leq p \leq q \leq 2;\\
				n^{\frac 1p-\frac 12}, & 1\leq p\leq 2 \leq q;\\
				\max\{(n/t)^{\frac12-\frac1p}, t^{\frac1p-\frac1q}\}, & 2\leq p\leq q;\\
				n^{\frac 12-\frac 1p}, & 1\leq q\leq 2 \leq p;\\
				n^{\frac12-\frac1p}/t^{\frac12-\frac1q}, & 2\leq q \leq p;\\
				\max\{(n/t)^{\frac1p-\frac12}, (t/\ln t)^{\frac1q-\frac1p}\}, & 1\leq q \leq p \leq 2,
			\end{cases}
\end{equation}
and the notation $f \gtrsim g$ means $f \geq g/C$ for some constant $C > 0$. The constant $C$ in the $\gtrsim$ notation above depends on $p$ and $q$ only. This distortion is asymptotically tight,
up to logarithmic factors, as we also construct a distribution $\mathcal{R}$ on $t$-by-$n$ matrices for which for any
$n \times d$ matrix $A$, 
\[
\Pr_{R\sim\mathcal{R}}\left\{\norm{A}_p \leq \norm{RA}_q \leq \tilde{D}_{p,q}\left(\log\frac{n}{t}\right)\norm{A}_p\right\} \geq 1-\exp\left(-ct\right),
\]
where $\tilde{D}_{p,q}$ differs from $D_{p,q}$ by a constant or a factor of $\log t$. Specifically,
\begin{equation}\label{eqn:distortion_ans_ub}
\tilde{D}_{p,q}\lesssim \begin{cases} 
			\max\{(n/t)^{\frac1p-\frac12},t^{\frac1q-\frac1p}\}, & 1\leq q\leq p\leq 2;\\
			\hat{D}_{p,q}, &\text{otherwise},
			\end{cases}
\end{equation}
where $\hat{D}_{p,q}$ is given in \eqref{eqn:distortion_ans}. 
Replacing $t$ with $t/(\ln(n/t))$, we arrive at a matching failure probability and distortion, while using a logarithmic factor more number of rows in $R$.
Namely, we construct a distribution $\mathcal{R}$ on matrices with $t\ln(n/t)$ rows for which 
\[
\Pr_{R\sim\mathcal{R}}\left\{\norm{A}_p \leq \norm{RA}_q \leq \tilde{D}_{p,q}\norm{A}_p\right\} \geq 1-\exp\left(-ct\right).
\]
We can also sketch $RA$ on the right by a subspace embedding matrix $S$ with $\Theta(t)$ rows, which yields
\[
\Pr_{R,S}\left\{\norm{A}_p \leq \norm{RAS^T}_q \leq \tilde{D}_{p,q}\norm{A}_p\right\} \geq 1-\exp\left(-ct\right).
\]
We show that this two-sided sketch is asympotically optimal for two-sided sketches in its product of number of rows of $R$ and number of columns of $S$, up to logarithmic factors. Formally, we next ask: what is the smallest value of $D_{p,q}$ for which there exists a distribution $\mathcal{G}_1$ on $\R^{r\times n}$ and a distribution $\mathcal{G}_2$ on $\R^{n\times s}$ satisfying
\[
\Pr_{\substack{R\sim\G_1, S\sim\G_2}}\left\{\norm{A}_p \leq \norm{RAS}_q \leq D_{p,q}\norm{A}_p\right\} \geq 1-\exp(-c\min\{r,s\})?
\]
Again we can assume, w.l.o.g, that $r = s$, because otherwise we can compose $R$ or $S$ with a subspace embedding to 
preserve all singular values up to a constant factor\footnote{That is, if $r\leq s$, we can choose a subspace embedding matrix $H$ of dimension $n\times \Theta(r)$ such that $\norm{RASH}_q = \Theta(\norm{RAS}_q)$ with probability $\geq 1-\exp(-s)$, and then pad $R$ with zero rows so that $R$ has the same number of rows as columns of $S$, increasing the number of rows of $R$ by at most a constant factor.}. Henceforth for the two-sided problem, we assume that $\G_1$ and $\G_2$ are distributions on $\R^{t\times n}$. We also prove a matching lower bound that $D_{p,q}\gtrsim \hat{D}_{p,q}$ except in the case when $1\leq q\leq p\leq 2$, where we instead obtain a matching lower bound up to logarithmic factors, namely, $D_{p,q}\gtrsim \max\{(n/t)^{\frac1p-\frac12}/\log^\frac{3}{2} t, (t/\ln t)^{\frac1q-\frac1p}\}$.

In the important case when $p=q=1$, our results show a space-approximation tradeoff for estimating the Schatten $1$-norm (or trace norm) in a data stream, answering a question posed by Blasiok et al.~\cite{BBCKY16}. This application crucially uses
that $R$ and $S$ are oblivious to $A$, i.e., they can be sampled and succinctly stored without looking at $A$. Specifically, when each entry of $A$ fits in a word of $O(\log n)$ bits, we can choose $R$ and $S$ to be Gaussian random matrices with entries truncated to $O(\log n)$ bits and with entries drawn from
a family of random variables with bounded independence (see Appendix~\ref{sec:streaming}). For time-efficiency purposes, $R$ and $S$ can also be chosen
to be Fast
Johnson Lindenstrauss Transforms or sparse embedding matrices \cite{clarkson2013low,mm13,nn13}, though they will have larger dimension, especially
to satisfy the exponential probability of failure in the problem statement
(and even with constant failure probability, the dimension will be slightly
larger; see \cite{w14} for a survey). 

Choosing $R$ and $S$ to be Gaussian matrices, our result provides 
a data stream algorithm using $(n^2/D^4)\polylog(n)$ bits of memory, and achieving approximation factor $D$ (taking $t = n/D^2$).
While $\|A\|_2$, the Frobenius norm of $A$, provides a $\sqrt{n}$-approximation to $\|A\|_1$ and can be approximated up to a constant factor in a data
stream using $O(1)$ words of space, if we want an algorithm achieving a better approximation factor then all that was known
was an algorithm requiring $O(n^2)$ words of space, namely, the trivial algorithm of storing $A$ exactly and achieving
$D = 1$. It was asked in
\cite{BBCKY16} if there is a smooth trade-off between the case when $D = 1$ and $D = \sqrt{n}$; our
$(n^2/D^4)\polylog(n)$ space algorithm provides the first such trade-off, and is optimal at the two extremes. Our results
are the first of their kind for large approximation factors $D \gg 1$ for estimating the Schatten-$p$ norms in a data
stream. 

Finally, while in our upper bounds $R$ and $S$ are chosen obliviously to $A$, for our lower bounds we would like to rule out
those $R$ and $S$ which are even allowed to depend on $A$. Clearly, if there is only a single matrix $A$, this question is
ill-posed as one can just choose $R$ and $S$ to have a single row and column so that $\|RAS\|_q = \|A\|_p$. Instead, we ask
the question analogous to the Johnson-Lindenstrauss transform (see e.g., \cite{LN16}): given $A^1, \ldots, A^{\poly(n)}$, can we construct an $R$ with
$t$ rows and an $S$ with $t$ columns for which $\|A^i\|_p \leq \|RA^iS\|_q \leq D_{p,q} \|A^i\|_p$ for all $i$? We show that our
lower bound on the trade-off between $D_{p,q}$ and $t$ given by \eqref{eqn:distortion_ans} continues to hold even in this setting.

\paragraph{Our Techniques.} We shall focus on the case $p=q$ in this description of our technical overview. For our upper bounds, a natural idea is to take $R$ to be a (normalized) Gaussian random matrix, and the analysis of the quantity $\|RA\|_p$, when $p\geq 2$, follows fairly directly from the so-called non-commutative Khintchine inequality as follows. 

\begin{lemma}[Non-commutative Khintchine Inequality \cite{LP86}] Suppose that $C_1,\dots,C_n$ are (deterministic) matrices of the same dimension and $g_1,\dots,g_n$ are independent $N(0,1)$ variables. It holds that
\[
\E_{g_1,\dots,g_n} \norm{\sum_i g_i C_i}_p \simeq \max\left\{ \norm{\left(\sum_i C_i C_i^T\right)^{\frac12}}_p, \norm{\left(\sum_i C_i^T C_i\right)^{\frac12}}_p \right\},\quad p\geq 2.
\]
\end{lemma}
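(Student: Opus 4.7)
The plan is to prove the lower and upper inequalities separately. Set $X = \sum_i g_i C_i$ and observe that by independence of the $g_i$'s, $\E XX^T = \sum_i C_i C_i^T$ and $\E X^T X = \sum_i C_i^T C_i$. For the lower bound, the scalar-valued function $Y \mapsto \tr(Y^{p/2})$ is convex on the cone of positive semidefinite matrices whenever $p \geq 2$, since $t \mapsto t^{p/2}$ is convex on $[0,\infty)$ and trace-composition preserves convexity on the PSD cone. Applying Jensen's inequality to the matrix-valued random variable $XX^T$ yields
\[
\E \norm{X}_p^p = \E\tr\bigl((XX^T)^{p/2}\bigr) \geq \tr\bigl((\E XX^T)^{p/2}\bigr) = \norm{(\textstyle\sum_i C_i C_i^T)^{1/2}}_p^p,
\]
and the analogue with $X^T X$ follows by symmetry. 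To recover $\E\norm{X}_p$ from $(\E\norm{X}_p^p)^{1/p}$, I would use that $(g_1,\dots,g_n) \mapsto \norm{\sum_i g_i C_i}_p$ is Lipschitz in the Euclidean metric (with constant at most the operator norm of one of the two matrices above), so Gaussian concentration of measure makes its $L^1$ and $L^p$ moments comparable up to a constant depending only on $p$.

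For the upper bound, the cleanest route is the moment method, treating $p = 2k$ an even integer first. Expanding gives
\[
\E\tr\bigl((XX^T)^k\bigr) = \sum_{i_1,\dots,i_{2k}} \E[g_{i_1}\cdots g_{i_{2k}}]\,\tr(C_{i_1}C_{i_2}^T\cdots C_{i_{2k-1}}C_{i_{2k}}^T),
\]
and Wick's (Isserlis') theorem splits each Gaussian moment into a sum over perfect matchings $\pi$ of $\{1,\dots,2k\}$, each contributing the delta product $\prod_{\{a,b\}\in\pi}\delta_{i_a,i_b}$. For each $\pi$ the collapsed sum of traces can be dominated, via repeated application of the cyclic invariance of the trace together with the non-commutative Hölder and Cauchy--Schwarz inequalities for Schatten classes, by products of traces of powers of $\sum_i C_i C_i^T$ and $\sum_i C_i^T C_i$ whose total weight is $k$. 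Summing over the $(2k-1)!!$ pairings absorbs into a constant $K_p$ depending only on $p$ and gives the desired bound for even integer $p$.

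For non-integer $p \geq 2$, I would extend by complex interpolation: the Schatten classes form an interpolation scale and the constants obtained at even integer values grow only as a function of $p$, so Stein interpolation yields the bound uniformly for all $p \geq 2$. The same Gaussian concentration argument as in the lower bound then converts $(\E \norm{X}_p^p)^{1/p}$ back into $\E\norm{X}_p$ with only a $p$-dependent loss, closing the proof.

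The hard part will be the combinatorial organization of the pairings in the upper bound: a generic pairing produces a trace of a long alternating product of $C_i$ and $C_j^T$ factors that need not obviously reduce to the two canonical expressions on the right-hand side, and exhibiting the right sequence of Cauchy--Schwarz contractions (or equivalently the right ``planar'' domination) to collapse it is the technical core of the argument. Lust-Piquard's original proof sidesteps this combinatorial explosion via a duality argument in $S_{p'}$ together with interpolation, which is a cleaner but less elementary alternative to the moment method sketched above.
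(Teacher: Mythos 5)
The paper does not prove this lemma at all: it is quoted as a black box and attributed to Lust-Piquard~\cite{LP86} (for $p \geq 2$ the sharp two-sided form is usually credited to Lust-Piquard and Pisier). So there is no in-paper argument to compare yours against; what follows is an assessment of your sketch on its own terms.

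Your lower bound is essentially correct. Convexity of $Y \mapsto \tr(Y^{p/2})$ on the PSD cone (a standard consequence of convexity of $t\mapsto t^{p/2}$ and the Peierls/Davis trace inequality) plus Jensen applied to $XX^T$ and $X^TX$ does give $\E\|X\|_p^p \geq \max\{\|(\sum_i C_iC_i^T)^{1/2}\|_p^p, \|(\sum_i C_i^TC_i)^{1/2}\|_p^p\}$. Converting $(\E\|X\|_p^p)^{1/p}$ back to $\E\|X\|_p$ via Gaussian concentration is the right idea, but note the conversion you need here is the \emph{reverse} moment comparison $(\E\|X\|_p^p)^{1/p}\lesssim \E\|X\|_p$, which requires an argument that the Lipschitz constant of $g\mapsto\|\sum_i g_iC_i\|_p$ is dominated (times $\sqrt{p}$) by the mean; this is true but needs to be verified, and the Lipschitz constant is \emph{not} simply the operator norm of one of the two square-root matrices for finite $p$ — for $p=2$, for instance, it is the Frobenius norm of $(\sum_i C_iC_i^T)^{1/2}$.

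The real gap is in the upper bound. You correctly identify the moment method (Wick expansion of $\E\tr((XX^T)^k)$ over pairings) as a viable route, and you correctly note that the hard part is dominating the contribution of each pairing by $\max\{\tr((\sum_i C_iC_i^T)^k),\tr((\sum_i C_i^TC_i)^k)\}$. But you leave this domination entirely unproved. It is not a routine iteration of cyclicity plus Cauchy--Schwarz: a generic pairing yields a trace of a long alternating word in the $C_i$ and $C_j^T$ whose collapse to the rainbow term requires a carefully chosen sequence of contractions (Buchholz's proof does exactly this, after first symmetrizing to the self-adjoint case via the $2\times 2$ dilation $\tilde C_i = \begin{pmatrix}0 & C_i\\ C_i^T & 0\end{pmatrix}$). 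Also, your phrasing ``products of traces of powers $\ldots$ whose total weight is $k$'' is imprecise — each pairing contributes a \emph{single} trace, which must then be dominated by the extreme one via a trace-H\"older argument; a genuine product of traces would be dimensionally inconsistent. Finally, ``Stein interpolation'' is not the tool for passing from even integers to general $p\geq 2$: the usual route is either to prove the $p=\infty$ endpoint separately and use complex interpolation against $p=2$, or to exploit that the even-integer constants grow like $\sqrt{p}$ and compare $\|X\|_p$ with $\|X\|_{2\lceil p/2\rceil}$ directly — but that comparison also changes the right-hand side, so it needs care. As written, the upper bound is a plausible program with the core estimate (the pairing domination) explicitly missing; acknowledging that it is ``the technical core'' does not discharge the obligation to prove it.
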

In order to estimate $\|RA\|_p$, we can write
\[
RA = \sum_{i,j} r_{ij} (e_ie_j^T A) =: \sum_{i,j} r_{ij} C_{ij}
\]
and it is straightforward to compute that
\[
\sum_{i,j} C_{ij}C_{ij}^T 
= \tr(AA^T)I_t = \|A\|_F^2 I_t,\quad 
\sum_{i,j} C_{ij}^T C_{ij} = 
t\cdot A^T A.
\]
It follows from the non-commutative Khintchine inequality that (recall that $R$ is a normalized Gaussian matrix with $N(0,1/t)$ entries)
\[
\E\norm{RA}_p \simeq \max\left\{ t^{\frac12-\frac1p}\|A\|_F, \|A\|_p \right\},\quad p\geq 2.
\]
Using a concentration inequality for Lipschitz functions
on Gaussian space, one can show that $\norm{RA}_p$ is concentrated around
$\E\norm{RA}_p$, and using standard the standard relationship between
$\|A\|_F$ and $\|A\|_p$ then completes the argument. 

When $p<2$, the non-commutative Khintchine inequality gives a much less tractable characterization, so we need to analyze $\|RA\|_p$ in a different manner, which is potentially of independent interest. Our analysis also works for non-Gaussian matrices $R$ whenever $R$ satisfies certain properties, which, for instance, are satisfied by a Fast Johnson-Lindenstrauss Transform. 

\vskip 0.5ex

\noindent\textbf{\textsf{Upper bound.}} We give an overview of our upper bound now, focusing on the one-sided case, since the two-sided case follows by simply right-multiplying by a generic subspace embedding $S$. Here we focus on the case in which $R$ is an $r\times n$ Gaussian matrix, where $r = t \cdot \polylog(n)$. By rotational invariance of Gaussian matrices, and for the purposes of computing $\|AR\|_p$, we can assume that $A$ is diagonal. Let $A_1$ be the restriction of $A$ to its top $\Theta(t \log n)$ singular values. Since $R$ is a Gaussian matrix with at least $t\log n$ rows, it is well-known that $R$ is also a subspace embedding on $A_1$ (see, e.g., \cite[Corollary 5.35]{V12}), namely, $\sigma_i(RA_1)\simeq \sigma_i(A_1)$ for all $i$, and thus $\|RA_1\|_p \simeq \|A_1\|_p = \Omega(\|A\|_p)$ when $\|A_1\|_p = \Omega(\|A\|_p)$. 

If it does not hold that $\|A_1\|_p = \Omega(\|A\|_p)$, then the singular values of $A$ are ``heavy-tailed'', and we show how to find a $\sigma_i(A)$ with $i < \Theta(t\log n)$ for which $\sigma_i^2(A)$ is relatively small compared to $\sigma_i^2(A)+\sigma_{i+1}^2(A)+\cdots+\sigma_n^2(A)$. More specifically, let $A_2$ be the restriction of $A$ to $\sigma_i(A),\dots,\sigma_n(A)$. Then we have that $\|A_2\|_{op}\lesssim \|A_2\|_F / \sqrt{t}$. Since for a Gaussian matrix $R$ it holds that $\|RA_2\|_{op}\lesssim \|A_2\|_{op} + \|A\|_F/\sqrt{r}$ (see Proposition~\ref{prop:gaussian_sketch_op}), we thus have that $\|RA_2\|_{op}\lesssim \|A_2\|_F/\sqrt{t}$. On the other hand, $\|RA_2\|_{F}\simeq \|A_2\|_F$. This implies there exist $\Omega(t)$ singular values of $RA_2$ that are $\Omega(\|A_2\|_F/\sqrt{t})$, which yields that $\|RA_2\|_p\gtrsim \|A_2\|_p = \Omega(\|A\|_p)$.
Therefore we have established the lower bound that $\|RA\|_p\geq \max\{\|RA_1\|_p,\|RA_2\|_p\}$ in terms of $\|A\|_p$.

To upper bound $\|RA\|_p$ in terms of $\|A\|_p$, note that $\|RA\|_p\leq \|RA_1\|_p + \|RA_2\|_p$ by the triangle inequality, where $A_1,A_2$ are as above. Again it follows from the subspace embedding property of $R$ that $\|RA_1\|_p\lesssim \|A_1\|_p \leq \|A\|_p$. Regarding $\|RA_2\|_p$, we relate its Schatten-$p$ norm to its Frobenius norm and use the fact that $\|RA_2\|_F\simeq \|A_2\|_F$. This gives an upper bound of $\|RA_2\|_p$ in terms of $\|A_2\|_p$, and using that $\|A_2\|_p \leq \|A\|_p$, it gives an upper bound in terms of $\|A\|_p$. This is sufficient to obtain an overall upper bound on $\|RA\|_p$.

\vskip 0.5ex
\noindent\textbf{\textsf{Lower bound.}} Now we give an overview of our lower bounds for some specific cases. First consider one-sided sketches. We choose our hard distribution as follows: we choose an $n\times (10t)$ Gaussian matrix $G$ padded with $0$s to become an $n\times n$ matrix. For a sketch matrix $R$ containing $t$ rows, by rotational invariance of Gaussian matrices, $\|RG\|_p$ is identically distributed to $\|\Sigma_R G'\|_p$, where $\Sigma_R$ is the $t\times t$ diagonal matrix consisting of the singular values of $R$, and where $G'$ is a $t\times (10t)$ Gaussian matrix. It is a classical result that all singular values of $G'$ are $\Theta(\sqrt{t})$ and thus $\|RG\|_p\simeq \sqrt{t}\|R\|_p$. This implies that 
\begin{equation}\label{eqn:ourtech1}
\sqrt{n}t^{\frac 12-\frac 1p}\lesssim \sqrt{t}\|R\|_p \lesssim D_{p,p}\sqrt{n}t^{\frac 12-\frac 1p},
\end{equation}
since all non-zero singular values of $G$ are $\Theta(\sqrt{n})$. On the other hand, applying $R$ to the $n\times n$ identity matrix gives that
\begin{equation}\label{eqn:ourtech2}
n^{\frac 1p}\leq \|R\|_p\leq D_{p,p}n^{\frac 1p}.
\end{equation}
Combining \eqref{eqn:ourtech1} and \eqref{eqn:ourtech2} gives that $D_{p,p}\geq \max\{(n/t)^{1/2-1/p},(n/t)^{1/p-1/2}\}$.

For the two-sided sketch, we change the hard distribution to (i) $n\times n$ Gaussian random matrix $F$ and (ii) the distribution of $GH^T$, where $G$ and $H$ are $n\times \Theta(t)$ Gaussian random matrices. The proof then relies on the analysis for $\norm{RFS^T}_p$ and $\norm{RGH^TS^T}_p$. When $p\geq 2$, non-commutative Khintchine inequality gives immediately that 
\begin{equation}\label{eqn:ourtech6}
\norm{RGH^TS^T}_p\simeq \sqrt{t}\norm{RFS^T}_p\simeq \sqrt{t}\max\{\norm{R}_p\norm{S}_{op},\norm{R}_{op}\norm{S}_p\},\quad p\geq 2.
\end{equation}
When $p<2$, a different approach is followed. We divide the singular values of $R$ and $S$ into bands, where each band contains singular values within a factor of 2 from each other. We shall consider the first $\Theta(\log t)$ bands only because the remaining singular values are $1/\poly(t)$ and negligible. Now, if all singular values of $R'$ and $S'$ are within a factor of $2$ from each other, then $\norm{R'F(S')^T}_p\simeq \norm{R'}_{op}\norm{S'}_{op}\norm{F}_p$ and $\norm{R'GH^T(S')^T}_p\simeq \norm{R'}_{op}\norm{S'}_{op}\norm{GH^T}_p$. It is not difficult to see that
\begin{equation}\label{eqn:ourtech3}
\|GH^T\|_p\simeq\sqrt{t}\|F\|_p
\end{equation}
Since $R'$ and $S'$ consist of one of the $\Theta(\log t)$ bands of $R$ and $S$, respectively, it follows that 
\begin{equation}\label{eqn:ourtech4}
\norm{RGH^TS^T}_p\simeq \sqrt{t}/\polylog(t) \cdot \norm{RFS^T}_p,\quad p < 2.
\end{equation}
A lower bound of $D_{p,p}$ then follows from combining \eqref{eqn:ourtech3}, \eqref{eqn:ourtech6} (or \eqref{eqn:ourtech4}) with
\[
\norm{F}_p \leq \norm{RFS^T}_p \leq D_{p,p}\norm{F}_p,\quad\text{and}\quad 
\norm{GH^T}_p \leq \norm{RGH^TS^T}_p \leq D_{p,p}\norm{GH^T}_p.
\]
To strengthen the lower bound for the sketches that even depend on the input matrix, we follow the approach in \cite{LN16}. We first work with random hard instances, and then sample input matrices $A^1,\dots,A^{\poly(n)}$ from the hard distribution, and apply a net argument on sketching matrices $R$ and $S$ to obtain a deterministic statement, which states that for any fixed $R$ and $S$ such that the distortion guarantee is satisfied with all samples $A^1,\dots,A^{\poly(n)}$, the distortion lower bound remains to hold.

\section{Preliminaries}

\paragraph{Notations.} Throughout the paper, we use $f\lesssim g$ to denote $f\leq Cg$ for some constant $C$, $f\gtrsim g$ to denote $f\geq Cg$ for some constant $C$ and $f\simeq g$ to denote $C_1 g\leq f \leq C_2 g$ for some constants $C_1$ and $C_2$.

\paragraph{Bands of Singular Values.} Given a matrix $A$, we split the singular values of $A$, $\sigma_1(A)\geq \sigma_2(A)\geq \cdots$, into bands such that the singular values in each band are within a factor of 2 from each other. Formally, define the $i$-th singular value band of $A$ as 
\[
\mathcal{B}_i(A) = \left\{k: \frac{\|A\|_{op}}{2^{i+1}} < \sigma_k(A) \leq \frac{\|A\|_{op}}{2^{i}}\right\}, \quad i\geq 0,
\]
and let $N_i(A) = \left|\mathcal{B}_i(A)\right|$, the cardinality of the $i$-th band.

\paragraph{Extreme Singular Values of Gaussian Matrices.} We shall repeatedly use the following results on Gaussian matrices.
\begin{proposition}[{\cite[Corollary 5.35]{V12}}]\label{prop:gaussian_matrix} Let $G$ be an $r\times n$ ($r<n$) Gaussian random matrix of i.i.d.\ entries $N(0,1)$. With probability at least $1-2\exp(-u^2/2)$, it holds that
\[
\sqrt{n}-\sqrt{r}-u \leq s_{\min}(G) \leq s_{\max}(G) \leq \sqrt{n}+\sqrt{r}+u.
\]
\end{proposition}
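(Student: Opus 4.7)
The plan is to apply the standard two-step recipe: combine Gordon's comparison inequality (to bound the expected extreme singular values) with the Gaussian concentration inequality for Lipschitz functions (to control the deviation from those expectations). First I would verify that both $G\mapsto s_{\max}(G)$ and $G\mapsto s_{\min}(G)$ are $1$-Lipschitz with respect to the Frobenius norm on $\R^{r\times n}$. Writing $s_{\max}(G)=\sup_{u\in S^{r-1},v\in S^{n-1}} u^T G v$ as a supremum of linear functionals whose coefficient vectors have unit Euclidean norm, the triangle inequality gives $|s_{\max}(G)-s_{\max}(G')|\leq \|G-G'\|_F$. For the wide case $r<n$ one has $s_{\min}(G)=\inf_{u\in S^{r-1}}\|u^T G\|_2$, and Minkowski combined with the reverse triangle inequality yields the same $1$-Lipschitz estimate.

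Second, I would establish $\sqrt{n}-\sqrt{r}\leq \E s_{\min}(G)\leq \E s_{\max}(G)\leq \sqrt{n}+\sqrt{r}$ via Gordon's inequality applied to the centered Gaussian processes $X_{u,v}=u^T G v$ and $Y_{u,v}=g^T u+h^T v$ indexed by $(u,v)\in S^{r-1}\times S^{n-1}$, where $g\sim N(0,I_r)$ and $h\sim N(0,I_n)$ are independent. A direct increment computation shows $\E(X_{u,v}-X_{u',v'})^2\leq \E(Y_{u,v}-Y_{u',v'})^2$ with equality on the diagonal, giving
\[
\E s_{\max}(G)\leq \E\sup_{u,v}Y_{u,v}\leq \E\|g\|_2+\E\|h\|_2\leq \sqrt{r}+\sqrt{n},
\]
and the analogous min-max form of Gordon's inequality yields $\E s_{\min}(G)\geq \sqrt{n}-\sqrt{r}$.

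Third, I would apply Borell's Gaussian concentration inequality to the $1$-Lipschitz maps from Step 1, viewing the $rn$ entries of $G$ as coordinates of a standard Gaussian vector in $\R^{rn}$. This yields
\[
\Pr\!\left(s_{\max}(G)\geq \E s_{\max}(G)+u\right)\leq e^{-u^2/2},\qquad \Pr\!\left(s_{\min}(G)\leq \E s_{\min}(G)-u\right)\leq e^{-u^2/2}.
\]
Combining with the Step 2 expectation bounds via a union bound produces the stated two-sided event with probability at least $1-2e^{-u^2/2}$. The main obstacle is Step 2: the Lipschitz claim is immediate and the concentration step is a direct invocation of a standard theorem, but verifying the sign of the increment comparison in Gordon's inequality over the product of spheres, and then routing it through the correct min-max form to obtain the lower bound on $\E s_{\min}(G)$, requires careful bookkeeping with the geometry of $S^{r-1}\times S^{n-1}$.
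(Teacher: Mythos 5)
Your approach matches the one used in the cited source (Vershynin's lecture notes): prove the expectation bounds via Gordon's comparison inequality, then upgrade to high probability via Gaussian concentration for the $1$-Lipschitz maps $s_{\max}$ and $s_{\min}$. Steps 1 and 3 are fine, and your increment computation for Gordon's inequality on $S^{r-1}\times S^{n-1}$ is the right one to check.

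There is, however, a gap in Step 2 that is easy to overlook and that the word ``analogous'' papers over. For the upper bound you correctly use Jensen: $\E\|g\|_2 \leq \sqrt{\E\|g\|_2^2}=\sqrt{r}$ and likewise $\E\|h\|_2\leq\sqrt{n}$, which combine favorably in the sum $\E\|g\|_2+\E\|h\|_2\leq\sqrt{r}+\sqrt{n}$. But Gordon's min--max inequality gives $\E s_{\min}(G)\geq \E\|h\|_2-\E\|g\|_2$, a \emph{difference}, and here Jensen works against you: $\E\|h\|_2$ is strictly \emph{less} than $\sqrt{n}$, so $\E\|h\|_2-\E\|g\|_2\geq\sqrt{n}-\sqrt{r}$ does not follow from the Jensen estimates alone. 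You need the additional (true but nontrivial) monotonicity fact that the deficit $\sqrt{m}-\E\|z_m\|_2$, for $z_m\sim N(0,I_m)$, is nonincreasing in $m$; equivalently, that $m\mapsto \E\|z_m\|_2$ grows at least as fast as $\sqrt{m}$ does. One standard way to see this is via the identity $\E\|z_m\|_2\cdot\E\|z_{m+1}\|_2=m$ together with $\sqrt{m}-\E\|z_m\|_2 = \operatorname{Var}(\|z_m\|_2)/(\sqrt{m}+\E\|z_m\|_2)$, and then checking the resulting inequality; Vershynin and Davidson--Szarek each include a short lemma to this effect. Without such a lemma, the chain $\E s_{\min}(G)\geq\E\|h\|_2-\E\|g\|_2\geq\sqrt{n}-\sqrt{r}$ is unjustified, even though the rest of the argument is sound.
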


Combining \cite[Corollary 3.21]{LT91} and the concentration bound in Gauss space \cite[Proposition 5.34]{V12}, we also have
\begin{proposition}\label{prop:gaussian_sketch_op} Let $A$ be a deterministic $n\times n$ matrix and $G$ be an $r\times n$ ($r<n$) Gaussian random matrix of i.i.d.\ entries $N(0,1)$. Then 
\[
\|GA\|_{op} \leq K(\|A\|_{op}\sqrt r + \|A\|_F)
\]
with probability at least $1-\exp(-c\sqrt{K}r)$, where $c > 0$ is an absolute constant.
\end{proposition}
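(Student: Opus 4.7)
The plan is to combine a Chevet-type expectation bound (essentially Corollary~3.21 of \cite{LT91}) with Gaussian concentration of Lipschitz functions (Proposition~5.34 of \cite{V12}), exactly as hinted by the citations. First, I would establish
\[
\E \|GA\|_{op} \leq C_0\left(\sqrt{r}\,\|A\|_{op} + \|A\|_F\right)
\]
for some absolute constant $C_0>0$. I would obtain this by writing $\|GA\|_{op} = \sup_{u\in S^{r-1},\,v\in S^{n-1}} u^T G (Av)$ and viewing the right-hand side as the supremum of the centered Gaussian process $(u,v)\mapsto u^T G (Av)$. Chevet's inequality bounds its expectation by the sum of two ``width $\times$ radius'' contributions: $w(S^{r-1})\cdot \sup_v \|Av\|_2 \lesssim \sqrt{r}\,\|A\|_{op}$ and $w(AS^{n-1})\cdot 1 \leq \|A\|_F$, where the last step uses $w(AS^{n-1}) = \E\|A^T g\|_2 \leq (\E\|A^T g\|_2^2)^{1/2} = \|A\|_F$ for a standard Gaussian vector $g\in\R^r$.

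Next, I would handle concentration via a routine Lipschitz estimate. For any $G,G'\in \R^{r\times n}$,
\[
\bigl|\|GA\|_{op} - \|G'A\|_{op}\bigr| \leq \|(G-G')A\|_{op} \leq \|G-G'\|_{op}\,\|A\|_{op} \leq \|G-G'\|_F \,\|A\|_{op},
\]
so the map $G\mapsto \|GA\|_{op}$ is $\|A\|_{op}$-Lipschitz when $\R^{r\times n}$ is endowed with the Frobenius norm. Gaussian concentration on the $rn$ i.i.d.\ $N(0,1)$ entries of $G$ then gives, for every $u>0$,
\[
\Pr\bigl(\|GA\|_{op} > \E\|GA\|_{op} + u\bigr) \leq \exp\!\left(-\frac{u^2}{2\|A\|_{op}^2}\right).
\]

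Finally, assuming $K \geq 2C_0$ (smaller $K$ is trivially absorbed into $c$), I would pick $u = (K - C_0)\bigl(\sqrt{r}\,\|A\|_{op} + \|A\|_F\bigr)$. Then $\E\|GA\|_{op} + u \leq K\bigl(\sqrt{r}\,\|A\|_{op} + \|A\|_F\bigr)$, and since $u \geq (K/2)\sqrt{r}\,\|A\|_{op}$, the concentration bound yields a failure probability of at most $\exp(-K^2 r/8)$. This is even stronger than the claimed $\exp(-c\sqrt{K}\,r)$, so a suitable absolute constant $c$ works. The only mild obstacle is keeping track of the constant $C_0$ produced by the Gaussian-width step, so that a single $c>0$ works uniformly in $K$; no deeper difficulty arises.
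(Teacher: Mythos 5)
Your argument matches the paper's one-line proof exactly: combine Chevet's inequality ([LT91, Cor.~3.21]) to get $\E\|GA\|_{op}\lesssim\sqrt{r}\,\|A\|_{op}+\|A\|_F$, then apply Gaussian concentration to the $\|A\|_{op}$-Lipschitz map $G\mapsto\|GA\|_{op}$, and the tail $\exp(-K^2r/8)$ you obtain is in fact stronger than the stated $\exp(-c\sqrt{K}r)$. One small inaccuracy: the remark that ``smaller $K$ is trivially absorbed into $c$'' is not literally correct, since for $K$ below the Chevet constant and $r$ large the failure probability is bounded away from $0$ while $\exp(-c\sqrt{K}r)\to 0$; the proposition should be read as holding for all $K$ above some absolute constant $K_0$, which is exactly how it is invoked elsewhere in the paper.
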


\paragraph{Nets on Matrices.} The following fact is used in \cite{LN16}, which shows the lower bound for the target dimension of linear space embedding.
\begin{proposition}[{\cite[Lemma 2]{LN16}}] There exists a net $\mathcal{R}\subset \bigcup_{t=1}^{t_0} \R^{t\times n}$ of size $\exp(O(t_0n\ln(Dn/\eta))$ such that for any $R\in \R^{t\times n}$ ($1\leq t\leq t_0$) with column norms in $[1,D]$, we can find $R'\in \mathcal{R}$ such that $\|R-R'\|_{op}\leq \eta$.
\end{proposition}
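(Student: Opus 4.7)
The plan is a standard volume/covering argument, done one dimension at a time and then unioned over $t$. Fix $t\in\{1,\dots,t_0\}$. Any $R\in\R^{t\times n}$ with column norms in $[1,D]$ has each column $R_j$ lying in the Euclidean ball $B_D^t\subset\R^t$ of radius $D$. By the standard volume bound, $B_D^t$ admits an $\epsilon$-net $N_t$ in the Euclidean norm with $\abs{N_t}\leq (1+2D/\epsilon)^t\leq (3D/\epsilon)^t$ (for $\epsilon\leq D$). Taking the $n$-fold Cartesian product $\mathcal{R}_t := (N_t)^n$, viewed as matrices in $\R^{t\times n}$ with one chosen net point per column, gives a set of size at most $(3D/\epsilon)^{tn}$ such that any eligible $R$ has some $R'\in\mathcal{R}_t$ with $\norm{R_j-R'_j}_2\leq\epsilon$ for every $j$.

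To translate this column-wise approximation into an operator-norm bound, I would use the crude inequality
\[
\norm{R-R'}_{op}\leq\norm{R-R'}_F=\Bigl(\sum_{j=1}^n\norm{R_j-R'_j}_2^2\Bigr)^{1/2}\leq\epsilon\sqrt{n}.
\]
Choosing $\epsilon=\eta/\sqrt{n}$ therefore yields $\norm{R-R'}_{op}\leq\eta$, while the size bound becomes
\[
\abs{\mathcal{R}_t}\leq\bigl(3D\sqrt{n}/\eta\bigr)^{tn}=\exp\bigl(tn\cdot O(\ln(Dn/\eta))\bigr).
\]

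Finally, set $\mathcal{R}=\bigcup_{t=1}^{t_0}\mathcal{R}_t$. Since the per-$t$ bound is increasing in $t$ and we are summing only $t_0$ of them,
\[
\abs{\mathcal{R}}\leq t_0\cdot\bigl(3D\sqrt{n}/\eta\bigr)^{t_0 n}=\exp\bigl(O(t_0 n\ln(Dn/\eta))\bigr),
\]
absorbing the factor $t_0$ into the $O(\cdot)$. By construction, for any $R\in\R^{t\times n}$ with $1\leq t\leq t_0$ and column norms in $[1,D]$, the approximant $R'\in\mathcal{R}_t\subseteq\mathcal{R}$ satisfies $\norm{R-R'}_{op}\leq\eta$, as required.

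This is essentially a textbook covering argument, so there is no real obstacle; the only minor looseness is the use of $\norm{\cdot}_{op}\leq\norm{\cdot}_F$, which costs a factor of $\sqrt{\min(t,n)}$ inside the logarithm but is harmless for the stated $\exp(O(t_0 n\ln(Dn/\eta)))$ bound. One could tighten the construction by covering the unit sphere in $\R^t$ and using duality instead, but it would only change the absorbed constants.
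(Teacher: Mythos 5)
Your proof is correct. The paper imports this proposition as a black box from \cite{LN16} (Larsen--Nelson) without reproving it, so there is no internal proof to compare against; your argument is the standard column-wise covering construction (an $(\eta/\sqrt{n})$-net of the radius-$D$ ball in each of the $n$ columns, combined via $\|\cdot\|_{op}\leq\|\cdot\|_F$, then unioned over $t\leq t_0$), and it lands on the claimed $\exp(O(t_0 n\ln(Dn/\eta)))$ bound with the $\sqrt{n}$ and $t_0$ factors harmlessly absorbed into the logarithm.
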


\section{Lower bounds For One-sided Sketches}\label{sec:lb_one_sided}
\begin{theorem}
Let $1\leq p\leq 2$ and $q\geq 1$. There exist a set $T\subset \R^{n\times n}$ with $|T|=O(n)$ and an absolute constant $c\in (0,1)$ such that, if it holds for some matrix $R\in \R^{t\times n}$ with $t\leq cn$ and for all $A\in T$ that
\begin{equation}\label{eqn:distortion}
\norm{A}_p \leq \norm{RA}_q \leq D_{p,q}\norm{A}_p 
\end{equation}
it must hold that
\[
D_{p,q} \gtrsim \begin{cases}
				n^{\frac1p-\frac12}/t^{\frac1q-\frac12}, & q \leq 2\\
				n^{\frac 1p-\frac 12}, & q \geq 2
			\end{cases}
\]
\end{theorem}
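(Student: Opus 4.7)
The plan is to take $T = \{I_n\} \cup \{e_ie_i^T : i \in [n]\}$, which has $|T| = n+1 = O(n)$, and to derive the lower bound by playing two simple bounds on $R$ against each other: a lower bound on $\|R\|_q$ coming from $I_n$, and an upper bound on $\|R\|_F$ (and hence on $\|R\|_q$) coming from the rank-one test matrices, strengthened by using the rank constraint $\operatorname{rank}(R) \leq t$ when $q<2$.

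First, I would apply the hypothesis \eqref{eqn:distortion} to $A = I_n$. Since $\|I_n\|_p = n^{1/p}$ and $R I_n = R$, this immediately gives
\[
n^{1/p} \leq \norm{R}_q \leq D_{p,q}\, n^{1/p}.
\]
Next, I would apply \eqref{eqn:distortion} to each $A = e_ie_i^T$, observing that $\|e_ie_i^T\|_p = 1$ and that $R e_i e_i^T$ is a rank-one matrix with unique nonzero singular value $\|Re_i\|_2$, so $\|Re_ie_i^T\|_q = \|Re_i\|_2$. Hence $1 \leq \|Re_i\|_2 \leq D_{p,q}$ for every $i$, and summing yields
\[
\norm{R}_F^2 \;=\; \sum_{i=1}^n \norm{Re_i}_2^2 \;\leq\; n\, D_{p,q}^2.
\]

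Finally, I would combine these via a comparison between $\|R\|_q$ and $\|R\|_F$. Because $R \in \R^{t\times n}$ has at most $t$ nonzero singular values, the standard $\ell_q$-vs-$\ell_2$ monotonicity on the singular value vector gives $\|R\|_q \leq \|R\|_F$ when $q \geq 2$, and $\|R\|_q \leq t^{1/q - 1/2}\|R\|_F$ when $q \leq 2$. Chaining with $\|R\|_F \leq \sqrt{n}\, D_{p,q}$ and with the lower bound $n^{1/p} \leq \|R\|_q$ yields
\[
n^{1/p} \;\leq\; \begin{cases} t^{1/q - 1/2}\sqrt{n}\, D_{p,q}, & q \leq 2,\\ \sqrt{n}\, D_{p,q}, & q \geq 2,\end{cases}
\]
which is exactly the claimed lower bound on $D_{p,q}$ after rearranging.

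The proof is short and there is no substantial obstacle; the only place to be careful is step three, where one must explicitly invoke $\operatorname{rank}(R) \leq t$ to obtain the factor $t^{1/q - 1/2}$ in the $q<2$ regime (this is where the hypothesis $t \leq cn$ really enters, since otherwise $t^{1/q - 1/2}\sqrt{n}$ could exceed the trivial bound). The conceptual content is simply that a single matrix $I_n$ forces $\|R\|_q$ to be large while $n$ rank-one probes control $\|R\|_F$ from above, and the gap between $\|R\|_q$ and $\|R\|_F$ on a rank-$t$ matrix is exactly what produces the $t^{1/q - 1/2}$ denominator.
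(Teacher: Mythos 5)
Your proof is correct and follows essentially the same approach as the paper's: same test set $T=\{I_n\}\cup\{e_ie_i^T\}$, same use of $\|R\|_q$-vs-$\|R\|_F$ comparison via $\operatorname{rank}(R)\leq t$. The only cosmetic difference is that you upper-bound every column norm $\|Re_i\|_2\leq D_{p,q}$ and sum to get $\|R\|_F\leq\sqrt{n}\,D_{p,q}$, whereas the paper lower-bounds $\|R\|_F$ from the identity test and then applies pigeonhole to find a single large column; the two orderings of the same inequalities yield the identical bound.
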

\begin{proof}
First we consider the case $1\leq p\leq q\leq 2$. We take $T = \{I_n,e_1e_1^T,\dots,e_ne_n^T\}$, where $I_n$ is the identity matrix and $\{e_i\}$ the canonical basis in $\R^n$. 

We know from letting $A=I$ in \eqref{eqn:distortion} that
\[
\|R\|_q \geq n^{1/p}.
\]
On the other hand,
\[
\|R\|_q \leq t^{\frac1q-\frac12}\|R\|_F.
\]
hence
\[
\|R\|_F^2 \geq \frac{n^{\frac2p}}{t^{\frac2q-1}}.
\]
Hence there exists $i$ such that the $i$-th column of $R$, denoted by $R_i$, satisfies that
\[
\|R_i\|_2^2 \geq \frac{n^{\frac2p-1}}{t^{\frac2q-1}}.
\]
Letting $A=e_ie_i^T$ in \eqref{eqn:distortion},
\[
D_{p,q} = D_{p,q}\|e_ie_i^T\|_p\geq \|Re_ie_i^T\|_q = \|R_i\|_2 \geq \frac{n^{\frac1p-\frac{1}{2}}}{t^{\frac1q-\frac{1}{2}}}.
\]

A similar argument works for $1\leq p<2<q$. We take the same $T$ as above. And now $n^{1/p}\leq \|R\|_q\leq \|R\|_F$ so there exists $i$ such that $\|R_i\|_2\geq n^{2/p-1}$. Letting $A=e_ie_i^T$ yields $D_{p,q}\geq n^{\frac{1}{p}-\frac12}$.
\end{proof}

\begin{theorem}\label{thm:p>2>q}
Let $p\geq 2$ and $q\geq 1$. There exist a set $T\subset \R^{n\times n}$ with $|T|=O(n)$ and an absolute constant $c\in (0,1)$ such that, if it holds for some matrix $R\in \R^{t\times n}$ with $t\leq cn$ and for all $A\in T$ that
\begin{equation}
\norm{A}_p \leq \norm{RA}_q \leq D_{p,q}\norm{A}_p \tag{\ref{eqn:distortion}}
\end{equation}
it must hold that
\[
D_{p,q} \gtrsim \begin{cases}
				n^{\frac12-\frac1p}, &q\leq 2;\\
				n^{\frac12-\frac1p}/t^{\frac12-\frac1q}, &q\geq 2.
			  \end{cases}	
\]
\end{theorem}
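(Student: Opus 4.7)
The plan is to reuse the same small test set $T = \{I_n, e_1 e_1^T, \dots, e_n e_n^T\}$ as in the previous theorem, but to exploit the Schatten-norm inequalities in the \emph{opposite direction} to what was done for $p \leq 2$. The key observation is that when $p \geq 2$ the quantity $\|I_n\|_p = n^{1/p}$ is relatively small compared to $\|I_n\|_F = \sqrt{n}$, so the identity matrix becomes the ``hard'' instance that forces $\|R\|_q$ to be large (in comparison to $n^{1/p}$), and the rank-one matrices $e_i e_i^T$ are used to give a Frobenius-norm \emph{lower} bound on $R$.

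Concretely, I would proceed in three short steps. First, plug in $A = e_i e_i^T$: since $R e_i e_i^T = R_i e_i^T$ is rank one with singular value $\|R_i\|_2$, the distortion hypothesis yields $\|R_i\|_2 \geq \|e_i e_i^T\|_p = 1$ for every column $R_i$ of $R$. Summing over $i$ gives
\[
\|R\|_F^2 = \sum_{i=1}^n \|R_i\|_2^2 \geq n.
\]
Second, plug in $A = I_n$: the distortion hypothesis gives $\|R\|_q \leq D_{p,q}\, n^{1/p}$. Third, relate $\|R\|_q$ to $\|R\|_F$ using that $R$ has rank at most $t$. For $q \leq 2$ we simply have the monotonicity $\|R\|_q \geq \|R\|_2 = \|R\|_F$, and combining with the two previous inequalities yields $D_{p,q}\, n^{1/p} \geq \sqrt{n}$, i.e., $D_{p,q} \gtrsim n^{1/2 - 1/p}$. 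For $q \geq 2$, the power-mean inequality applied to the (at most $t$) singular values of $R$ gives $\|R\|_q \geq t^{1/q - 1/2} \|R\|_F$, noting that $1/q - 1/2 \leq 0$ so the bound is only weakened by replacing the true rank with $t$. Chaining the three inequalities produces
\[
D_{p,q}\, n^{1/p} \geq \|R\|_q \geq t^{1/q - 1/2}\sqrt{n},
\]
which rearranges to $D_{p,q} \gtrsim n^{1/2 - 1/p}/t^{1/2 - 1/q}$.

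Honestly, there is no real obstacle here: the test set $T$ has cardinality $n+1 = O(n)$ as required, and the constant $c \in (0,1)$ governing $t \leq cn$ is irrelevant to the argument (it is only needed elsewhere to keep $t$ strictly smaller than $n$ so that the ``low rank'' feature of $R$ is nontrivial). The only subtle point worth double-checking in a full writeup is the direction of the power-mean inequality: since $1/q - 1/2 \leq 0$ for $q \geq 2$, the function $r \mapsto r^{1/q - 1/2}$ is nonincreasing in $r$, so replacing the actual rank of $R$ by the upper bound $t$ indeed yields a valid (weaker) lower bound on $\|R\|_q$ in terms of $\|R\|_F$, which is what we need.
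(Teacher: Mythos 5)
Your proof is correct and is essentially identical to the paper's: both use the test set $T=\{I_n,e_1e_1^T,\dots,e_ne_n^T\}$ to deduce $\|R\|_F\geq\sqrt n$ from the rank-one matrices, $\|R\|_q\leq D_{p,q}n^{1/p}$ from the identity, and then compare $\|R\|_q$ to $\|R\|_F$ via the rank-$t$ norm inequality. The only cosmetic difference is that the paper states this last comparison as $\|R\|_F\leq\|R\|_q$ (for $q\leq 2$) or $\|R\|_F\leq t^{1/2-1/q}\|R\|_q$ (for $q\geq 2$), which is exactly your power-mean bound rearranged.
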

\begin{proof}
Let $T=\{I_n,e_1e_1^T,\dots,e_ne_n^T\}$. In \eqref{eqn:distortion}, take $A=I_n$,
\[
\norm{R}_q\leq D_{p,q} n^{\frac 1p}.
\]
Take $A=e_ie_i^T$,
\[
\norm{R_i}_2 = \norm{Ge_ie_i^T}_q \geq 1,
\]
where $R_i$ is the $i$-th column of $R$, and hence $\norm{R}_F\geq \sqrt{n}$.
The lower bound for $D_{p,q}$ follows from the facts
\[
\norm{R}_F\leq \begin{cases}
				\norm{R}_q, & q < 2;\\
				t^{\frac12-\frac1q}\norm{R}_q, & q > 2.
		\end{cases}\qedhere
\]
\end{proof}

\begin{theorem} \label{thm:lb_on_t}
Let $p,q\geq 1$. There exist a set $T\subset \R^{n\times n}$ with $|T|=\exp(O(t))$ and an absolute constant $c\in (0,1)$ such that, if it holds for some matrix $R\in \R^{t\times n}$ with $t\leq cn$ and for all $A\in T$ that
\begin{equation}
\norm{A}_p \leq \norm{RA}_q \leq D_{p,q}\norm{A}_p \tag{\ref{eqn:distortion}}
\end{equation}
it must hold that
\[
D_{p,q} \gtrsim t^{\frac1p-\frac1q}.
\]
\end{theorem}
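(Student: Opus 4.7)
My plan is to reduce the problem to the $t\times t$ case and then exploit a net argument on the sphere $S^{t-1}$, both of which comfortably fit within the allowed budget $|T|=\exp(O(t))$. For any $A_1 \in \R^{t\times t}$, let $A = A_1 \oplus 0_{n-t} \in \R^{n\times n}$ denote the matrix obtained by padding $A_1$ into the top-left block (valid since $t \leq cn \leq n$). Writing $R = [R_1\mid R_2]$ with $R_1 \in \R^{t\times t}$ the first $t$ columns of $R$, one has $RA = R_1 A_1$ padded with zero columns, so $\|RA\|_q = \|R_1 A_1\|_q$, and $\|A\|_p = \|A_1\|_p$. Thus the distortion guarantee applied to all padded $A$ forces the same guarantee on the $t\times t$ submatrix: $\|A_1\|_p \leq \|R_1 A_1\|_q \leq D_{p,q}\|A_1\|_p$ for every $A_1$ chosen this way.

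The set $T$ then consists of two families of padded matrices. First, $A_1 = I_t$, which via the lower distortion bound gives $\|R_1\|_q \geq t^{1/p}$. Second, the rank-one matrices $A_1 = uv^T$ for $u,v$ ranging over a $\tfrac12$-net $\mathcal{N}\subset S^{t-1}$, of size $|\mathcal{N}|\leq 5^t$; since $\|uv^T\|_p = 1$ and $\|R_1 uv^T\|_q = \|R_1 u\|_2$, the upper distortion bound gives $\|R_1 u\|_2 \leq D_{p,q}$ for every $u \in \mathcal{N}$. A standard $\tfrac12$-net argument then yields $\|R_1\|_{op} \leq 2 D_{p,q}$. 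The total size is $|T| = 1 + |\mathcal{N}|^2 \leq 1 + 25^t = \exp(O(t))$, as required, and crucially $T$ does not depend on $R$.

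Since $R_1 \in \R^{t\times t}$ has at most $t$ singular values, $\|R_1\|_q \leq t^{1/q}\|R_1\|_{op}$. Chaining the inequalities,
\[
t^{1/p} \leq \|R_1\|_q \leq t^{1/q}\|R_1\|_{op} \leq 2\, t^{1/q} D_{p,q},
\]
which yields $D_{p,q} \geq \tfrac12 t^{1/p - 1/q}$, i.e.\ $D_{p,q} \gtrsim t^{1/p-1/q}$. When $p>q$ the exponent is negative and the claim is automatic from $D_{p,q}\geq 1$, so only the regime $p\leq q$ is nontrivial.

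The argument is essentially mechanical once one identifies the reduction to the top-left $t\times t$ block; there is no serious obstacle. The main conceptual point, and the reason $|T|=\exp(O(t))$ is precisely the right budget, is that an $\exp(O(t))$-size $\tfrac12$-net on $S^{t-1}\times S^{t-1}$ is exactly what allows one to upper-bound $\|R_1\|_{op}$ by $O(D_{p,q})$, and this operator-norm control is what combines with the identity-based Schatten-$q$ lower bound to produce the desired dependence on $t^{1/p-1/q}$.
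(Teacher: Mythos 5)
Your proof is correct and essentially identical to the paper's: both pad $I_t$ and a net's worth of rank-one matrices into the top-left $t\times t$ block, obtain $\|R'\|_q\geq t^{1/p}$ from the identity instance, use the rank-one instances and a net argument to get $\|R'\|_{op}\lesssim D_{p,q}$, and then chain these via $\|R'\|_q\leq t^{1/q}\|R'\|_{op}$. The only (cosmetic) difference is that you let both factors in $uv^T$ range over the net, which is redundant since $\|R_1uv^T\|_q=\|R_1u\|_2$ does not depend on the unit vector $v$; the paper in effect fixes the second factor, giving $|T|=1+|\mathcal{N}|$ rather than your $1+|\mathcal{N}|^2$, though both are $\exp(O(t))$.
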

\begin{proof}
Let $\mathcal{N}$ be an $\epsilon$-net on the unit sphere $\bS^{t-1}$ of size $(1+2/\epsilon)^t$. For each $x\in \R^t$, define \[
A_x = \begin{pmatrix}
		x & 0 \\
		0 & 0
	\end{pmatrix}	
\]
then $\|RA_x\|_p = \|R'x\|_2$ and $\|A_x\|_p = \|x\|_2$ for all $p$, where $R'$ is the left $t\times t$ block of $R$. Letting $A=A_x$ in \eqref{eqn:distortion},
\[
\|x\|_2\leq \|R'x\|_2\leq D_{p,q}\|x\|_2,\quad \forall x\in \mathcal{N}.
\]
Let
\[
A_t=\begin{pmatrix} I_t & 0\\ 0 & 0\end{pmatrix},
\]
and take $T=\{A_t\}\cup\mathcal{N}$. Since $R$ satisfies \eqref{eqn:distortion} on $\mathcal{N}$, a standard argument (see, e.g., \cite[p233]{V12}) shows that $\|R'\|_{op} \leq D_{p,q}/(1-\epsilon)$. Thus $\|R'\|_q \lesssim D_{p,q} t^{1/q}$. On the other hand, letting $A=A_t$ in \eqref{eqn:distortion} gives that $\|R'\|_q \geq t^{1/p}$. Hence $D_{p,q}\gtrsim t^{1/p-1/q}$.
\end{proof}

\begin{theorem} \label{thm:lb2_on_t}
Let $p,q\geq 1$. There exist a set $T\subset \R^{n\times n}$ with $|T|=\exp(O(t))$ and an absolute constant $c\in (0,1)$ such that, if it holds for some matrix $R\in \R^{t\times n}$ with $t\leq cn$ and for all $A\in T$ that
\begin{equation}
\norm{A}_p \leq \norm{RA}_q \leq D_{p,q}\norm{A}_p \tag{\ref{eqn:distortion}}
\end{equation}
it must hold that
\[
D_{p,q} \gtrsim (t/\ln t)^{\frac1q-\frac1p}.
\]
\end{theorem}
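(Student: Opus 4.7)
The plan is to combine the sphere-net trick used in Theorem~\ref{thm:lb_on_t} with a \emph{dimension reduction} that will account for the $\ln t$ loss. In Theorem~\ref{thm:lb_on_t} a constant-scale $\epsilon$-net on $\bS^{t-1}$ yields only $\|R'\|_{op}\lesssim D$; to also extract the reverse bound $\sigma_{\min}(R')\gtrsim 1$ one would need $\epsilon\lesssim 1/D$, but such a net on $\bS^{t-1}$ has size $\exp(\Theta(t\ln D))$, which is too large. The fix is to net $\bS^{k-1}$ for $k\simeq t/\ln t$ at a polynomially small scale $\epsilon = 1/t^C$; then the net still has size $\exp(O(k\ln t))=\exp(O(t))$, and it gives $\sigma_{\min}(R_{[k]})\gtrsim 1$ whenever $D\lesssim t^C$. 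The $\ln t$ gap between the target bound and Theorem~\ref{thm:lb_on_t} is exactly the price of trading ambient dimension for a finer net.

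Concretely, fix a small constant $c_0>0$ and a large constant $C$ (depending only on $p,q$), set $k=\lceil c_0 t/\ln t\rceil$ and $\epsilon = 1/t^C$, and let $\mathcal{N}\subseteq \bS^{k-1}$ be an $\epsilon$-net of size at most $(3/\epsilon)^k\leq \exp(O(t))$. Viewing each $x\in \mathcal{N}$ as a vector in $\R^n$ by padding with zeros, let $A_x = x e_1^T\in \R^{n\times n}$, a rank-one matrix with $\|A_x\|_p = 1$ and $\|RA_x\|_q = \|R_{[k]}x\|_2$, where $R_{[k]}\in\R^{t\times k}$ denotes the first $k$ columns of $R$. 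Also include one more matrix $A_{\mathrm{id}}\in\R^{n\times n}$, namely $I_k$ padded with zeros, for which $\|A_{\mathrm{id}}\|_p = k^{1/p}$ and $\|RA_{\mathrm{id}}\|_q = \|R_{[k]}\|_q$. Take $T = \{A_x:x\in \mathcal{N}\}\cup\{A_{\mathrm{id}}\}$, so $|T|\leq \exp(O(t))$.

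Applying \eqref{eqn:distortion} to the elements of $T$ yields $\|R_{[k]}x\|_2\in [1,D]$ for every $x\in \mathcal{N}$ and $\|R_{[k]}\|_q \leq D k^{1/p}$. The same net argument as in the proof of Theorem~\ref{thm:lb_on_t} gives $\|R_{[k]}\|_{op}\leq D/(1-\epsilon)\leq 2D$. If $D > t^C/4$, we are already done, since $(t/\ln t)^{1/q-1/p}\leq t^{1/q-1/p}\leq t$ (using $1/q-1/p\leq 1$) and $C$ is a constant. Otherwise $2\epsilon D\leq 1/2$, and the reverse direction of the net argument gives
\[
\sigma_{\min}(R_{[k]}) = \inf_{y\in\bS^{k-1}} \|R_{[k]}y\|_2 \geq 1 - \epsilon\|R_{[k]}\|_{op} \geq 1 - 2\epsilon D \geq \tfrac12.
\]
Since $R_{[k]}\in\R^{t\times k}$ with $k\leq t$ has $k$ singular values all $\geq 1/2$, we get $\|R_{[k]}\|_q^q = \sum_{i=1}^k \sigma_i(R_{[k]})^q \geq k/2^q$, hence $\|R_{[k]}\|_q \gtrsim k^{1/q}$. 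Combining with $\|R_{[k]}\|_q \leq Dk^{1/p}$,
\[
D\gtrsim k^{1/q-1/p}\gtrsim (t/\ln t)^{1/q-1/p},
\]
as claimed.

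The main obstacle (and the only non-obvious ingredient) is the balance between the fineness of the net and its cardinality: the lower bound on $\sigma_{\min}(R_{[k]})$ requires the error term $\epsilon\|R_{[k]}\|_{op}$ to be $\ll 1$, which forces $\epsilon\lesssim 1/D$, while the cardinality constraint $|T|\leq \exp(O(t))$ forces $k\ln(1/\epsilon)\lesssim t$. Setting $k\simeq t/\ln t$ and $\epsilon=1/\poly(t)$ simultaneously meets both requirements, and this single compromise is exactly what introduces the $\ln t$ factor in the statement.
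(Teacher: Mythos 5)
Your proposal is correct and follows essentially the same route as the paper: restrict $R$ to its first $k\simeq t/\ln t$ columns, use a fine net on $\bS^{k-1}$ whose size is still $\exp(O(t))$ to certify $\|R_{[k]}\|_{op}\lesssim D$ and $\sigma_{\min}(R_{[k]})\gtrsim 1$, then compare with the distortion bound for $I_k$. The only cosmetic difference is the net scale — the paper uses $\epsilon=1/(2D)$ under the initial assumption $D\leq t/\ln t$, while you use the fixed scale $\epsilon=1/t^C$ and split on whether $D\lesssim t^C$ — and your version cleanly handles a small constant slip in the paper's $\sigma_{\min}$ estimate (the paper writes $1-D\cdot\tfrac{1}{2D}$ where it should use the already-derived bound $\|R'\|_{op}\leq 2D$, forcing a slightly finer net).
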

\begin{proof}
Let $t' = t/\ln t$. Assume that $D = D_{p,q}\leq t'$, otherwise the result holds already. We can further assume that $p>q$. Let $\mathcal{N}$ be a $(1/2D)$-net on $\bS^{t'-1}$ of size $\exp(\Theta(t'\ln D)) = \exp(O(t'\ln t')) = \exp(O(t))$. Proceed as in the proof of Theorem~\ref{thm:lb_on_t}, and we arrive at
\[
\|x\|_2\leq \|R'x\|_2\leq D_{p,q}\|x\|_2,\quad \forall x\in \mathcal{N},
\]
where $R'$ is the left $t'$ columns of $R$. The proof of Theorem~\ref{thm:lb_on_t} shows that $\norm{R'}_{op}\leq 2D$. We claim that $s_{\min}(R')\geq 1/2$, or equivalently, $\|R'x\|_2\geq 1/2$ for all $x\in \bS^{t'-1}$. For $x\in \bS^{t'-1}$, find $y\in \mathcal{N}$ such that $\|x-y\|_2\leq 1/(2D)$, and thus
\[
\|R'x\|_2 \geq \|R'y\|_2 - \|R'(x-y)\|_2 \geq 1 - \norm{R'}_{op}\|x-y\|_2\geq 1 - D\cdot\frac{1}{2D} = \frac12.
\]
We then have that $\|RA_t\|_q = \|R'\|_q \lesssim D(t')^{1/p}$ and $\|RA_{t'}\|_q = \|R'\|_q\gtrsim (t')^{1/q}$, and it follows that $D\gtrsim (t')^{1/q-1/p}$.
\end{proof}

\begin{theorem} \label{thm:p>2}
Let $p>2$ and $p > q$. There exist a set $T\subset \R^{n\times n}$ with $|T|=\poly(n)$ and an absolute constant $c\in (0,1)$ such that, if it holds for some matrix $R\in \R^{t\times n}$ with $t\leq cn$ and for all $A\in T$ that
\begin{equation}
\norm{A}_p \leq \norm{RA}_q \leq D_{p,q}\norm{A}_p \tag{\ref{eqn:distortion}}
\end{equation}
it must hold that
\begin{equation}\label{eqn:distortion_ans_2}
D_{p,q} \gtrsim (n/t)^{\frac12-\frac1p}
\end{equation}
\end{theorem}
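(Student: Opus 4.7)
The plan is to realize the Gaussian hard-distribution argument outlined in the technical overview as a polynomial-size deterministic set, combining rotational invariance with a net argument on $R$. I would take $T = \{I_n\} \cup \{M_1, \ldots, M_N\}$ with $N = \poly(n)$, where each $M_j \in \R^{n \times n}$ is the padding of an independent $n \times (10t)$ i.i.d.\ Gaussian matrix $G_j$ with $(n - 10t)$ zero columns. The distortion hypothesis at $A = I_n$ yields $\|R\|_q \leq D_{p,q} n^{1/p}$ (so in particular $\|R\|_{op} \leq D_{p,q} n^{1/p}$; we may assume WLOG that $D_{p,q} \leq n$, else we are done). The crux is to show that, for a suitable realization of the $M_j$'s, for every such $R$ there is some index $j$ with $\|M_j\|_p \gtrsim t^{1/p}\sqrt n$ and $\|R M_j\|_q \lesssim \sqrt t\,\|R\|_q$ simultaneously. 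Combined with the distortion inequality $\|RM_j\|_q \geq \|M_j\|_p$ and the upper bound $\|R\|_q \leq D_{p,q}n^{1/p}$, these force $\|R\|_q \gtrsim t^{1/p-1/2}\sqrt n$, and hence $D_{p,q} \gtrsim (n/t)^{1/2-1/p}$, using $p > 2$ so that the exponent is positive.

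For a single fixed $R$ and a single Gaussian $G_j$, both estimates follow from rotational invariance. Writing the SVD $R = U_R \Sigma_R V_R^T$, the matrix $R M_j$ has the same distribution as $D_R G'_j$, where $D_R = \diag(\sigma_1(R), \ldots, \sigma_t(R))$ and $G'_j \in \R^{t \times 10t}$ is an i.i.d.\ Gaussian matrix. Proposition~\ref{prop:gaussian_matrix} applied to $G'_j$ with $u = \Theta(\sqrt t)$ places all of its singular values in $[c_1\sqrt t, c_2\sqrt t]$ with probability $\geq 1 - 2e^{-ct}$, so $\sigma_i(D_R G'_j) \simeq \sqrt t\,\sigma_i(R)$ for every $i$, and hence $\|R M_j\|_q \simeq \sqrt t\,\|R\|_q$. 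An analogous application to $G_j$ itself (an $n \times 10t$ Gaussian) gives $\|M_j\|_p \simeq t^{1/p}\sqrt n$ with probability $\geq 1 - 2e^{-cn}$.

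To extend these estimates to all $R$ simultaneously, I would invoke a standard net argument. Take an $\eta$-net $\mathcal R$ with $\eta = 1/\poly(n)$ of the operator-norm ball $\{R : \|R\|_{op} \leq n^2\}$; such a net exists with $|\mathcal R| \leq \exp(O(tn\log n))$. For each fixed $R \in \mathcal R$, by independence of the $M_j$, the probability that the good event fails for every sample is at most $e^{-ctN}$, so a union bound over $\mathcal R$ gives total failure probability $\leq \exp(O(tn\log n) - ctN) < 1$ for $N = \Omega(n\log n) = \poly(n)$ (using $t \leq cn$). Fixing such a realization yields the required deterministic $T$; passing from $R' \in \mathcal R$ to a general $R$ within distance $\eta$ in operator norm perturbs $\|RM_j\|_q$ by at most $\eta\|M_j\|_q$ and $\|R\|_q$ by at most $\eta\,t^{1/q}$, both polynomially small compared to the target scales $\sqrt t\,n^{1/p}$ and $n^{1/p}$ respectively.

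The main obstacle I expect is this net/perturbation bookkeeping, and in particular choosing the aspect ratio of the Gaussian block (here $10$) large enough that every singular value of the $t \times 10t$ matrix $G'_j$ concentrates at $\Theta(\sqrt t)$. This two-sided concentration is what pins $\|RM_j\|_q$ to $\Theta(\sqrt t\,\|R\|_q)$ rather than giving only a one-sided estimate, and it is essential for obtaining the $(n/t)^{1/2-1/p}$ lower bound instead of a weaker one; the polynomial sample count $N$ is then forced by the competition between the net entropy $O(tn\log n)$ and the single-sample failure exponent $\Theta(t)$.
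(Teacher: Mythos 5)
Your proof is correct and follows essentially the same approach as the paper's: the paper also uses a thin Gaussian hard instance (an $n\times r$ Gaussian with $r$ a constant multiple of $t$, normalized), rotational invariance to pass to the singular spectrum of $R$, the identity instance to pin $\|R\|_q$, and an $\eta$-net on $R$ defeated by $m=\Theta(n\log(Dn))$ samples. One small point: the chaining $\|M_j\|_p\le\|RM_j\|_q\lesssim\sqrt{t}\,\|R\|_q\le\sqrt{t}\,D_{p,q}n^{1/p}$ only uses the \emph{upper} tail $s_{\max}(G'_j)\lesssim\sqrt t$, so the two-sided concentration you flag as essential for the $t\times 10t$ block is not actually needed here; it is the lower tail $s_{\min}(G_j)\gtrsim\sqrt n$ for the $n\times 10t$ block $G_j$ that forces the constant aspect ratio.
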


Instead of proving this theorem, we prove the following rephrased version.

\begin{reptheorem}{thm:p>2}[rephrased]\label{thm:p>2_rephrased}
Let $p>2$ and $p>q$. There exist an absolute constant $D_0$ and a set $T\subset \R^{n\times n}$ with $|T|=O(n\ln(Dn))$ such that, if $D\geq D_0$ and it holds for some matrix $R\in \R^{t\times n}$ and for all $A\in T$ that
\begin{equation}\label{eqn:distortion'}
\norm{A}_p \leq \norm{RA}_q \leq D^{\frac12-\frac1p}\norm{A}_p 
\end{equation}
then it must hold that $t\gtrsim n/D$.
\end{reptheorem}

\begin{proof}
Let $r=n/(\rho^2D)$ and $t_0 = \theta r$ for some constants $\rho>1$ and $\theta\in (0,1)$ to be determined. We shall show that if $t\leq t_0$, it will not happen that $R$ satisfies \eqref{eqn:distortion'} for all $A\in T$.

Let $\mathcal{D}$ be the distribution of Gaussian random matrices of dimension $n\times r$ with i.i.d.\ entries $N(0,1/r)$. Let $R=U\Sigma V^T$ be the singular value decomposition of $R$ and $A\sim \mathcal{D}$. Then by rotational invariance of the Schatten norm and Gaussian random matrices, we know that $\|RA\|_q$ is identically distributed as $\|\Sigma A\|_q = \|B^T\Sigma'\|_q$, where $\Sigma'$ is the left $t\times t$ block of $\Sigma$ and $B$ is formed by the first $t$ rows of $A$. 

 It follows from Proposition~\ref{prop:gaussian_matrix} that with probability $\geq 1-\exp(-c_1c_2r)$,
\[
s_{\max}(B)\leq 1+2c_1\sqrt{\frac{t}{r}} \leq 1+2\sqrt{\theta} c_1,
\]
and thus
\[
\|B^T\Sigma'\|_q \leq s_{\max}(B)\|\Sigma'\|_q \leq (1+2\sqrt{\theta}c_1) \|\Sigma'\|_q = 
(1 + \sqrt{\theta}c_1) \|R\|_q \leq (1 + 2\sqrt{\theta}c_1) D^{\frac 12-\frac 1p}n^{\frac1p},
\]
that is, with probability $\geq 1-\exp(-c_1c_2r)$,
\[
\|RA\|_q \leq (1+2\sqrt{\theta}c_1) D^{\frac 12-\frac 1p}n^{\frac1p}.
\]
On the other hand, with probability $\geq 1-\exp(-c_1c_2r)$, all singular values of $A$ are at least $\sqrt{n/r}-2c_1 = \rho\sqrt{D}-2c_1 \geq (1-\epsilon)\rho\sqrt{D}$ if we choose $D_0\geq 4c_1^2/\epsilon^2$. Then
\[
\|RA\|_q \geq \|A\|_p \geq (1-\epsilon)sr^{\frac1p}\sqrt{D} = (1-\epsilon)\rho^{1-\frac2p}n^{\frac1p}D^{\frac12-\frac1p}.
\]
Also, with probability $\geq 1-\exp(-c_1c_2r)$, all singular values of $A$ are at most $\sqrt{n/r}+2c_1 = \rho\sqrt{D}+2c_1 \leq (1+\epsilon)\rho\sqrt{D}$ and thus
\[
\|A\|_p \leq r^{\frac1p}(1+\epsilon)s\sqrt{D} = (1+\epsilon)\rho^{1-\frac{2}{p}}n^{\frac1p}D^{\frac12-\frac1p}.
\]
This motivates the following definitions of constraints for $R\in \R^{t\times n}$ and $A\in\R^{n\times n}$:
\begin{align*}
	\p_1(R,A):\ &\|RA\|_q \leq (1+2\sqrt{\theta}c_1) D^{1/2-1/p} n^{1/p}\\
	\p_2(R,A):\ &\|RA\|_q \geq (1-\epsilon)\rho^{1-\frac2p}n^{1/p}D^{1/2-1/p}\\
	\p_3(A):\ &\|A\|_p \leq (1+\epsilon)\rho^{1-\frac2p}n^{1/p}D^{1/2-1/p}.
\end{align*}
Now, for $m$ samples $A_1,\dots,A_m$ drawn from $\mathcal{D}$, it holds for any fixed $R$ that
\begin{equation}\label{eqn:pre-net}
\Pr_{A_1,\dots,A_m}\left\{\exists i\text{ s.t. }
\p_1(R,A)\text{ and }\p_2(R,A)\text{ and }\p_3(A)\text{ hold}
\right\} \geq 1-e^{-c_1c_2mr}.
\end{equation}

Since $1\leq \|Ge_ie_i^T\|_q \leq D$ and $\|Ge_ie_i^T\|_q = \|R_i\|_2$, we can restrict the matrix $R$ to matrices with column norm in $[1,D]$. Thus we can find a net $\mathcal{R}\subset \bigcup_{t=1}^{t_0} \R^{t\times n}$ of size $\exp(O(t_0n\ln(Dn/\eta))$ such that for any $R$ with column norms in $[1,D]$, we can find $R'\in \mathcal{R}$ such that $\|R-R'\|_{op}\leq \eta$.

Now it follows from \eqref{eqn:pre-net} that
\begin{multline*}
\Pr_{A_1,\dots,A_m}\left\{\forall R\in\mathcal{R}, \exists i,\ \p_1(R,A)\text{ and }\p_2(R,A)\text{ and }\p_3(R,A)\text{ hold}
\right\}\\
 \geq 1-\exp\left(O\left(t_0n\ln\frac{Dn}{\eta}\right)\right)\exp\left(-\frac{c_1c_2}{D}mn\right) > 0,
\end{multline*}
if we choose $m = \Theta(n\ln(Dn))$. Fix $A_1,\dots,A_m$ such that for each $R\in\mathcal{R}$ there exists $i$ such that $\p_1(R',A_i)$ and $\p_2(R',A_i)$ and $\p_3(A_i)$ all hold.

Take $T=\{I_n,e_1e_1^T,\dots,e_ne_n^T,A_1,\dots,A_m\}$. We know that if $R$ satisfies \eqref{eqn:distortion} for all $A\in T$, then there exists $R'$ such that $\|R'-R\|_F\leq \eta$, and there exists $1\leq i\leq m$ such that $\p_1(R',A_i)$, $\p_2(R',A_i)$ and $\p_3(A_i)$ all hold. It follows that
\begin{align*}
\|RA_i\|_q \leq \|R'A_i\|_q + \|(R-R')A_i\|_q &\leq \|R'A_i\|_q + \|R-R'\|_{op}\|A_i\|_p\\
&\leq \left(1+2\sqrt{\theta}c_1+(1+\epsilon)\rho^{1-\frac2p}\eta\right)D^{\frac12-\frac1p}n^{\frac1p}
\end{align*}
and
\begin{align*}
\|RA_i\|_q \geq \|RA_i\|_q - \|(R-R')A_i\|_q &\geq \|R'A_i\|_q - \|R-R'\|_{op}\|A_i\|_p\\
&\geq \left((1-\epsilon)-(1+\epsilon)\eta\right)\rho^{1-\frac2p}D^{\frac12-\frac1p}n^{\frac1p}
\end{align*}
We meet a contradiction when $\theta$, $\epsilon$ and $\eta$ are all sufficiently small and $\rho$ is sufficiently large, for instance, when $\eta=\Theta(\epsilon)$, $\theta=\Theta(\epsilon^2/c_2^2)$ and $\rho=\Theta(1+p\epsilon/(p-2))$.
\end{proof}

Using almost the exact argument with the identical set $T$ as in the proof of Theorem~\ref{thm:p>2} we can prove a similar bound for $p<2$. The proof is omitted.
\begin{theorem}[$p<2$] \label{thm:p<2}
Let $1\leq p<2$ and $p<q$. There exist a set $T\subset \R^{n\times n}$ with $|T|=\poly(n)$ and an absolute constant $c\in (0,1)$ such that, if it holds for some matrix $R\in \R^{t\times n}$ with $t\leq cn$ and for all $A\in T$ that
\begin{equation}
\norm{A}_p \leq \norm{RA}_q \leq D_{p,q}\norm{A}_p \tag{\ref{eqn:distortion}}
\end{equation}
it must hold that
\[
D_{p,q} \gtrsim (n/t)^{\frac1p-\frac12}.
\]
\end{theorem}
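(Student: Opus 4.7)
I would follow the proof of Theorem~\ref{thm:p>2} essentially verbatim. Rephrase the target as: there exist an absolute constant $D_0$ and a set $T\subset\R^{n\times n}$ with $|T|=O(n\ln(Dn))$ such that if $D\geq D_0$ and some $R\in\R^{t\times n}$ satisfies $\|A\|_p\leq\|RA\|_q\leq D^{1/p-1/2}\|A\|_p$ for every $A\in T$, then $t\gtrsim n/D$. Take $T=\{I_n,e_1e_1^T,\ldots,e_ne_n^T,A_1,\ldots,A_m\}$ with $m=\Theta(n\ln(Dn))$ i.i.d.\ samples from the Gaussian distribution $\mathcal{D}$ on $\R^{n\times r}$ with entries $N(0,1/r)$ and $r=n/(\rho^2 D)$, and assume for contradiction that $t\leq t_0=\theta r$. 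Rotational invariance reduces $\|RA\|_q$ in distribution to $\|\Sigma'B\|_q$, where $\Sigma'$ is the $t\times t$ diagonal singular value matrix of $R$ and $B$ is a $t\times r$ Gaussian matrix with entries $N(0,1/r)$; Proposition~\ref{prop:gaussian_matrix} yields $s_{\min}(B),s_{\max}(B)\in[1-2\sqrt{\theta}c_1,1+2\sqrt{\theta}c_1]$ and all singular values of $A$ in $[(1-\epsilon)\rho\sqrt{D},(1+\epsilon)\rho\sqrt{D}]$, each with probability $\geq 1-\exp(-c_1c_2r)$.

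\paragraph{Key change.}
The only substantive difference from Theorem~\ref{thm:p>2} is which direction of the structural inequality and the $A=I_n$ constraint drives the contradiction. There one combines the structural \emph{upper} bound $\|\Sigma'B\|_q\leq s_{\max}(B)\|R\|_q$ with the distortion \emph{upper} bound $\|R\|_q\leq D^{1/2-1/p}n^{1/p}$ to bound $\|RA\|_q$ from above, and contradicts this against the distortion \emph{lower} bound $\|RA\|_q\geq\|A\|_p\gtrsim\rho^{1-2/p}n^{1/p}D^{1/2-1/p}$; this works for $p>2$ because $\rho^{1-2/p}$ grows in $\rho$. For $p<2$ we swap both sides. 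The structural \emph{lower} bound
\[
\|\Sigma'B\|_q\geq s_{\min}(B)\|\Sigma'\|_q\geq (1-2\sqrt{\theta}c_1)\|R\|_q,
\]
combined with the distortion \emph{lower} bound $\|R\|_q\geq\|I_n\|_p=n^{1/p}$ from $A=I_n$, yields $\|RA\|_q\geq(1-2\sqrt{\theta}c_1)n^{1/p}$. We contradict this against the distortion \emph{upper} bound at the random $A$: since $\|A\|_p\leq(1+\epsilon)\rho^{1-2/p}n^{1/p}D^{1/2-1/p}$ from the singular value bounds on $A$,
\[
\|RA\|_q\leq D^{1/p-1/2}\|A\|_p\leq (1+\epsilon)\rho^{1-2/p}n^{1/p}.
\]
Since $1-2/p<0$ for $p<2$, $\rho^{1-2/p}\to 0$ as $\rho\to\infty$, so choosing $\rho$ sufficiently large forces $(1+\epsilon)\rho^{1-2/p}<1-2\sqrt{\theta}c_1$, a contradiction.

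\paragraph{Net argument and main obstacle.}
The net over sketch matrices with column norms in $[1,D^{1/p-1/2}]$, guaranteed by the distortion on $\{e_ie_i^T\}$, transfers unchanged; the perturbation bound $\|(R-R')A\|_q\leq\|R-R'\|_{op}\|A\|_q\leq\|R-R'\|_{op}\|A\|_p$ is valid because $q>p$ in this regime (in contrast to the $q<p$ regime of Theorem~\ref{thm:p>2}, where the analogous bound in the original proof uses a different inequality). The probability and union bound argument over $m=\Theta(n\ln(Dn))$ samples and the $\exp(O(t_0n\ln(Dn/\eta)))$-sized net is identical. The main technical work, as in Theorem~\ref{thm:p>2}, is bookkeeping of constants: first pick $\theta,\epsilon,\eta$ small enough that $s_{\min}(B)$ is near $1$, $\|A\|_p$ is near its expectation, and the net perturbation is negligible, then pick $\rho$ large enough (depending on $p$) so that $(1+\epsilon)\rho^{1-2/p}+O(\eta)<1-2\sqrt{\theta}c_1$ remains strict after all perturbations. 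This yields $t_0=\theta n/(\rho^2 D)$ and hence $t\gtrsim n/D$, giving $D_{p,q}=D^{1/p-1/2}\gtrsim(n/t)^{1/p-1/2}$.
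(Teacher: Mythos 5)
Your proposal is correct and matches the paper's stated approach: the paper explicitly says Theorem~\ref{thm:p<2} follows by ``almost the exact argument with the identical set $T$'' as Theorem~\ref{thm:p>2} and omits the details, which you have filled in correctly. You identified the right symmetric modifications — pair the structural lower bound $\norm{\Sigma' B}_q \geq s_{\min}(B)\norm{R}_q$ with the lower distortion constraint at $A=I_n$ and contradict against the upper distortion constraint at the random Gaussian samples, exploiting that $\rho^{1-2/p}\to 0$ as $\rho\to\infty$ when $p<2$ — and you correctly observed that the net-perturbation step $\norm{(R-R')A_i}_q\leq\norm{R-R'}_{op}\norm{A_i}_q\leq\norm{R-R'}_{op}\norm{A_i}_p$ is clean here precisely because $q>p$.
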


\section{Lower Bounds for Two-sided Sketches}\label{sec:lb_two_sided}

\begin{theorem}\label{thm:two_sided_1<p<2}
Let $1\leq p\leq 2$ and $q\geq 1$. There exist a set $T\subset \R^{n\times n}$ with $|T|=O(n^2\ln n)$ and an absolute constant $c\in (0,1)$ such that, if it holds for some matrices $R,S\in \R^{t\times n}$ with $t\leq cn$ and for all $A\in T$ that
\begin{equation}\label{eqn:two_sided_distortion}
\norm{A}_p \leq \norm{RAS^T}_q \leq D_{p,q}\norm{A}_p
\end{equation}
it must hold that
\[
D_{p,q} \gtrsim \begin{cases}
				n^{\frac1p-\frac12}/t^{\frac1q-\frac12}, & q \leq 2\\
				n^{\frac 1p-\frac 12}, & q \geq 2
			\end{cases}
\]
\end{theorem}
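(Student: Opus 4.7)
The plan is to follow the strategy of Theorem~\ref{thm:p>2}: enlarge the canonical test set by $\Theta(n^2\ln n)$ random Gaussian matrices and apply a net argument over sketch pairs $(R,S)$. First, using the canonical matrices: taking $A=e_ie_j^T$ in \eqref{eqn:two_sided_distortion} yields $\|Re_ie_j^TS^T\|_q = \|R_i\|_2\|S_j\|_2$, so $1\leq \|R_i\|_2\|S_j\|_2\leq D_{p,q}$ for every $i,j$. Hence $\max_i\|R_i\|_2\cdot\max_j\|S_j\|_2\leq D_{p,q}$, and in particular
\[
\|R\|_F\|S\|_F \leq \sqrt{n}\max_i\|R_i\|_2 \cdot \sqrt{n}\max_j\|S_j\|_2 \leq nD_{p,q}.
\]
This is the only bridge from singular-value information back to the column-norm constraints.

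For the random part, include in $T$ independent Gaussian matrices $G_1,\dots,G_m\in\R^{n\times n}$ with $N(0,1)$ entries, for $m=\Theta(n^2\ln n)$. For a single Gaussian $G$, Proposition~\ref{prop:gaussian_matrix} gives $\|G\|_p\simeq n^{1/p+1/2}$ with probability $1-\exp(-cn)$. By rotational invariance of Gaussians and of Schatten norms, $\|RGS^T\|_F$ is distributed as $\|D_R\tilde G D_S\|_F$ where $D_R,D_S$ are the $t\times t$ diagonals of singular values of $R,S$ and $\tilde G$ is an independent $t\times t$ Gaussian. Thus $\E\|RGS^T\|_F^2=\sum_{i,j}\sigma_i(R)^2\sigma_j(S)^2=\|R\|_F^2\|S\|_F^2$, and Markov gives $\|RGS^T\|_F\leq 2\|R\|_F\|S\|_F$ with probability at least $3/4$. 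So a single Gaussian is ``good'' (both events hold) with constant probability. The column-norm constraint (after a harmless compensating rescaling of $R$ vs.\ $S$) places the columns of $R,S$ into $[1,D_{p,q}]$, so by the net proposition (\cite[Lemma 2]{LN16}) there is an operator-norm $\eta$-net of such matrices of size $\exp(O(tn\ln(D_{p,q}n/\eta)))$. With $\eta=1/\poly(nD_{p,q})$, a union bound over pairs in the net shows that $m=\Theta(n^2\ln n)$ Gaussian samples suffice to guarantee that every pair $(R',S')$ in the net admits at least one good $G_k$. Fix such samples to form $T=\{I_n\}\cup\{e_ie_j^T\}\cup\{G_1,\dots,G_m\}$.

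To finish, suppose $R,S$ satisfy \eqref{eqn:two_sided_distortion} for all $A\in T$, approximate by $(R',S')$ in the net, and let $G_k$ be good for $(R',S')$. Propagating the $\eta$-error through noncommutative H\"older (e.g.\ $\|(R-R')G_kS^T\|_q\leq \eta\|G_kS^T\|_q$) shows that $\|RG_kS^T\|_F\lesssim \|R\|_F\|S\|_F$ and $\|G_k\|_p\gtrsim n^{1/p+1/2}$ remain valid for $(R,S)$. For $q\leq 2$, using $\|M\|_q\leq t^{1/q-1/2}\|M\|_F$ on the $t\times t$ matrix $RG_kS^T$,
\[
n^{1/p+1/2}\lesssim \|G_k\|_p\leq \|RG_kS^T\|_q\leq t^{1/q-1/2}\|RG_kS^T\|_F\lesssim t^{1/q-1/2}\|R\|_F\|S\|_F\leq t^{1/q-1/2}nD_{p,q},
\]
yielding $D_{p,q}\gtrsim n^{1/p-1/2}/t^{1/q-1/2}$. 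For $q\geq 2$, the inequality $\|M\|_q\leq \|M\|_F$ replaces the one above, giving $n^{1/p+1/2}\lesssim nD_{p,q}$ and hence $D_{p,q}\gtrsim n^{1/p-1/2}$. The main obstacle is the combination of the exponentially large net on $(R,S)$ with only Markov-type (rather than exponential) concentration of $\|RGS^T\|_F$: because a single Gaussian is good only with constant probability, we genuinely need $\Theta(n^2\ln n)$ samples to absorb the $\exp(O(tn\log(D_{p,q}n)))$-size net, and the bookkeeping of the rescaling to place column norms in $[1,D_{p,q}]$ and of the net-approximation errors across different Schatten norms will need to be executed carefully in analogy with the proof of Theorem~\ref{thm:p>2}.
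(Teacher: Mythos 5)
Your argument is correct and follows the paper's overall template---canonical test matrices plus i.i.d.\ Gaussian samples plus a net over sketch pairs---but with a genuine simplification in one step, so it is worth recording the difference. The paper obtains the upper bound $\|R\|_F\|S\|_F\lesssim nD_{p,q}$ by adding to $T$ a third family of $\Theta(n^2\ln n)$ rank-one test matrices $g_ih_i^T$ (with $g_i,h_i$ Gaussian vectors) and invoking two further probabilistic events, $\|Rg\|_2\gtrsim\|R\|_F$ and $\|g\|_2\lesssim\sqrt{n}$, which must also be carried through the net-approximation. You instead derive the same bound deterministically from the $e_ie_j^T$ tests alone: $\|R_i\|_2\|S_j\|_2\leq D_{p,q}$ for all $i,j$ gives $\max_i\|R_i\|_2\cdot\max_j\|S_j\|_2\leq D_{p,q}$, and combined with $\|R\|_F\leq\sqrt{n}\max_i\|R_i\|_2$ (summing over the $n$ columns) and likewise for $S$, this yields $\|R\|_F\|S\|_F\leq nD_{p,q}$ with no randomness. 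This removes one family from $T$ and one layer of probabilistic conditioning and error propagation; the remaining ingredients---the event $\|F\|_p\gtrsim n^{1/p+1/2}$, the Markov bound $\|RFS^T\|_F\lesssim\|R\|_F\|S\|_F$ via rotational invariance, the rescaling-to-bounded-columns step enabling the LN16 net, and the concluding chains for $q\leq 2$ and $q\geq 2$---are the same as the paper's. Two small points to tidy: (i) since the theorem quantifies over all $t\leq cn$, the net must cover $\bigcup_{t'\leq cn}\R^{t'\times n}$, so its size is $\exp(O(n^2\ln(D_{p,q}n/\eta)))$ rather than $\exp(O(tn\ln(D_{p,q}n/\eta)))$; your choice $m=\Theta(n^2\ln n)$ already absorbs this once one assumes, WLOG, $D_{p,q}\leq n$ (otherwise the stated bound holds trivially); and (ii) the ``goodness'' of a sample must be phrased per net pair $(R',S')$ and then union-bounded over $|\mathcal{M}|^2$ pairs, which is what the exponent accounts for. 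Both are bookkeeping, not gaps.
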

\begin{proof}
First we consider the case $1\leq p\leq q\leq 2$. Without loss of generality we can assume that maximum column norm of $R$ and $S$ are the same. Let $\mathcal{F}$ be the distribution of $n\times n$ Gaussian matrices of i.i.d.\ entries $N(0,1)$ and $F\sim \mathcal{F}$. With probability $\geq 0.9$, the following conditions hold:
\begin{align*}
	\p_1(F):\ &\|F\|_p \gtrsim n^{1/p}\sqrt{n},\\
	\p_2(F,R,S):\ &\norm{RFS^T}_F \lesssim \|R\|_F\|S\|_F
\end{align*}
Let $g\sim N(0,I_n)$, then with probability $\geq 0.9$ the following conditions further holds:
\begin{align*}
	\p_3(g):\ &\|g\|_p \lesssim \sqrt{n},\\
	\p_4(g,R):\ &\norm{Rg}_2 \gtrsim \|R\|_F.
\end{align*}
Therefore for any fixed $R$ and $S$,
\[
\Pr_{\substack{F\sim \mathcal{F}\\g,h\sim N(0,I_n)}} \left\{
\p_1(F), \p_2(F,R,S), \p_3(g), \p_4(g,R), \p_3(h), \p_4(h,S)\text{ all hold}
\right\} \geq 0.7.
\]
Hence if we draw $m$ samples $F_1,\dots,F_m$ from $\mathcal{F}$ and $2m$ samples $g_1,\dots,g_m,h_1,\dots,h_m$ from $N(0,I_n)$, it holds that
\[
\Pr_{F_i,g_i,h_i} \left\{\exists i\text{ s.t. }
\p_1(F_i), \p_2(F_i,R,S), \p_3(g_i), \p_4(g_i,R), \p_3(h_i), \p_4(h_i,S)\text{ all hold}
\right\} \geq 1-(0.3)^m.
\]
Next, we find a net $\mathcal{M}\subset \bigcup_{t=1}^{cn} \R^{t\times n}$ of size $\exp(O(n^2\ln(n/\eta))$ such that for any $M$ with column norms in $[1,\sqrt{n}]$, we can find $M'\in \mathcal{G}$ such that $\|M-M'\|_{op}\leq \eta$. Let $m = \Theta(n^2\ln(n/\eta))$, we can find $F_1,\dots,F_m$ and $g_1,\dots,g_m,h_1,\dots,h_m$ such that for any $R,S\in \mathcal{M}$, there exists $i$ such that
\[
\p_1(F_i), \p_2(F_i,R,S), \p_3(g_i), \p_4(g_i,R), \p_3(h_i), \p_4(h_i,S)\text{ all hold}.
\]
Let $T=\{e_ie_j^T\}_{i,j=1}^n\cup \{F_1,\dots,F_m\} \cup \{g_ih_i^T\}_{i=1}^m$. Now, given any $R, S$ with maximum column norm $\sqrt{n}$, a standard net argument show that those properties above still hold (probably with slightly smaller or larger heading constants) for some $i$. Thus
\begin{equation}\label{eqn:lb1_aux}
\|R\|_F\|S\|_F\gtrsim \|RF_iS^T\|_F \geq \frac{\|RF_iS^T\|_q}{t^{\frac1q-\frac12}} \gtrsim \frac{\norm{F_i}_q}{t^{\frac1q-\frac12}} \gtrsim \frac{n^{\frac 1p+\frac12}}{t^{\frac1q-\frac12}}
\end{equation}
and
\[
\|R\|_F\|S\|_F\lesssim \|Rg_i\|_2\|Sh_i\|_2 = \norm{Rg_i h_i^TS^T}_q \leq D\norm{g_i h_i^T}_2=D\norm{g_i}_2\norm{h_i}_2\leq Dn.
\]
It follows immediately that
\[
D\gtrsim \frac{n^{\frac 1p-\frac12}}{t^{\frac 1q-\frac12}}.
\]
When the maximum norm of $R$ and $S$ is at least $\sqrt{n}$, say, $\|R_i\|_2=\|S_j\|_2 \geq \sqrt{n}$, then $D\geq \|Re_ie_j^TS\|_2 = \|R_i\|_2\|S_j\|_2\geq n$. This completes the proof for $q\leq 2$.

When $q>2$, instead of \eqref{eqn:lb1_aux} we have
\[
\|R\|_F\|S\|_F\gtrsim \|RF_iS^T\|_F \geq \|RF_iS^T\|_q \gtrsim \|F\|_q \gtrsim n^{\frac 1p+\frac12}.
\]
and thus $D\gtrsim n^{\frac1p-\frac12}$.
\end{proof}

\begin{theorem}\label{thm:two_sided_p>2>q}
Let $p\geq 2$ and $q\geq 1$. There exist a set $T\subset \R^{n\times n}$ with $|T|=O(n)$ and an absolute constant $c\in (0,1)$ such that, if it holds for some matrices $R,S\in \R^{t\times n}$ with $t\leq cn$ and for all $A\in T$ that
\begin{equation}
\norm{A}_p \leq \norm{RAS^T}_q \leq D_{p,q}\norm{A}_p \tag{\ref{eqn:two_sided_distortion}}
\end{equation}
it must hold that
\[
D_{p,q} \gtrsim \begin{cases}
				n^{\frac12-\frac1p}, &q\leq 2;\\
				n^{\frac12-\frac1p}/t^{\frac12-\frac1q}, &q\geq 2.
			  \end{cases}	
\]
\end{theorem}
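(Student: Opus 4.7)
I would follow the template of Theorem~\ref{thm:p>2>q}, the one-sided analog for $p\geq 2$. Take the test set
\[
T \;=\; \{I_n\} \;\cup\; \{e_i e_i^T : i \in [n]\},
\]
so $|T| = n+1 = O(n)$. Choosing $A = I_n$ in~\eqref{eqn:two_sided_distortion} gives $\|RS^T\|_q \leq D_{p,q}\, n^{1/p}$, and choosing $A = e_i e_i^T$ (rank one) gives $\|RAS^T\|_q = \|R_i S_i^T\|_q = \|R_i\|_2 \|S_i\|_2 \geq 1$. Summing these and applying Cauchy--Schwarz produces the two-sided analog of the one-sided inequality $\|R\|_F \geq \sqrt{n}$:
\[
\|R\|_F\, \|S\|_F \;\geq\; \sum_{i=1}^n \|R_i\|_2 \|S_i\|_2 \;\geq\; n.
\]

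Next, I would use that $RS^T$ has rank at most $t$, so by the standard interpolation between Schatten norms, $\|RS^T\|_F \leq t^{(1/2-1/q)_+} \|RS^T\|_q \leq t^{(1/2-1/q)_+} D_{p,q}\, n^{1/p}$. To close the argument, I would lower-bound $\|RS^T\|_F \gtrsim \sqrt{n}$; combined with the previous display this gives $\sqrt{n} \lesssim t^{(1/2-1/q)_+} D_{p,q}\, n^{1/p}$, which rearranges to the desired $D_{p,q} \gtrsim n^{1/2-1/p}/t^{(1/2-1/q)_+}$ in both cases $q \leq 2$ and $q \geq 2$.

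The hard part is therefore producing $\|RS^T\|_F \gtrsim \sqrt{n}$. The natural identity is
\[
\|RS^T\|_F^2 \;=\; \tr\bigl((R^T R)(S^T S)\bigr),
\]
where the two PSD factors have diagonals satisfying $(R^TR)_{ii}\,(S^TS)_{ii} = \|R_i\|_2^2 \|S_i\|_2^2 \geq 1$. This diagonal-product lower bound alone does not suffice, since off-diagonal cancellation in $R^TR$ and $S^TS$ can drive the trace arbitrarily small (for instance, at $t=1$, sign-cancelling rank-one instances can make $RS^T=0$). I expect the resolution to come from one of two routes: either (i) enlarge $T$ with $O(n)$ additional rank-one matrices of the form $\{e_1 e_j^T\}_{j}\cup\{e_i e_1^T\}_{i}$, providing the cross-norm constraints $\|R_1\|_2\|S_j\|_2 \geq 1$ and $\|R_i\|_2\|S_1\|_2 \geq 1$, which bring enough off-diagonal control to force the Frobenius bound; or (ii) use the lower-bound half of the $A=I_n$ constraint, namely $\|RS^T\|_q \geq n^{1/p}$, in conjunction with the rank-$t$ structure to exclude such cancellations. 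Carrying out this Frobenius lower bound is the main technical step; everything else is routine once it is in place.
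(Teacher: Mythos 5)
Your reduction to showing $\|RS^T\|_F \gtrsim \sqrt{n}$ is the right target, and you are also right that the identity $\|RS^T\|_F^2 = \tr\bigl((R^TR)(S^TS)\bigr)$ does not give this from column-norm constraints alone. But the gap you flag is genuine, and neither of your proposed fixes closes it. Here is a concrete family of counterexamples to fix (i): take $t=2$, let every column of $R$ equal $(1,0)$, and let the columns of $S$ alternate between $(1,0)$ and $(-\cos\alpha,\sin\alpha)$. Then all column norms of $R$ and $S$ equal $1$, so $\|R_i\|_2\|S_j\|_2 = 1$ for every pair $(i,j)$, and every rank-one test matrix $e_ie_j^T$ is handled with distortion exactly $1$. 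Meanwhile $RS^T$ is rank one with $\|RS^T\|_q = n\,|\sin(\alpha/2)|$ for every $q$. Choosing $\sin(\alpha/2)=n^{1/p-1}$ makes $\|RS^T\|_q = n^{1/p} = \|I_n\|_p$, so the identity constraint is also met with $D=1$, while $\|RS^T\|_F = n^{1/p} \ll \sqrt{n}$ for $p>2$. So the test set $\{I_n\}\cup\{e_ie_j^T\}_{i,j}$ (and a fortiori its subsets) cannot force $\|RS^T\|_F\gtrsim\sqrt{n}$, and fix (ii) fails for the reason you can read off from the same example: the lower-bound half of the $A=I_n$ constraint only gives $\|RS^T\|_F\geq n^{1/p}$, which is strictly weaker than $\sqrt n$ for $p>2$.

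The paper's proof takes a fundamentally different route precisely to sidestep this cancellation issue. It does not use a deterministic test set built from $I_n$ and coordinate rank-ones. Instead (mirroring the proof of Theorem~\ref{thm:two_sided_1<p<2}, with the inequality directions reversed as appropriate for $p\geq 2$), it samples $n\times n$ Gaussian matrices $F$ and rank-one Gaussian outer products $gh^T$ as test instances. The Gaussian $F$ randomizes the row and column spaces, so w.h.p. $\|RFS^T\|_F\simeq\|R\|_F\|S\|_F$ with no cancellation; similarly $\|Rg\|_2\simeq\|R\|_F$ and $\|Sh\|_2\simeq\|S\|_F$. From $A=gh^T$ one gets $\|R\|_F\|S\|_F\gtrsim n$, and from $A=F$ (using $\|F\|_p\lesssim n^{1/p+1/2}$ and $\|RFS^T\|_q\geq\|RFS^T\|_F$ when $q\leq 2$) one gets $D\gtrsim n^{1/2-1/p}$; the $q\geq 2$ case is analogous with the Schatten interpolation on the rank-$t$ matrix $RFS^T$. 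A net argument over the pair $(R,S)$, together with many independent samples $F_1,\dots,F_m$ and $g_ih_i^T$ and the coordinate rank-ones $e_ie_j^T$ (used only to restrict to bounded column norms before taking the net), then converts the probabilistic statement into a fixed test set $T$. In particular, in your example the Gaussian test matrix $F$ forces $\|RFS^T\|_q\simeq n$ while $D\|F\|_p\simeq Dn^{1/p+1/2}$, recovering $D\gtrsim n^{1/2-1/p}$, which is exactly what your deterministic $T$ misses.
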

\begin{proof}
The proof is similar to that of Theorem~\ref{thm:two_sided_1<p<2}, except that we need upper bounds for $\p_1(F)$ and $\p_4(g,R)$ and lower bounds for $\p_2(F,R,S)$ and $\p_3(g)$. Details are omitted.
\end{proof}

\begin{theorem} \label{thm:two_sided_lb_on_t}
Let $p,q\geq 1$. There exist a set $T\subset \R^{n\times n}$ with $|T|=\exp(O(t))$ and an absolute constant $c\in (0,1)$ such that, if it holds for some matrices $R, S\in \R^{t\times n}$ with $t\leq cn$ and for all $A\in T$ that
\begin{equation}
\norm{A}_p \leq \norm{RAS^T}_q \leq D_{p,q}\norm{A}_p \tag{\ref{eqn:two_sided_distortion}}
\end{equation}
it must hold that
\[
D_{p,q} \gtrsim t^{\frac1p-\frac1q}.
\]
\end{theorem}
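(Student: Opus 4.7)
Following the proof of Theorem~\ref{thm:lb_on_t}, I would keep the block-identity test matrix
\[
A_t = \begin{pmatrix} I_t & 0\\ 0 & 0 \end{pmatrix}
\]
and replace the vector-shaped test matrices with rank-one matrices $A_{x,y}=\tilde x\tilde y^T$, where $\tilde x,\tilde y\in\R^n$ are the zero-padded extensions of $x,y\in\R^t$. Let $\mathcal N\subset\bS^{t-1}$ be a $(1/2)$-net of cardinality at most $5^t$, and set $T=\{A_t\}\cup\{A_{x,y}:x,y\in\mathcal N\}$, so that $|T|=\exp(O(t))$. Writing $R',S'\in\R^{t\times t}$ for the left $t\times t$ blocks of $R$ and $S$, the rank-one structure gives the clean identities $\|RA_{x,y}S^T\|_q=\|R'x\|_2\,\|S'y\|_2$ and $\|A_{x,y}\|_p=\|x\|_2\|y\|_2$, together with $\|RA_tS^T\|_q=\|R'(S')^T\|_q$ and $\|A_t\|_p=t^{1/p}$.

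Substituting these into \eqref{eqn:two_sided_distortion} yields, for unit $x,y\in\mathcal N$,
\[
1\leq \|R'x\|_2\,\|S'y\|_2\leq D_{p,q},
\]
together with $\|R'(S')^T\|_q\geq t^{1/p}$. The main obstacle relative to the one-sided argument is that the rank-one tests only constrain the \emph{product} $\|R'x\|_2\|S'y\|_2$, whereas to bound $\|R'(S')^T\|_q$ we need individual control on $\|R'\|_{op}$ and $\|S'\|_{op}$. I would decouple by a double-maximization: let $y^*\in\mathcal N$ attain $\beta_0:=\max_{y\in\mathcal N}\|S'y\|_2$. Then $\|R'x\|_2\leq D_{p,q}/\beta_0$ for every $x\in\mathcal N$, and the $(1/2)$-net argument used in the proof of Theorem~\ref{thm:lb_on_t} (see, e.g., \cite[p233]{V12}) upgrades this to $\|R'\|_{op}\leq 2D_{p,q}/\beta_0$. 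The same net estimate applied in the other direction gives $\beta_0\geq\tfrac12\|S'\|_{op}$; multiplying these two bounds yields the clean decoupled estimate $\|R'\|_{op}\|S'\|_{op}\leq 4D_{p,q}$.

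To conclude, since $R'(S')^T$ is $t\times t$,
\[
t^{1/p}\leq \|R'(S')^T\|_q \leq t^{1/q}\|R'(S')^T\|_{op} \leq t^{1/q}\|R'\|_{op}\|S'\|_{op}\leq 4D_{p,q}\,t^{1/q},
\]
and rearranging gives $D_{p,q}\gtrsim t^{1/p-1/q}$, as claimed. (The bound is vacuous when $p\geq q$, where $t^{1/p-1/q}\leq 1\leq D_{p,q}$, so the interesting case is $p<q$.)
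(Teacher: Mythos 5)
Your proposal is correct and follows essentially the same route as the paper: both use $T=\{A_t\}\cup\{A_{x,y}:x,y\in\mathcal N\}$, reduce to $\|R'\|_{op}\|S'\|_{op}\lesssim D_{p,q}$, and then compare $\|R'(S')^T\|_q\geq t^{1/p}$ with $\|R'(S')^T\|_q\leq t^{1/q}\|R'\|_{op}\|S'\|_{op}$. The only difference is that you spell out the decoupling of the product constraint $\|R'x\|_2\|S'y\|_2\leq D_{p,q}$ into separate operator-norm bounds (which the paper compresses into a citation of the ``standard argument''), and in doing so you also quietly fix a small typo in the paper, which writes $\|R'S'\|_q$ where $\|R'(S')^T\|_q$ is meant.
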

\begin{proof}
Let $\mathcal{N}$ be an $\epsilon$-net on the unit sphere $\bS^{t-1}$ of size $(1+2/\epsilon)^t$. Let
\[
A_t=\begin{pmatrix} I_t & 0\\ 0 & 0\end{pmatrix},
\]
and take $T=\{A_t\}\cup\mathcal{N}$. For each pair $(x,y)\in \R^t$, define \[
A_{x,y} = \begin{pmatrix}
		xy^T & 0 \\
		0 & 0
	\end{pmatrix}	
\]
then $\|RA_{x,y}S^T\|_p = \|R'x\|_2\|S'y\|_2$ and $\|A_{x,y}\|_p = \|x\|_2\|y\|_2$ for all $p$, where $R'$ and $S'$ are the leftmost $t\times t$ block of $R$ and $S$ respectively. Letting $A=A_{x,y}$ in \eqref{eqn:two_sided_distortion},
\[
\|x\|_2\|y\|_2\leq \|R'x\|_2\|S'y\|_2\leq D_{p,q}\|x\|_2\|y\|_2,\quad \forall x,y\in \mathcal{N}.
\]
A standard argument as in \cite[p233]{V12} shows that $\|R'\|_{op}\|S'\|_{op}\leq D_{p,q}/(1-\epsilon)$. Thus $\|R'S'\|_q \lesssim D_{p,q} t^{1/q}$. On the other hand, letting $A=A_t$ in \eqref{eqn:two_sided_distortion} gives that $\|R'S'\|_q \geq t^{1/p}$. Hence $D_{p,q}\gtrsim t^{1/p-1/q}$.
\end{proof}

\begin{theorem} \label{thm:two_sided_lb2_on_t}
Let $p,q\geq 1$. There exist a set $T\subset \R^{n\times n}$ with $|T|=\exp(O(t\ln t))$ and an absolute constant $c\in (0,1)$ such that, if it holds for some matrices $R, S\in \R^{t\times n}$ with $t\leq cn$ and for all $A\in T$ that
\begin{equation}
\norm{A}_p \leq \norm{RAS^T}_q \leq D_{p,q}\norm{A}_p \tag{\ref{eqn:two_sided_distortion}}
\end{equation}
it must hold that
\[
D_{p,q} \gtrsim (t/\ln t)^{\frac1q-\frac1p}.
\]
\end{theorem}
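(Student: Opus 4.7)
The plan is to combine the rank-one two-sided construction $A_{x,y}$ from Theorem~\ref{thm:two_sided_lb_on_t} with the finer $(1/(4D))$-net on the reduced sphere $\bS^{t'-1}$, $t' := t/\ln t$, developed in Theorem~\ref{thm:lb2_on_t}. Setting $D := D_{p,q}$, I may assume $p > q$ (otherwise $1/q-1/p\leq 0$ and the bound is trivial) and $D \leq t'$ (otherwise $D > t' \geq (t')^{1/q-1/p}$). Let $\mathcal{N}$ be a $(1/(4D))$-net on $\bS^{t'-1}$, of size $(1+8D)^{t'} \leq \exp(O(t'\ln t')) = \exp(O(t))$, and take
\[
T = \{A_{x,y} : x,y \in \mathcal{N}\} \cup \{A_{t'}\},
\]
so that $|T| \leq \exp(O(t)) \leq \exp(O(t\ln t))$.

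Apply the distortion to each $A_{x,y}$ to obtain $1 \leq \norm{R'x}_2\norm{S'y}_2 \leq D$ for every $x,y \in \mathcal{N}$, where $R',S'\in\R^{t\times t'}$ denote the first $t'$ columns of $R,S$. The new difficulty relative to the one-sided case is that only the product of the two norms is controlled. I sidestep this via the scaling invariance $(\lambda R)A(S/\lambda)^T = RAS^T$: choose $\lambda$ so that $a := \min_{x\in\mathcal{N}}\norm{R'x}_2 = \min_{y\in\mathcal{N}}\norm{S'y}_2$, whence $a^2 \geq 1$ and thus $a \geq 1$. Writing $b := \max_{x\in\mathcal{N}}\norm{R'x}_2$ and $d := \max_{y\in\mathcal{N}}\norm{S'y}_2$, the inequalities $bd\leq D$ together with $b,d\geq a\geq 1$ force $b,d\leq D$. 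The standard net argument (as in the proof of Theorem~\ref{thm:lb_on_t}) then yields $\norm{R'}_{op},\norm{S'}_{op}\leq 2D$, and the finer-net refinement of Theorem~\ref{thm:lb2_on_t}, using $\norm{R'}_{op}\cdot (1/(4D))\leq 1/2$, upgrades this to $s_{\min}(R'),s_{\min}(S') \geq a - 1/2 \geq 1/2$.

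Finally, applying the distortion to $A_{t'}$ gives the upper bound $\norm{R'(S')^T}_q = \norm{RA_{t'}S^T}_q \leq D(t')^{1/p}$. For the matching lower bound, Courant--Fischer yields $\sigma_i(R'(S')^T) \geq s_{\min}(R')\,\sigma_i((S')^T) = s_{\min}(R')\sigma_i(S') \geq s_{\min}(R')\,s_{\min}(S') \geq 1/4$ for each $i\leq t'$, so summing the $q$-th powers gives $\norm{R'(S')^T}_q \geq (t')^{1/q}/4$. Comparing the two bounds yields $D \gtrsim (t')^{1/q-1/p} = (t/\ln t)^{1/q-1/p}$, as required. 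The main obstacle — extracting individual lower bounds on $s_{\min}(R')$ and $s_{\min}(S')$ from a product-type distortion inequality $\norm{R'x}_2\norm{S'y}_2 \in [1,D]$ — is exactly what the scaling invariance resolves: after equalizing the two minima, the finer-net refinement proceeds as in the one-sided case.
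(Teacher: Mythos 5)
Your proof is correct and follows the same overall plan as the paper's: a $\Theta(1/D)$-net $\mathcal{N}$ on $\bS^{t'-1}$, the rank-one instances $A_{x,y}$, and a comparison of upper and lower bounds on $\norm{R'(S')^T}_q$ via $A_{t'}$. The paper's own proof is quite terse and simply asserts ``an argument similar to that in Theorem~\ref{thm:lb2_on_t} shows that $s_{\min}(R')s_{\min}(S')\geq 1/2$''; you correctly identified that the product form of the constraint $\norm{R'x}_2\norm{S'y}_2\in[1,D]$ is the real obstacle, and your use of the scaling invariance $(\lambda R, S/\lambda)$ to equalize the two minima is exactly the right device to recover individual lower bounds on $s_{\min}(R')$ and $s_{\min}(S')$, after which the $\sigma_i(R'(S')^T)\geq s_{\min}(R')s_{\min}(S')$ step finishes as in the paper.
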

\begin{proof}
Let $t'=t/\ln t$ and assume that $D = D_{p,q}\leq t$, otherwise the result holds already. We can further assume that $p>q$. Let $\mathcal{N}$ be a $(1/2D)$-net on $\R^t$ of size $\exp(O(t\ln D)) = \exp(O(t'\ln t')) = \exp(\Theta(t))$. Proceed as in the proof of Theorem~\ref{thm:two_sided_lb_on_t}, and we arrive at
\[
\|x\|_2\|y\|_2\leq \|R'x\|_2\|S'x\|_2\leq D_{p,q}\|x\|_2\|y\|_2,\quad \forall x,y\in \mathcal{N},
\]
where $R'$ and $H'$ are the left $t'$ columns of $R$ and $S$, respectively.. The proof of Theorem~\ref{thm:two_sided_lb_on_t} shows that $\|R\|_{op}\|S\|_{op}\leq 2D$, whence an argument similar to that in Theorem~\ref{thm:lb2_on_t} shows that $s_{\min}(R')s_{\min}(S')\geq 1/2$. We then have that $\|RA_{t'}S^T\|_q = \|R'{S'}^T\|_q \lesssim D(t')^{1/p}$ and $\|RA_{t'}S^T\|_q = \|R'{S'}^T\|_q\gtrsim (t')^{1/q}$, and it follows that $D\gtrsim (t')^{1/q-1/p}$.
\end{proof}

\begin{theorem} \label{thm:two_sided_p<2}
Let $p<2$. There exist a set $T\subset \R^{n\times n}$ with $|T|=\poly(n)$ and an absolute constant $c\in (0,1)$ such that,  if it holds for some matrices $R,S\in \R^{t\times n}$ with $t\leq cn$ and for all $A\in T$ that
\begin{equation}
\norm{A}_p \leq \norm{RAS^T}_q \leq D_{p,q}\norm{A}_p \tag{\ref{eqn:two_sided_distortion}}
\end{equation}
it must hold that
\begin{equation}\label{eqn:two_sided_distortion_ans_2}
D_{p,q} \gtrsim (n/t)^{\frac1p-\frac12}/\log^{\frac{3}{2}}t.
\end{equation}
\end{theorem}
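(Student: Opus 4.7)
I will extend the proof of Theorem~\ref{thm:p<2} to the two-sided setting by combining a net argument over the sketching matrices with the band decomposition of $R$ and $S$ sketched around \eqref{eqn:ourtech3}-\eqref{eqn:ourtech4}. The hard distribution is the pair of independent Gaussian ensembles: an $n\times n$ standard Gaussian $F$, with $\|F\|_p\simeq n^{1/p+1/2}$ with high probability, and a rank-$t$ product $GH^T$ with $G,H\in\R^{n\times t}$ independent standard Gaussians, with $\|GH^T\|_p\simeq n\,t^{1/p}$. The ratio $\|GH^T\|_p/\|F\|_p\simeq n^{1/2-1/p}t^{1/p}$ embodies the claim: divided by the intrinsic $\sqrt{t}$ loss from the band analysis below, it yields exactly the factor $(n/t)^{1/p-1/2}$. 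As in Theorem~\ref{thm:two_sided_1<p<2}, testing \eqref{eqn:two_sided_distortion} against $A=e_ie_j^T$ forces $\|R_i\|_2\|S_j\|_2\in[1,D]$, so I may restrict to $R,S$ whose column norms lie in $[1,\sqrt D]$ for the net argument.

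\textbf{Within-band identity.} Decompose $R=\sum_i R^{(i)}$ and $S=\sum_j S^{(j)}$ into singular-value bands (Section~2), and discard the bands with singular values below $\|R\|_{op}/\poly(t)$ (their contribution is negligible), leaving $O(\log t)$ bands each. The core estimate, which should hold for each band pair $(i,j)$ with probability $1-e^{-\Omega(t)}$ over $F,G,H$, is
\[
\|R^{(i)}GH^T(S^{(j)})^T\|_q\simeq \sqrt{t}\cdot\|R^{(i)}F(S^{(j)})^T\|_q.
\]
To see it, use rotational invariance: $R^{(i)}F(S^{(j)})^T$ has the same Schatten norm as $\sigma_R^{(i)}\tilde F^{(i,j)}\sigma_S^{(j)}$ for a fresh $k_i\times k_j$ Gaussian $\tilde F^{(i,j)}$, while $R^{(i)}GH^T(S^{(j)})^T$ reduces to $\sigma_R^{(i)}\tilde G_i\tilde H_j^T\sigma_S^{(j)}$ with $\tilde G_i\in\R^{k_i\times t}$, $\tilde H_j\in\R^{k_j\times t}$ independent Gaussians. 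A polar decomposition $\tilde H_j^T=V|\tilde H_j|$ and one further invariance step rewrite $\tilde G_i\tilde H_j^T\stackrel{d}{=} W|\tilde H_j|$ with $W$ a $k_i\times k_j$ Gaussian, and the $k_j\leq t$ singular values of $|\tilde H_j|$ concentrate at $\sqrt t$, producing the factor $\sqrt t$.

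\textbf{Combining the bands.} Triangle inequality and pigeonhole over the $O(\log^2 t)$ significant band pairs produce $(i^\ast,j^\ast)$ with $\|R^{(i^\ast)}F(S^{(j^\ast)})^T\|_q\gtrsim \|RFS^T\|_q/\log^2 t\geq \|F\|_p/\log^2 t$. Substituting the within-band identity and the sub-block bound $\|R^{(i^\ast)}GH^T(S^{(j^\ast)})^T\|_q\leq \|RGH^TS^T\|_q\leq D\|GH^T\|_p$ gives
\[
\sqrt t\,\|F\|_p/\log^{2} t\;\lesssim\; D\,\|GH^T\|_p,
\]
so $D\gtrsim (n/t)^{1/p-1/2}/\log^{2} t$. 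The tighter $\log^{3/2} t$ in the statement should follow by replacing the plain triangle inequality across bands with a Clarkson-type $\ell_p^p$ recombination that exploits the near-orthogonality in Schatten-$p$ of the pieces $R^{(i)}F(S^{(j)})^T$ (their supports in the SVD bases of $R$ and $S$ are disjoint), saving a half-power of $\log t$.

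\textbf{Net argument and main obstacle.} For the conclusion to hold for all $R,S$ simultaneously, I follow the strategy of Theorem~\ref{thm:p>2}: draw $m=\Theta(n^2\log(Dn))$ i.i.d.\ copies of $F$ and $(G,H)$, take an $\eta$-net $\mathcal{M}$ on $t\times n$ matrices with column norms in $[1,\sqrt D]$ (of size $\exp(O(tn\log(Dn)))$), and union-bound; stability of the distortion under small operator-norm perturbation of $R,S$ absorbs the $\eta$-error. The final hard set is $T=\{e_ie_j^T\}_{i,j}\cup\{F_k\}_k\cup\{G_kH_k^T\}_k$, of size $\poly(n)$. The principal obstacle is establishing the within-band identity with failure probability $\exp(-\Omega(tn))$, tight enough for the union bound over the net to go through; this requires Schatten-$q$ concentration for Gaussian matrix products of dimension at most $t$ that is sharper than naive Gaussian Lipschitz concentration delivers, and an analogous quantitative bound for the Schatten-$q$ norm of a generic $k_i\times k_j$ Gaussian $\tilde F^{(i,j)}$ valid uniformly in the regime $k_i,k_j\leq t$.
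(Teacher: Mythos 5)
Your proposal identifies the right ingredients — the pair of hard distributions ($F$ and $GH^T$), a band decomposition of $R$ and $S$, a per-band comparison between $\|R^{(i)}F(S^{(j)})^T\|_q$ and $\|R^{(i)}GH^T(S^{(j)})^T\|_q$, and a net argument over sketching matrices — and these are indeed the ingredients of the paper's proof. The paper organizes them slightly differently: rather than a two-sided ``within-band identity,'' it isolates the quantity $E_q(R,S)$ (a max over band pairs) and proves two one-sided inequalities, an upper bound $\|\Sigma_R F\Sigma_S^T\|_q\lesssim(\polylog t)E_q(R,S)$ (Lemma~\ref{lem:AXB}) and a lower bound $\|\Sigma_R GH^T\Sigma_S^T\|_q\gtrsim\sqrt{r}E_q(R,S)$ (Lemma~\ref{lem:AGHB}). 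Your pigeonhole over the $O(\log^2 t)$ significant band pairs is morally the same step as the paper's use of the triangle inequality in Lemma~\ref{lem:AXB} (which loses the $\log^2$) and the choice of the maximizing band pair in Lemma~\ref{lem:AGHB}; the observation $R^{(i)}MS^{(j)T}=Q_iMQ_j'$ for projections $Q_i,Q_j'$, which you need to justify $\|R^{(i^*)}GH^T(S^{(j^*)})^T\|_q\le\|RGH^TS^T\|_q$, is correct and used implicitly there.

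However, two concrete issues remain. First, the $\log^{3/2}t$ in the theorem does not come from a ``Clarkson-type $\ell_p^p$ recombination'' of the band pieces. The paper proves a rephrased statement (Theorem~\ref{thm:two_sided_p<2}') in which the distortion is parameterized as $D^{1/p-1/2}$ and the hard instance has aspect ratio $r=n/(\rho^2 D)$ with $\rho=\Theta(\log^{3p/(2-p)}t)$. The polylog loss from Lemma~\ref{lem:AXB} is absorbed into $\rho$, and unwinding $D\mapsto D_{p,q}=D^{1/p-1/2}$ converts $\log^3 t$ into $\log^{3/2}t$ precisely because $(3p/(2-p))\cdot(1/p-1/2)=3/2$. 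Your calculation, done directly at $r\simeq t$, yields only $D_{p,q}\gtrsim(n/t)^{1/p-1/2}/\log^2 t$; the proposed near-orthogonality/recombination improvement is not carried out and is not what the paper does.

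Second, the ``principal obstacle'' you flag — needing the per-band Gaussian estimates to hold with failure probability $\exp(-\Omega(tn))$ to survive the union bound over the net — is misdiagnosed. The paper only proves Lemmas~\ref{lem:AXB} and~\ref{lem:AGHB} with success probability $1-O(1)$ (indeed, bands of tiny size $k_i$ or $k_j$, e.g.\ $1$, cannot give $\exp(-\Omega(t))$ concentration), and this suffices: the net argument fixes $m=\poly(n)$ i.i.d.\ samples of $(F,G,H)$ and, for each fixed $(R',S')$ in the net, uses that the probability that \emph{none} of the $m$ samples satisfies all the events is $\le(1-\Omega(1))^m=\exp(-\Omega(m))$, which beats the net size for $m=\Theta(t_0n\ln(Dn))$. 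So there is no need for sharper Schatten-$q$ concentration in the small-band regime, and this is not where the difficulty lies.
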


Instead of proving this theorem, we prove the following rephrased version.

\begin{reptheorem}{thm:two_sided_p<2}[rephrased]
Let $p<2$, $p>q$ and $D\geq D_0$ for some an absolute constant $D_0$. There exists a set $T\subset \R^{n\times n}$ with $|T|=O(n\ln(Dn))$ such that it holds for some matrices $R, S\in \R^{t\times n}$ and for all $A\in T$ that
\begin{equation}\label{eqn:two_sided_distortion'}
\norm{A}_p \leq \norm{RAS^T}_q \leq D^{\frac1p-\frac12}\norm{A}_p 
\end{equation}
then it must hold that $t\gtrsim n/(D\log^{3p/(2-p)}t)$.
\end{reptheorem}

We need two auxiliary lemmata.

\begin{lemma}\label{lem:AXB}
Let $A$ and $B$ be deterministic $n\times n$ matrices and $G$ be a Gaussian random matrix of i.i.d.\ $N(0,1)$ entries. It holds with probability $1-O(1)$ that
\[
\|AGB\|_p\lesssim (\log^{\frac 52} n)(\log \log n) \|A\|_{op} \|B\|_{op} E_p(A,B),
\]
where 
\begin{equation}\label{eqn:E_p}
E_p(A,B) = \max_{0\leq i,j\leq 3\log n}\frac{1}{2^{i+j}}\cdot \min\left\{N_i(A),N_j(B)\right\}^{\frac 1p} \cdot \max\left\{\sqrt{N_i(A)}, \sqrt{N_j(B)}\right\}.
\end{equation}
\end{lemma}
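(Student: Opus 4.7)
The plan is to decompose $A$ and $B$ along their singular-value bands and sum contributions band-by-band. Write $A = \sum_{i \geq 0} A_i$ and $B = \sum_{j \geq 0} B_j$, where $A_i$ (resp.\ $B_j$) is the restriction of $A$ (resp.\ $B$) to singular vectors with indices in $\mathcal{B}_i(A)$ (resp.\ $\mathcal{B}_j(B)$). I first truncate to $i, j \leq 3\log n$: the discarded bands consist of singular values of size at most $\|A\|_{op}/n^3$ or $\|B\|_{op}/n^3$, and combining the crude estimates $\|G\|_{op} \lesssim \sqrt{n}$ (Proposition~\ref{prop:gaussian_matrix}) with $\|\cdot\|_p \leq n^{1/p}\|\cdot\|_{op}$ shows that the total tail contribution to $\|AGB\|_p$ is polynomially small in $n$ relative to the target.

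For each retained pair $(i,j)$, take SVDs $A_i = U_i \Sigma_i V_i^T$ and $B_j = \tilde U_j \tilde \Sigma_j \tilde V_j^T$, for which, by the definition of the bands, $\|\Sigma_i\|_{op} \leq \|A\|_{op}/2^i$ and $\|\tilde\Sigma_j\|_{op} \leq \|B\|_{op}/2^j$. Then
\[
\|A_i G B_j\|_p \leq \|\Sigma_i\|_{op} \|\tilde\Sigma_j\|_{op}\, \|V_i^T G \tilde U_j\|_p \leq \frac{\|A\|_{op}\|B\|_{op}}{2^{i+j}}\, \|V_i^T G \tilde U_j\|_p.
\]
By rotational invariance of i.i.d.\ Gaussians, $\tilde G_{ij} := V_i^T G \tilde U_j$ is a standard Gaussian matrix of dimensions $N_i(A) \times N_j(B)$. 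Since its rank is at most $\min(N_i(A), N_j(B))$, we have $\|\tilde G_{ij}\|_p \leq \min(N_i(A), N_j(B))^{1/p}\, \sigma_1(\tilde G_{ij})$, and Proposition~\ref{prop:gaussian_matrix} yields $\sigma_1(\tilde G_{ij}) \lesssim \max(\sqrt{N_i(A)}, \sqrt{N_j(B)}) + u$ with probability $\geq 1 - 2e^{-u^2/2}$. Choosing $u = \Theta(\sqrt{\log\log n})$ and union-bounding over the $O(\log^2 n)$ band pairs keeps the overall failure probability constant.

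Summing via the triangle inequality, $\|AGB\|_p$ is controlled by $O(\log^2 n)$ band-pair contributions, each of the form $\|A\|_{op}\|B\|_{op}/2^{i+j}$ times the bound above, so that the dominant part is at most $\|A\|_{op}\|B\|_{op} E_p(A,B)$ up to the number of band pairs and the concentration correction $\sqrt{\log\log n}/\max(\sqrt{N_i},\sqrt{N_j})$, which can exceed $1$ for very small bands. The main obstacle is accounting for these polylogarithmic factors cleanly when $p < 2$: there, $\|\cdot\|_p$ is \emph{not} $1$-Lipschitz with respect to the Frobenius norm, since its Lipschitz constant is $\min(N_i,N_j)^{1/p-1/2}$, so one cannot naively apply Gaussian concentration directly to $\|\tilde G_{ij}\|_p$. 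Routing the concentration through $\sigma_1(\tilde G_{ij})$, which \emph{is} $1$-Lipschitz, sidesteps this issue at the price of the loose $\min(N_i,N_j)^{1/p}$ factor; together with the $O(\log^2 n)$ sum over band pairs and the bookkeeping needed to absorb the $\sqrt{\log\log n}$ concentration slack on the small bands, this produces the claimed $\log^{5/2} n \cdot \log\log n$ overhead.
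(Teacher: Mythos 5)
Your proof is correct and follows essentially the same strategy as the paper: decompose into singular-value bands, truncate at $i,j\leq 3\log n$, bound each block via $\|\tilde G_{ij}\|_p\leq(\mathrm{rank})^{1/p}\sigma_1(\tilde G_{ij})$ together with the operator-norm bound from Proposition~\ref{prop:gaussian_matrix}, and union-bound over the $O(\log^2 n)$ band pairs. The only cosmetic difference is that the paper works directly with diagonal $A,B$ and submatrices $G_{I_iJ_j}$ and phrases the per-block bound with a multiplicative slack $K$ and failure probability $\exp(-cK^2\max\{s_i,t_j\})$, whereas you use an additive slack $u$; both routes yield the claimed bound (your bookkeeping actually gives a slightly smaller polylog than the $\log^{5/2}n\cdot\log\log n$ stated, which still suffices since the lemma is a $\lesssim$).
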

\begin{proof}
By rotational invariance we may assume that $A$ and $B$ are diagonal. Write $A = \diag(a_1,\dots,a_n)$ and $B=\diag(b_1,\dots,b_n)$, where $a_i, b_i\geq 0$. By scaling we further assume that $\|A\|_{op}=1$ and $\|B\|_{op}=1$. For notational simplicity, let $I_i = \mathcal{B}_i(A)$, $J_j = \mathcal{B}_j(A)$ and $s_i = |I_i|$, $t_j = |J_i|$. Let $G_{I,J}$ be the submatrix of $G$ restricted to rows indiced by $I$ and columns indiced by $J$. Then
\begin{equation}\label{eqn:norm_decomposition}
\|AGB\|_p \leq \sum_{i,j} \norm{A_{I_i}G_{I_i J_j}B_{J_j}}_p \leq \sum_{i,j} \frac{1}{2^{i+j}}\norm{G_{I_i J_j}}_p.
\end{equation}
Now, for each $i$ and $j$, it holds with probability $\geq 1-\exp(-cK^2\max\{s_i,t_j\})\geq 1-\exp(-cK^2)$ that
\[
\frac{1}{2^{i+j}} \|G_{I_i J_j}\|_{p} \simeq \frac{1}{2^{i+j}}\cdot K(\sqrt{s_i}+\sqrt{t_j})\min\{s_i,t_j\}^{\frac1p}
\]
We claim that summands on the rightmost side of \eqref{eqn:norm_decomposition} with $\max\{i,j\}\geq 3\log n$ are negligible. Indeed, taking $K = \Theta(\sqrt{\log n})$, then
\[
\sum_{\max\{i,j\}\geq 3\log n}\!\!\!\frac{1}{2^{i+j}} \|G_{I_i J_j}\|_{p} \leq \frac{1}{n^3} \cdot \Theta(\sqrt{\log n})\!\!\!\!\!\!\sum_{\max\{i,j\}\geq 3\log n} \!\!\!\!\!\!(s_i + t_j) \leq \frac{1}{n^2}\cdot \Theta(\sqrt{\log n}) \cdot 2n = o(1)
\]
with failure probability $\leq n^2\exp(-cK^2) = O(1)$. Note that when $i = 0$ and $j = 0$, the corresponding summand is $\gtrsim K = \Theta(\sqrt{\log n})$, hence the summands with $\max\{i,j\} \geq 3\log n$ is indeed negligible.

The claim result follows immediately, 
where we need to take a union bound over all $i,j\leq 3\log n$, so we need
\[
(3\log n + 1)^2\exp(-cK^2) = O(1),
\]
which holds when $K = \Theta(\log \log n)$.
\end{proof}

\begin{lemma}\label{lem:AGHB} Let $A$ and $B$ be deterministic $n\times N$ matrices and $G, H$ be $N\times r$ Gaussian random matrix of i.i.d.\ $N(0,1)$ entries. Suppose that $n\leq cr$ for some absolute constant $c \in (0,1)$. It holds with probability $1-O(1)$ that
\[
\|AGH^TB^T\|_p \gtrsim \sqrt{r} \|A\|_{op} \|B\|_{op} E_p(A, B),
\]
where $E_p(A,B)$ is as defined in \eqref{eqn:E_p}.
\end{lemma}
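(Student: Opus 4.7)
The plan is to lower-bound $\|AGH^TB^T\|_p$ by restricting to a single, well-chosen pair of singular-value bands of $A$ and $B$. First I would use the unitary invariance of Schatten norms and the rotational invariance of Gaussian matrices to reduce to the case where $A$ and $B$ are diagonal: writing $A = U_A\Sigma_A V_A^T$ and $B = U_B\Sigma_B V_B^T$, the matrices $V_A^T G$ and $V_B^T H$ are independent Gaussians with the same law as $G,H$, so $\|AGH^TB^T\|_p = \|\Sigma_A (V_A^T G)(V_B^T H)^T\Sigma_B^T\|_p$, and we may assume $A=\diag(a_1,\dots,a_n)$ and $B=\diag(b_1,\dots,b_n)$. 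Now let $(i^*,j^*)$ achieve the maximum in \eqref{eqn:E_p}, let $I=\mathcal{B}_{i^*}(A)$, $J=\mathcal{B}_{j^*}(B)$, and assume without loss of generality that $s:=N_{i^*}(A)\leq N_{j^*}(B)=:t$. Let $G_I$ (resp.\ $H_J$) be the $s\times r$ (resp.\ $t\times r$) submatrix of $G$ (resp.\ $H$) on the rows indexed by $I$ (resp.\ $J$); these are independent Gaussian matrices. Projecting onto these bands only decreases the Schatten norm, while the diagonal scalings $A_{I,I}$ and $B_{J,J}$ have smallest singular values $\gtrsim \|A\|_{op}/2^{i^*}$ and $\gtrsim \|B\|_{op}/2^{j^*}$. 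Applying $\sigma_k(XY)\geq \sigma_{\min}(X)\sigma_k(Y)$ twice yields
\[
\|AGH^TB^T\|_p \;\geq\; \|A_{I,I}\,G_I H_J^T\,B_{J,J}\|_p \;\gtrsim\; \frac{\|A\|_{op}\|B\|_{op}}{2^{i^*+j^*}}\,\|G_I H_J^T\|_p.
\]

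To control $\|G_I H_J^T\|_p$, I would condition on $H_J$ and take its SVD $H_J=U_H\Sigma_H V_H^T$, where $V_H\in\R^{r\times t}$ has orthonormal columns (valid since $t\leq n\leq cr$). By rotational invariance of $G_I$, the matrix $W:=G_I V_H$ has i.i.d.\ $N(0,1)$ entries and is $s\times t$, whence $\|G_I H_J^T\|_p = \|W\Sigma_H\|_p$. Proposition~\ref{prop:gaussian_matrix} applied to $H_J$ places all its singular values in $[\sqrt r-\sqrt t-u,\sqrt r+\sqrt t+u]$ with probability $\geq 1-2e^{-u^2/2}$; choosing $c$ small and $u=\Theta(\sqrt r)$, these are all $\simeq\sqrt r$, so $\|W\Sigma_H\|_p \simeq \sqrt r\,\|W\|_p$.

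It remains to lower-bound $\|W\|_p$ for an $s\times t$ Gaussian matrix $W$ with $s\leq t$. Here I would use the high-probability bounds $\|W\|_F\simeq\sqrt{st}$ (by $\chi^2$-concentration) and $\|W\|_{op}\lesssim\sqrt s+\sqrt t\simeq\sqrt t$ (Proposition~\ref{prop:gaussian_matrix}). For $1\leq p\leq 2$, the pointwise inequality $\sigma_i^2\leq \sigma_i^p\,\sigma_1^{2-p}$ summed over singular values of $W$ yields $\|W\|_F^2\leq \|W\|_{op}^{2-p}\|W\|_p^p$, hence $\|W\|_p\gtrsim \|W\|_F^{2/p}/\|W\|_{op}^{(2-p)/p}\gtrsim s^{1/p}\sqrt t$. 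For $p\geq 2$, the power-mean inequality over the (at most $s$) singular values of $W$ gives $\|W\|_p\geq s^{1/p-1/2}\|W\|_F\simeq s^{1/p}\sqrt t$. In either case, $\|W\|_p\gtrsim \min\{s,t\}^{1/p}\max\{\sqrt s,\sqrt t\}$. Combining with the previous displays and taking a union bound over the $O(1)$ high-probability events yields the claim. The main obstacle is the regime where $s$ is comparable to $t$ and $p<2$: there the smallest singular value of $W$ can be nearly zero, so I cannot safely bound individual singular values from below. Routing the argument through $\|W\|_F$ and $\|W\|_{op}$ via H\"older's inequality, both of which concentrate regardless of the spectral gap, circumvents this issue.
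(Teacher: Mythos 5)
Your proof is correct, and the key step is handled more carefully than in the paper. The overall structure coincides with the paper's: reduce to diagonal $A,B$ by rotational invariance, lower-bound $\|AGH^TB^T\|_p$ by restricting to a single pair of bands $(I,J)$ achieving the max in $E_p(A,B)$, and pull out the diagonal scalings. The divergence is in bounding $\|G_I H_J^T\|_p$. The paper's display asserts, in effect, that every singular value of the product $G_IH_J^T$ is $\gtrsim (\sqrt r - C\sqrt{s})(\sqrt r - C\sqrt{t}) \gtrsim r$. That claim is not justified: conditioning on $H_J$ as you do, $G_IH_J^T$ has the same nonzero singular values as $W\Sigma_H$ with $W$ an $s\times t$ Gaussian, and when $s$ is close to $t$ the smallest singular value of $W$ is of order $1/\sqrt t$, so $\sigma_{\min}(G_IH_J^T)$ can be as small as $\sqrt{r/t}$ rather than $\Theta(r)$. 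Your route sidesteps this spectral-gap issue entirely: you factor out $\sigma_{\min}(\Sigma_H)\gtrsim\sqrt r$ and then lower-bound $\|W\|_p$ via H\"older (for $p\leq 2$) or the power-mean inequality (for $p\geq 2$) using only $\|W\|_F\simeq\sqrt{st}$ and $\|W\|_{op}\lesssim\sqrt{\max\{s,t\}}$, both of which hold irrespective of the gap, obtaining the tight bound $\|W\|_p\gtrsim\min\{s,t\}^{1/p}\max\{\sqrt s,\sqrt t\}$. Multiplying back gives exactly $\sqrt r\,E_p(A,B)$. One small point worth flagging explicitly: when $s$ and $t$ are $O(1)$, the $\chi^2$ and operator-norm concentration you invoke give only constant success probability, but since the lemma claims only $1-O(1)$ this is sufficient.
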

\begin{proof}
As in the proof of Lemma~\ref{lem:AXB}, we assume that $\|A\|_{op}=\|B\|_{op}=1$ and define $I_i$, $J_j$, $s_i$, $t_j$, $G_{I,J}$ in the same manner. Similarly to before, it holds with probability $1-O(1)$ that
\[ 
\sum_{\max\{i,j\}\geq 3\log n} \norm{A_{I_i}G_{I_i J_j}B_{J_j}}_p  = o(1).
\]
Now we choose the $(i,j)$ block with biggest Schatten-$p$ norm among $i,j\leq 3\log n$, that is, we choose $i$ and $j$ such that
\[
\norm{A_{I_i}G_{I_iJ_j}B_{J_j}}_p = \max_{1\leq i',j'\leq 3\log n} \norm{A_{I_{i'}}G_{I_{i'} J_{j'}}B_{J_{j'}}}_p.
\]
Then
\[
\norm{G_{I_i J_j}}_p\geq \min\{s_i, t_j\}^{\frac 1p} \left((\sqrt r-C\sqrt{s_i})(\sqrt r-C\sqrt{t_j})\right)^{\frac 1p} \gtrsim \min\{s_i, t_j\}^{\frac 1p} r
\]
with probability $\geq 1-\exp(-cr)$, and thus
\begin{align*}
\norm{A_{I_i}G_{I_i}H_{J_j}B_{J_j}}_p &\geq \frac{1}{2^{i+1}}\cdot\frac{1}{2^{j+1}}\cdot \norm{G_{I_i J_j}}_p\\
&\gtrsim \min\{s_i, t_j\}^{\frac1p} \frac{1}{2^{i}}\cdot\frac{1}{2^{j}} r\\
&\gtrsim \min\{s_i, t_j\}^{\frac1p} \max\{\sqrt{s_i},\sqrt{t_j}\}\frac{1}{2^{i}}\cdot\frac{1}{2^{j}} \sqrt{r}.
\end{align*}
The claimed lower bound follows immediately, noting that the sum over $\max\{i,j\}\geq 3\log n$ is negligible compared with the term corresponding to $i=j=0$.
\end{proof}

\begin{proof}[Proof of Theorem~\ref{thm:two_sided_p<2}']
Without loss of generality, we can assume that the maximum column norm of $R$ and that of $S$ are the same; otherwise we can rescale $R$ and $S$.

Let $r=n/(\rho^2D)$ and $t_0 = \theta r$ for some $\rho = \Theta(\log^{3p/(2-p)}t)$ and $\theta\in (0,1)$ to be determined. We shall show that if $t\leq t_0$, it will not happen that $G$ satisfies \eqref{eqn:two_sided_distortion'} for all $A\in T$.

Let $\mathcal{D}$ be the distribution of Gaussian random matrices of dimension $n\times r$ with i.i.d.\ entries $N(0,1)$ and let $G,H\sim \mathcal{D}$ be independent. It follows from Lemma~\ref{lem:AGHB} that with probability $\geq 1-O(1)$,
\begin{equation}\label{eqn:complicated_1}
\|\Sigma_R G H^T \Sigma_S^T\|_q\gtrsim \sqrt{r} E_q(R,S).
\end{equation}
On the other hand, it follows from \eqref{eqn:two_sided_distortion} that with probability $\geq 1-\exp(-c_1n)$,
\begin{equation}\label{eqn:complicated_2}
\|\Sigma_R G H^T \Sigma_S^T\|_q \leq D^{\frac12-\frac1p}\|G H^T\|_p \lesssim D^{\frac12-\frac1p} n r^{\frac 1p}.
\end{equation}
Now, let $\mathcal{F}$ be the distribution of $n\times n$ Gaussian matrix of i.i.d.\ entries $N(0,1)$ and let $F$ be drawn from $\mathcal{F}$. Then $\norm{RFS}_q$ is identically distributed as $\Sigma_R F' \Sigma_S$, where $F'$ is a random $t\times t$ Gaussian matrix of i.i.d.\ entries $N(0,1)$. It follows from Lemma~\ref{lem:AXB} that with probability $\geq 1-O(1)$,
\begin{equation}\label{eqn:complicated_3}
\|\Sigma_R F' \Sigma_S^T\|_q\lesssim (\log^{\frac 52}t)(\log \log t)E_q(R,S) \leq (\log^3 t)E_q(R,S)
\end{equation}
On the other hand, it follows from \eqref{eqn:two_sided_distortion} that with probability $\geq 1-\exp(-c_2n)$,
\begin{equation}\label{eqn:complicated_4}
\|R F S^T\|_q\geq \|F\|_p \gtrsim n^{1/p}\sqrt{n}.
\end{equation}

Define events $\p_1(G,H,R,S)$ and $\p_2(F,R,S)$ to be \eqref{eqn:complicated_1} and \eqref{eqn:complicated_3} respectively. Further define
\begin{align*}
\p_3(G,H):\ & \norm{GH^T}_p \lesssim nr^{1/p},\\
\p_4(F):\ & \norm{F}_p \lesssim n^{1/p}\sqrt{n}.
\end{align*}
Both $\p_3(G,H)$ and $\p_4(F)$ hold with probability $\geq 1-\exp(-c_3n)$ when $G,H\sim \mathcal{D}$ and $F\sim\mathcal{F}$.

Now, for $2m$ samples $G_1,\dots,G_m,H_1,\dots,H_m$ independently drawn from $\mathcal{D}$, and $m$ samples $F_1,\dots,F_m$ independently drawn from $\mathcal{F}$, it holds for any fixed $S$ and $T$ that
\begin{equation}\label{eqn:two_sided_pre-net}
\Pr_{G_i,H_i,F_i}\left\{\exists i\text{ s.t. }
\p_1(G_i,H_i,R,S)\text{ and }\p_2(F_i,R,S)\text{ and }\p_3(G_i,H_i)\text{ and }\p_4(F_i)\text{ all hold}
\right\} \geq 1-e^{-c_4m}.
\end{equation}

Since $1\leq \|Re_ie_j^TS^T\|_q = \|R_i\|_2\|S_j\|_2 \leq D$, we can restrict the matrix $R$ and $S$ to matrices with column norm in $[1,\sqrt{D}]$. Thus we can find a net $\mathcal{M}\subset \bigcup_{t=1}^{t_0} \R^{t\times n}$ of size $\exp(O(t_0n\ln(Dn/\eta))$ such that for any $M$ with column norms in $[1,\sqrt{D}]$, we can find $M'\in \mathcal{G}$ such that $\|M-M'\|_{op}\leq \eta$.

Now it follows from \eqref{eqn:two_sided_pre-net} that
\begin{multline*}
\Pr_{G_i,H_i,F_i}\left\{\forall R,S\in\mathcal{M}, \exists i,\ 
\p_1(G_i,H_i,R,S)\text{ and }\p_2(F_i,R,S)\text{ and }\p_3(G_i,H_i)\text{ and }\p_4(F_i)\text{ all hold}
\right\}\\
 \geq 1-\exp\left(O\left(t_0n\ln\frac{Dn}{\eta}\right)\right)\exp\left(-c_4m\right) > 0,
\end{multline*}
if we choose $m = \Theta(n\ln(Dn))$. Fix $\{G_i,H_i,F_i\}_i$ such that for each pair $R',S'\in\mathcal{M}$ there exists $i$ such that $\p_1(G_i,H_i,R',S')$ and $\p_2(F_i,R',S')$ and $\p_3(G_i,H_i)$ and $\p_4(F_i)$ all hold.

Take $T=\{I_n\}\cup\{e_ie_j^T\}\cup\{G_iH_i^T\}_i\cup\{F_i\}_i$. We know that if $(R,S)$ satisfies \eqref{eqn:two_sided_distortion} for all $A\in T$, then there exists $R'$ and $S'$ such that $\|R'-R\|_{op}\leq \eta$ and $\|S'-S\|_{op}\leq \eta$, and there exists $1\leq i\leq m$ such that $\p_1(G_i,H_i,R',S')$ and $\p_2(F_i,R',S')$ and $\p_3(G_i,H_i)$ and $\p_4(F_i)$ all hold. One can then show that \eqref{eqn:complicated_1}, \eqref{eqn:complicated_2}, \eqref{eqn:complicated_3}, \eqref{eqn:complicated_4} all hold with slightly larger or smaller constants for $R$ and $S$. It follows that
\[
\frac{n^{\frac1p}\sqrt{n}}{\log^3 t}\lesssim D^{\frac1p-\frac12}{\sqrt r}nr^{\frac 1p},
\]
or,
\[
\frac{1}{\log^3 t}\lesssim \left(\frac{rD}{n}\right)^{\frac 1p-\frac12} = \frac{1}{\rho^{\frac2p-1}},
\]
which contradicts our choice of $\rho$ (the hidden constant in $\lesssim$ above depends only on $D_0$, $\theta$ and $\eta$, and then we can choose the hidden constant in the $\Theta$-notation for $\rho$).
\end{proof}

Using almost the exact argument with the identical set $T$ as in the proof of Theorem~\ref{thm:two_sided_p<2} we can prove a similar bound for $p>2$. We need a lower bound for Lemma~\ref{lem:AXB} and an upper bound for Lemma~\ref{lem:AGHB}, which are corollaries of non-commutative Khintchine inequality. 

\begin{theorem} \label{thm:two_sided_p>2}
Let $p>2$. There exist a set $T\subset \R^{n\times n}$ with $|T|=\poly(n)$ and an absolute constant $c\in (0,1)$ such that,  if it holds for some matrices $R, S\in \R^{t\times n}$ with $t\leq cn$ and for all $A\in T$ that
\begin{equation}
\norm{A}_p \leq \norm{RAS^T}_q \leq D_{p,q}\norm{A}_p \tag{\ref{eqn:two_sided_distortion}}
\end{equation}
it must hold that
\[
D_{p,q} \gtrsim (n/t)^{\frac12-\frac1p}.
\]
\end{theorem}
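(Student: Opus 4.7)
My plan is to adapt the proof of Theorem~\ref{thm:two_sided_p<2} almost verbatim, replacing the two band-decomposition lemmas (Lemmas~\ref{lem:AXB} and~\ref{lem:AGHB}) with direct applications of the non-commutative Khintchine inequality, which is the natural tool in the regime $q\geq 2$. The restriction $q\geq 2$ is harmless, since for $1\leq q\leq 2\leq p$ the claimed two-sided bound $(n/t)^{1/2-1/p}$ is already dominated by the one-sided bound $n^{1/2-1/p}$ from Theorem~\ref{thm:two_sided_p>2>q}. I rephrase the statement, as was done for Theorem~\ref{thm:two_sided_p<2}, to say: if the distortion is $D^{1/2-1/p}$ for some $D\geq D_0$ and some hard set of size $O(n\log(Dn))$, then $t\gtrsim n/D$. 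Set $r=n/(\rho^2 D)$ and $t_0=\theta r$ for $\rho$ large and $\theta$ small; assuming $t\leq t_0$, I will derive a contradiction. The hard distributions are identical to the $p<2$ case: $F\sim \mathcal{F}$ (an $n\times n$ i.i.d.\ $N(0,1)$ matrix) and $GH^T$ with independent $G,H\sim\mathcal{D}$ (both $n\times r$ Gaussians). By rotational invariance, $RFS^T$ is identically distributed to $\Sigma_R F'\Sigma_S$ with $F'$ a $t\times t$ Gaussian, and $RGH^TS^T$ to $\Sigma_R G_0 H_0^T\Sigma_S$ with $G_0,H_0$ independent $t\times r$ Gaussians.

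\textbf{Key Khintchine computations.} Writing $\Sigma_R F'\Sigma_S=\sum_{i,j}F'_{ij}\sigma_i^R\sigma_j^S e_ie_j^T$ and computing the two covariance matrices, one gets $\sum_{ij} C_{ij}C_{ij}^T=\|S\|_F^2\,\Sigma_R^2$ and $\sum_{ij}C_{ij}^T C_{ij}=\|R\|_F^2\,\Sigma_S^2$. Non-commutative Khintchine (together with Gaussian Lipschitz concentration) thus gives
\[
\|\Sigma_R F'\Sigma_S\|_q\simeq \max\{\|R\|_q\|S\|_F,\ \|R\|_F\|S\|_q\}
\]
with high probability; the lower direction is the substitute for Lemma~\ref{lem:AXB}. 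For $\|\Sigma_R G_0 H_0^T \Sigma_S\|_q$, I apply Khintchine twice: conditionally on $H_0$, setting $M=H_0^T\Sigma_S$, the same calculation yields $\|\Sigma_R G_0 M\|_q\simeq \max\{\|M\|_F\|R\|_q,\|R\|_F\|M\|_q\}$. A second application of Khintchine to $\Sigma_S H_0$ combined with chi-squared concentration of $\|M\|_F^2$ around $r\|S\|_F^2$ yields $\|M\|_F\simeq \sqrt r\,\|S\|_F$ and $\|M\|_q\simeq \sqrt r\,\|S\|_q$ (using $r>t$ and $q\geq 2$ to see that the $\sqrt r\,\|S\|_q$ term dominates the $r^{1/q}\|S\|_F$ term). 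Together these give
\[
\|\Sigma_R G_0 H_0^T\Sigma_S\|_q\lesssim \sqrt r\cdot\max\{\|R\|_q\|S\|_F,\ \|R\|_F\|S\|_q\}\simeq \sqrt r\cdot\|\Sigma_R F'\Sigma_S\|_q,
\]
which substitutes for Lemma~\ref{lem:AGHB}. Chaining with the standard estimates $\|F\|_p\simeq n^{1/p+1/2}$ and $\|GH^T\|_p\simeq nr^{1/p}$ and the distortion hypothesis gives
\[
nr^{1/p}\lesssim \|\Sigma_R G_0 H_0^T \Sigma_S\|_q\lesssim \sqrt r\,\|\Sigma_R F'\Sigma_S\|_q\lesssim \sqrt r\,D^{1/2-1/p} n^{1/p+1/2},
\]
which rearranges to $r\gtrsim n/D$, contradicting $r=n/(\rho^2 D)$ for $\rho$ sufficiently large.

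\textbf{Net argument and main obstacle.} To make the contradiction adversarial in $R,S$, I reuse the net argument from Theorem~\ref{thm:two_sided_p<2} line-by-line: define events $\mathrm{P}_1,\ldots,\mathrm{P}_4$ capturing the four high-probability Gaussian statements above (lower bound on $\|\Sigma_R F'\Sigma_S\|_q$, upper bound on $\|\Sigma_R G_0 H_0^T\Sigma_S\|_q$, and the concentration estimates for $\|F\|_p$ and $\|GH^T\|_p$), draw $m=\Theta(n\log(Dn))$ i.i.d.\ samples from each distribution, and balance against an $\eta$-net $\mathcal{M}$ of size $\exp(O(t_0 n\log(Dn/\eta)))$ on matrices with column norms in $[1,\sqrt D]$ (the restriction coming from $\|Re_ie_j^T S^T\|_q=\|R_i\|_2\|S_j\|_2\in[1,D]$). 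Take $T=\{I_n\}\cup\{e_ie_j^T\}\cup\{F_i\}\cup\{G_iH_i^T\}$. The main obstacle is the two-stage Khintchine computation for $\|\Sigma_R G_0 H_0^T \Sigma_S\|_q$: after the first (conditional) application one obtains a bound in terms of $\|H_0^T\Sigma_S\|_F$ and $\|H_0^T\Sigma_S\|_q$, and the second application together with concentration must yield the clean $\sqrt r$ factor (and not a factor polynomial in $t$) for the ratio between the product-Gaussian and single-Gaussian cases. Once this ratio is established, the remainder of the argument is a transcription of the $p<2$ proof, but with the $\log^{3p/(2-p)} t$ overhead removed since Khintchine is log-free for $q\geq 2$.
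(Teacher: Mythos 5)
Your proposal is correct and takes essentially the same approach as the paper's (omitted) proof: the paper explicitly says to reuse the proof of Theorem~\ref{thm:two_sided_p<2} with the band-decomposition Lemmas~\ref{lem:AXB} and~\ref{lem:AGHB} replaced by the non-commutative Khintchine bounds, which it records as Corollaries~\ref{lem:AXB2} and~\ref{lem:AGHB2}, and your ``key Khintchine computations'' derive exactly those two corollaries (the single-Gaussian estimate $\|\Sigma_R F'\Sigma_S\|_q\simeq\max\{\|R\|_q\|S\|_F,\|R\|_F\|S\|_q\}$ and the double-Gaussian estimate with the extra $\sqrt r$ factor via a second Khintchine application) and then plug them into the same net argument. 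Your reduction to $q\geq 2$ by invoking Theorem~\ref{thm:two_sided_p>2>q} for $q<2$ is a legitimate simplification, and your identification of why the $\log^{3p/(2-p)}t$ overhead disappears matches the intent of the paper.
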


The proof of the theorem is omitted but we shall show the two auxiliary corollaries of non-commutative Khintchine inequality. 

\begin{corollary}\label{lem:AXB2}
Let $A$ and $B$ be deterministic $n\times n$ matrices and $G$ be a Gaussian random matrix of i.i.d.\ $N(0,1)$ entries. It holds with probability $1-O(1)$ that
\[
\|AGB\|_p\simeq \max\{\|A\|_p\|B\|_F, \|A\|_F\|B\|_p \},\quad p\geq 2.
\]
\end{corollary}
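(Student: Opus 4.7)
The plan is to apply the non-commutative Khintchine inequality directly to $AGB$ and then upgrade the expectation bound to a high-probability statement via Gaussian Lipschitz concentration.

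First, I would decompose $AGB$ as a Gaussian sum by writing $G=\sum_{i,j}g_{ij}e_ie_j^T$, so that
\[
AGB=\sum_{i,j}g_{ij}C_{ij},\qquad C_{ij}:=Ae_ie_j^TB.
\]
A direct calculation (using $\sum_j e_j^TBB^Te_j=\|B\|_F^2$ and $\sum_i e_i^TA^TAe_i=\|A\|_F^2$) gives
\[
\sum_{i,j}C_{ij}C_{ij}^T=\|B\|_F^2\,AA^T,\qquad \sum_{i,j}C_{ij}^TC_{ij}=\|A\|_F^2\,B^TB,
\]
so that $\|(\sum C_{ij}C_{ij}^T)^{1/2}\|_p=\|B\|_F\|A\|_p$ and $\|(\sum C_{ij}^TC_{ij})^{1/2}\|_p=\|A\|_F\|B\|_p$. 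The non-commutative Khintchine inequality (which is valid for $p\geq 2$) then yields
\[
\E\|AGB\|_p\simeq \max\{\|A\|_p\|B\|_F,\ \|A\|_F\|B\|_p\}.
\]

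Second, I would convert this expectation estimate into a statement that holds with probability $1-O(1)$. The map $G\mapsto\|AGB\|_p$ is Lipschitz in the Frobenius metric on $G$ with constant at most $\|A\|_{op}\|B\|_{op}$: for any $G_1,G_2$, by H\"older's inequality in Schatten norms (using $\|\cdot\|_p\leq\|\cdot\|_2$ for $p\geq 2$ to dominate $\|\cdot\|_p$ by the Frobenius norm and the standard operator/Frobenius inequality),
\[
\bigl|\|AG_1B\|_p-\|AG_2B\|_p\bigr|\leq \|A(G_1-G_2)B\|_p\leq \|A\|_{op}\|B\|_{op}\|G_1-G_2\|_F.
\]
Gaussian concentration on the standard Gauss space $\R^{n\times n}$ then implies that with probability $1-O(1)$,
\[
\bigl|\|AGB\|_p-\E\|AGB\|_p\bigr|\lesssim \|A\|_{op}\|B\|_{op}.
\]
Since $\|A\|_p\|B\|_F\geq \|A\|_{op}\|B\|_{op}$ and $\|A\|_F\|B\|_p\geq \|A\|_{op}\|B\|_{op}$ (as $\|\cdot\|_p\geq \|\cdot\|_{op}$ and $\|\cdot\|_F\geq \|\cdot\|_{op}$), the fluctuation term is dominated by $\E\|AGB\|_p$, and the claimed two-sided estimate follows.

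I do not expect a serious obstacle: the main step is a clean application of the non-commutative Khintchine inequality, and the only bookkeeping is verifying the two covariance-type sums. The one place that deserves care is the concentration step, where one must check that the deviation scale $\|A\|_{op}\|B\|_{op}$ is indeed smaller than the mean; this follows from the inequalities just noted, so the whole argument goes through cleanly.
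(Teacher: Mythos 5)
Your proof is correct and takes essentially the same route as the paper: decompose $AGB$ as a Gaussian sum, compute the two "covariance" matrices to get $\|B\|_F\|A\|_p$ and $\|A\|_F\|B\|_p$, apply non-commutative Khintchine for the expectation, and upgrade to high probability via Gaussian Lipschitz concentration with constant $\|A\|_{op}\|B\|_{op}$. The only cosmetic difference is that the paper first reduces to diagonal $A$ and $B$ by rotational invariance, whereas you work directly with $C_{ij}=Ae_ie_j^TB$; you also spell out why the deviation scale is dominated by the mean, which the paper leaves implicit.
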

\begin{proof}
Note that the function $X\mapsto \|AGB\|_p$ is a Lipschitz function with Lipschitz constant $\|A\|_{op} \|B\|_{op}$ when $p\geq 2$. Using concentration inequality for Lipschitz function on Gaussian space, it suffices to show that
\[
\E_G \|AGB\|_p\simeq \max\{\|A\|_p\|B\|_F, \|A\|_F\|B\|_p \}.
\]
By rotational invariance, assume that $A$ and $B$ are diagonal matrices, whose diagonal entries are $(a_1,\dots,a_n)$ and $(b_1,\dots,b_n)$, respectively.

Write \[
AGB = \sum_{i,j} g_{ij} a_ib_j e_ie_j^T =: \sum_{i,j} g_{ij} C_{i,j}.
\]
To apply noncommutative Khintchine inequality, we shall calculate 
\[
\norm{\left(\sum_{i,j} C_{i,j}C_{i,j}^T\right)^{\frac12}}_p\qquad\text{and}\qquad\norm{\left(\sum_{i,j} C_{i,j}^TC_{i,j}\right)^{\frac12}}_p.
\]
Note that
\begin{align*}
\sum_{i,j} C_{i,j}C_{i,j}^T = \sum_{i,j} a_i^2 b_j^2 e_i e_i^T = \|B\|_F^2 A^2\\
\sum_{i,j} C_{i,j}^T C_{i,j} = \sum_{i,j} a_i^2 b_j^2 e_j e_j^T = \|A\|_F^2 B^2,
\end{align*}
and thus
\[
\norm{\left(\sum_{i,j} C_{i,j}C_{i,j}^T\right)^{\frac12}}_p = \|B\|_F \|A\|_p\qquad\text{ and }\qquad \norm{\left(\sum_{i,j} C_{i,j}^T C_{i,j}\right)^{\frac12}}_p = \|A\|_F \|B\|_p.
\]
The result follows immediately from noncommutative Khintchine inequality.
\end{proof}

\begin{corollary}\label{lem:AGHB2} Let $A$ and $B$ be deterministic $n\times N$ matrices and $G, H$ be $N\times r$ Gaussian random matrix of i.i.d.\ $N(0,1)$ entries. Suppose that $n\leq cr$ for some absolute constant $c \in (0,1)$. It holds with probability $1-\exp(-r)$ that
\[
\|AGH^TB^T\|_p \simeq \sqrt{r}\max\{\|A\|_p\|B\|_F, \|A\|_F\|B\|_p\}.
\]
\end{corollary}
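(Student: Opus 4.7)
The plan is to condition on $H$ and apply Corollary~\ref{lem:AXB2} with $A$ and $H^{T}B^{T}$ in the roles of $A$ and $B$; the extra ingredient beyond Corollary~\ref{lem:AXB2} is to show that $\|BH\|_F\simeq\sqrt{r}\|B\|_F$ and $\|BH\|_p\simeq\sqrt{r}\|B\|_p$ with high probability, which is exactly where the assumption $n\leq cr$ is used.

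First I would fix $H$ and mimic the Khintchine computation in the proof of Corollary~\ref{lem:AXB2} for the linear-in-$G$ expression $AG(H^{T}B^{T})=\sum_{ij} g_{ij}(Ae_i)(BHe_j)^{T}$. Setting $C_{ij}=(Ae_i)(BHe_j)^{T}$, one computes
\[
\sum_{ij} C_{ij}C_{ij}^{T}=\|BH\|_F^{2}\,AA^{T}, \qquad \sum_{ij}C_{ij}^{T}C_{ij}=\|A\|_F^{2}\,BHH^{T}B^{T},
\]
so the noncommutative Khintchine inequality yields $\E_G\|AGH^{T}B^{T}\|_p\simeq\max\{\|A\|_p\|BH\|_F,\,\|A\|_F\|BH\|_p\}$, and Gaussian Lipschitz concentration (as used in Corollary~\ref{lem:AXB2}) promotes this to a high-probability bound in $G$ conditional on $H$.

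Next I would analyze $\|BH\|_F$ and $\|BH\|_p$ for random $H$ using rotational invariance. Writing $B=U\Sigma_B V^{T}$ with $\Sigma_B$ of size $n'\times n'$ where $n'=\rk B\leq n$, the matrix $V^{T}H$ has the same distribution as an $n'\times r$ Gaussian matrix $\tilde H$ of i.i.d.\ $N(0,1)$ entries, so $BH$ has the same distribution as $U\Sigma_B\tilde H$ and $\sigma_i(BH)=\sigma_i(\Sigma_B\tilde H)$. Since $n'\leq n\leq cr$ with $c<1$, Proposition~\ref{prop:gaussian_matrix} places all singular values of $\tilde H$ in $[(1-\sqrt c)\sqrt{r},(1+\sqrt c)\sqrt{r}]$ with probability $\geq 1-\exp(-\Omega(r))$; on this event $(1-\sqrt c)^{2} r\,\Sigma_B^{2}\preceq \Sigma_B\tilde H\tilde H^{T}\Sigma_B\preceq (1+\sqrt c)^{2} r\,\Sigma_B^{2}$, so $\sigma_i(BH)\simeq\sqrt{r}\,\sigma_i(B)$ for every $i$, and in particular $\|BH\|_F\simeq\sqrt{r}\|B\|_F$ and $\|BH\|_p\simeq\sqrt{r}\|B\|_p$.

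Combining the two steps, on the intersection of these high-probability events we obtain
\[
\|AGH^{T}B^{T}\|_p\simeq\sqrt{r}\,\max\{\|A\|_p\|B\|_F,\ \|A\|_F\|B\|_p\},
\]
as claimed. The main technical obstacle is calibrating the Gaussian Lipschitz concentration of $\|AGH^{T}B^{T}\|_p$ around its conditional expectation so that the total failure probability matches the stated $1-\exp(-r)$: the singular-value concentration of $\tilde H$ and the norm concentration of $BH$ already hold at this scale, but the Lipschitz constant $\|A\|_{op}\|BH\|_{op}\simeq\sqrt{r}\,\|A\|_{op}\|B\|_{op}$ has to be compared against the conditional expectation (which is at least $\sqrt{r}\,\|A\|_{op}\|B\|_F$) at the right scale to drive the tail all the way down to $\exp(-r)$.
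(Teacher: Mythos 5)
Your proposal follows essentially the same route as the paper's proof: condition on $H$, run the noncommutative Khintchine computation from Corollary~\ref{lem:AXB2} on $AG(BH)^T$ to get $\E_G\|AGH^TB^T\|_p\simeq\max\{\|A\|_p\|BH\|_F,\ \|A\|_F\|BH\|_p\}$, promote this with Gaussian Lipschitz concentration, and then use $n\leq cr$ to argue $\|BH\|_F\simeq\sqrt{r}\|B\|_F$ and $\|BH\|_p\simeq\sqrt{r}\|B\|_p$. Your Khintchine bookkeeping is correct ($\sum_{ij}C_{ij}C_{ij}^T=\|BH\|_F^2AA^T$ and $\sum_{ij}C_{ij}^TC_{ij}=\|A\|_F^2BHH^TB^T$), and your handling of the second step is actually more careful than the paper's: you obtain a pointwise sandwich on all singular values of $BH$ via $\Sigma_B\tilde H\tilde H^T\Sigma_B$ and Proposition~\ref{prop:gaussian_matrix}, whereas the paper only invokes $\E s_{\max}(H),\E s_{\min}(H)\simeq\sqrt r$ to bound $\E\|HB\|_p$. (For what it's worth, the $\sqrt r$ and the $HB$ in the paper's intermediate display appear to be typos; the clean statement is the one you wrote.)

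The concentration worry you flag at the end is a genuine issue, but it is shared with the paper rather than a deficiency of your argument. The Lipschitz constant of $G\mapsto\|AG(BH)^T\|_p$ is $\|A\|_{op}\|BH\|_{op}$, while the conditional expectation is only guaranteed to be $\gtrsim\|A\|_{op}\|BH\|_F\geq\|A\|_{op}\|BH\|_{op}$; when, say, $B$ has rank one these are comparable, so Gaussian concentration only drives the multiplicative error to a constant with constant (not $\exp(-r)$) failure probability — exactly as in Corollary~\ref{lem:AXB2}, which is stated with failure probability $O(1)$. The $\exp(-\Omega(r))$ tail does hold for the $\|BH\|\simeq\sqrt r\|B\|$ step, but the Khintchine-plus-Lipschitz step does not obviously deliver it; the paper's stated $1-\exp(-r)$ therefore seems to require an additional argument (or should read $1-O(1)$, consistent with the companion corollary). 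Since the paper has the same gap, your proposal is a faithful reconstruction of the intended proof.
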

\begin{proof}
Note that the function $X\mapsto \|AGH^TB^T\|_p$ is a Lipschitz function with Lipschitz constant $\frac{1}{2}\|A\|_{op} \|B\|_{op}$ when $p\geq 2$. Using the concentration inequality for Lipschitz function on Gaussian space, it suffices to show that
\[
\E_{G,H} \|AGH^TB^T\|_p \simeq \sqrt{r}\max\{\|A\|_p\|B\|_F, \|A\|_F\|B\|_p\}.
\]
It follows from the proof of the previous lemma that
\[
\E_G \|AGH^TB^T\|_p \simeq \sqrt{r}\max\{\|A\|_p\|HB\|_F, \|A\|_F\|HB\|_p\}.
\]
The result is immediate, noting that $\E \|HB\|_p \simeq \sqrt{r}\|B\|_p$ because $\E s_{\max}(H), \E s_{\min}(H)\simeq \sqrt{r}$.
\end{proof}

\section{Upper bounds}\label{sec:ub}
We show specific designs of $G$ that achieve the distortion in $\tilde{D}_{p,q}$ advertised in the introduction, up to logarithmic factors. Specifically, we show that 
\begin{enumerate}[label=(\alph*),itemsep=0ex]
\item we can design $G$ with $r=1$ which attains the distortion $\tilde{D}_{p,q}$ for $1\leq p\leq 2\leq q$ and $1\leq q\leq 2\leq p$;
\item for some $\theta\in (0,1)$ and $t\leq \theta n$, we can design $G$ with $r = \Theta(t\log(n/t))$ rows which attains the distortion $\tilde{D}_{p,q}$, in all other cases of $p,q$.
\end{enumerate}

\subsection{Cases other than $1\leq p\leq 2\leq q$ and $1\leq q\leq 2\leq p$}
Let $G\in \R^{r\times n}$ ($r \geq C t$) be a random matrix and $c, c', \eta > 0$ be absolute constants which satisfy the following properties:
\begin{enumerate}[label=(\alph*),itemsep=0ex]
\item (subspace embedding) For a fixed $t$-dimensional subspace $X\subseteq \R^n$ it holds with probability $\geq 1-\exp(-c't)$ that
\[
(1-\eta)\norm{x}_2 \leq \norm{Gx}_2\leq (1+\eta)\norm{x}_2,\quad \forall x\in X.
\]
\item For a fixed $A\in \R^{n\times n}$ it holds with probability $\geq 1-\exp(-c'r)$ that
\[
\norm{GA}_{op} \leq c\left(\norm{A}_{op} + \frac{1}{\sqrt{r}}\norm{A}_{F}\right)
\]
\item For a fixed $A\in \R^{n\times n}$ it holds with probability $\geq 1-\exp(-c'r)$ that
\[
(1-\eta)\norm{A}_F\leq \norm{GA}_F \leq (1+\eta)\norm{A}_F.
\]
\end{enumerate}

Consider the singular value decomposition $A = U\Sigma V^T$, where $U$ and $V$ are orthogonal matrices, $\Sigma = \diag\{\sigma_1,\dots,\sigma_n\}$ with $\sigma_1\geq \sigma_2\geq \cdots$. For an index set $I\subseteq [n]$, define $A_I = U\Sigma_I V^T$, where $\Sigma_I$ is $\Sigma$ restricted to the diagonal elements with indices inside $I$ (the diagonal entries with indices outside $I$ are replaced with $0$).

\begin{theorem}\label{thm:1<=p<2}
Let $p,q\geq 1$. There exist constants $\theta=\theta(p,q) < 1$ small enough and $C=C(p,q)$ large enough such that for $t\leq \theta n$ and matrix $G$ satisfying the aforementioned properties, it holds for any (fixed) $A\in \R^{n\times n}$ with probabilty $1-\exp(-c''t)$ that
\begin{gather*} 
\frac{t^{\frac1q-\frac12}}{n^{\frac1p-\frac12}\log\frac{n}{t}}\norm{A}_p \lesssim \norm{GA}_q\lesssim \|A\|_p,\quad p \leq q < 2\\
\min\left\{\frac{1}{\log\frac{n}{t}},\frac{t^{\frac1q-\frac12}}{n^{\frac1p-\frac12}}\right\}\norm{A}_p \lesssim \norm{GA}_q\lesssim t^{\frac1q-\frac1p}\|A\|_p,\quad q \leq p < 2\\
\frac{1}{t^{\frac1p-\frac1q}\log\frac{n}{t}}\norm{A}_p \lesssim \norm{GA}_q\lesssim \max\left\{\frac{n^{\frac 12-\frac 1p}}{t^{\frac12-\frac1q}},1\right\}\|A\|_p,\quad q \geq p \geq 2\\
\frac{1}{\log\frac{n}{t}}\norm{A}_p \lesssim \norm{GA}_q\lesssim \frac{n^{\frac 12-\frac 1p}}{t^{\frac12-\frac1q}}\|A\|_p,\quad p \geq q \geq 2
\end{gather*}
\end{theorem}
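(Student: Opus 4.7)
The plan is to follow the head--tail decomposition sketched in the introduction. Given the SVD $A = U\Sigma V^T$, let $k = \Theta(t\log(n/t))$ and split $A = A_1 + A_2$, where $A_1$ collects the top $k$ singular values and $A_2$ is the tail. By property (a), applied to the column span of $A_1$ (either iterated along a chain of $\Theta(\log(n/t))$ nested $t$-dim subspaces, or by using the fact that $r \gtrsim k$ for the random $G$ being instantiated), we obtain $\sigma_i(GA_1) \simeq \sigma_i(A_1)$ for all $i$, and hence $\|GA_1\|_q \simeq \|A_1\|_q$. This uniformly handles the head across all four regimes.

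For the upper bound I would use $\|GA\|_q \leq \|GA_1\|_q + \|GA_2\|_q$. The head term is at most $\|A_1\|_q$, which converts to $\|A\|_p$ via $\|A_1\|_q \leq k^{\max(1/q-1/p,0)}\|A_1\|_p$ on the rank-$k$ matrix $A_1$. For the tail, properties (b) and (c) give $\|GA_2\|_{op} \lesssim \|A_2\|_{op} + \|A_2\|_F/\sqrt{r}$ and $\|GA_2\|_F \simeq \|A_2\|_F$; I would then interpolate, using $\|X\|_q \leq \mathrm{rank}(X)^{1/q-1/2}\|X\|_F$ when $q \leq 2$ and $\|X\|_q \leq \|X\|_F^{2/q}\|X\|_{op}^{1-2/q}$ when $q \geq 2$, and bound $\|A_2\|_{op} \leq \sigma_{k+1}(A) \leq \|A\|_p/k^{1/p}$ together with $\|A_2\|_F \leq \|A\|_p$ when $p \leq 2$ (respectively $\|A_2\|_F \leq n^{1/2-1/p}\|A\|_p$ when $p \geq 2$). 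Tracking constants through each regime gives the advertised upper bound.

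For the lower bound I would argue by dichotomy. If $\|A_1\|_p \gtrsim \|A\|_p$, then $\|GA\|_q \geq \|GA_1\|_q \simeq \|A_1\|_q$ and Schatten interpolation on the rank-$k$ matrix $A_1$ converts this into the advertised bound. Otherwise the tail dominates, and I would locate an index $i \leq k$ with $\sigma_i(A)^2 \cdot t \leq \sum_{j \geq i}\sigma_j(A)^2$; the existence of such an $i$ follows because, in its absence, $\|A_{\geq i+1}\|_F^2 < (1-1/t)\|A_{\geq i}\|_F^2$ for every $i \leq k$, forcing $\|A_{\geq k+1}\|_F^2 \leq (1-1/t)^k \|A\|_F^2 \ll \|A\|_F^2$ once $k = \Theta(t\log(n/t))$ with the correct constant, contradicting tail dominance. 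This is precisely where the factor $1/\log(n/t)$ in the statement is consumed. Setting $B = A_{\geq i}$, we have $\|B\|_{op} \leq \|B\|_F/\sqrt{t}$; properties (b) and (c) yield $\|GB\|_{op} \lesssim \|B\|_F/\sqrt{t}$ and $\|GB\|_F \simeq \|B\|_F$, so $GB$ must carry $\Omega(t)$ singular values of size $\Omega(\|B\|_F/\sqrt{t})$, giving $\|GB\|_q \gtrsim t^{1/q-1/2}\|B\|_F$; lower-bounding $\|B\|_F$ by $\|A\|_p$ up to the regime-dependent factor closes the estimate.

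The main obstacle is twofold. First, the four-regime bookkeeping: in each case the head term, the tail Frobenius term, and the ``good index $i$'' term contribute via different Schatten interpolation inequalities, and only one of the two lower-bound branches (head-dominant vs.\ tail-dominant) will be tight for a given $p,q$, so one must verify that both branches always beat the advertised bound. Second, justifying property (a) at dimension $k = \Theta(t\log(n/t))$ rather than at $t$ is delicate: either one iterates property (a) along $\Theta(\log(n/t))$ nested $t$-dim subspaces and absorbs the product of $(1\pm\eta)$ factors, which suffices because the singular-value ratios are then preserved up to constants, or one invokes a sharper concentration bound tied to the specific random $G$ (Gaussian, SRHT, sparse embedding) being used in the concrete instantiations of part~(b).
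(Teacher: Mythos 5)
Your high-level skeleton, a head/tail split combined with a dichotomy on whether the top singular values carry a constant fraction of $\|A\|_p^p$, is indeed the same skeleton the paper uses, and your treatment of the tail via properties (b), (c) and Schatten interpolation matches the paper's Lemmas~\ref{lem:key_alg_lb} and~\ref{lem:key_alg_ub}. But there are two concrete gaps in the head and in the dichotomy.

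The main gap is the one you already flag: you want to treat $A_1$ (rank $k=\Theta(t\log(n/t))$) as a single block and apply property~(a) to its $k$-dimensional column span, but property~(a) is only assumed for $t$-dimensional subspaces. Neither of your two fixes works as stated. Applying property~(a) separately along $\Theta(\log(n/t))$ disjoint $t$-dimensional subspaces does not give an embedding for their span: knowing $\|Gx\|_2\simeq\|x\|_2$ for each $x\in X_i$ individually says nothing about linear combinations across blocks, so there is no product of $(1\pm\eta)$ factors to collect. And strengthening property~(a) to dimension $k$ by appealing to the concrete Gaussian or SRHT $G$ changes the hypothesis of the theorem, which is stated abstractly for any $G$ satisfying properties (a)--(c); that route yields a corollary for a concrete $G$, not the theorem as stated. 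The paper avoids this entirely by partitioning the top $bt$ singular values into $b=\Theta(\log(n/t))$ consecutive blocks $I_1,\dots,I_b$ of size exactly $t$, applying property~(a) only to each $t$-dimensional $A_{I_i}$, and observing that when the head dominates, \emph{some single block} $I_{i^\ast}$ carries a $1/b$ fraction of $\|A\|_p^p$. That per-block device is exactly what produces the $1/\log(n/t)$ loss in the lower bound, and it is the piece your version is missing.

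The second gap is the existence argument for the good index. You propose: if no $i\leq k$ satisfies $\sigma_i^2\, t\leq \sum_{j\geq i}\sigma_j^2$, then $\|A_{\geq k+1}\|_F^2 \leq (1-1/t)^k\|A\|_F^2\approx (t/n)^C\|A\|_F^2$, ``contradicting tail dominance.'' But tail dominance is a statement about Schatten-$p$ mass, and a Frobenius decay by $(t/n)^C$ does not directly contradict $\|A_{\geq k+1}\|_p^p\gtrsim \|A\|_p^p$ when $p<2$: the conversion $\|A_{\geq k+1}\|_p^p\leq n^{1-p/2}\|A_{\geq k+1}\|_F^p$ picks up a factor $n^{1-p/2}$, which for $p$ near $1$ overwhelms $(t/n)^{Cp/2}$ when $t=\Theta(n)$ (e.g.\ $p=1$, $t=\theta n$ gives a bound $\theta^{C/2}\sqrt{n}\gg 1$). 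The paper's Lemma~\ref{lem:condition_aux} instead derives from failure of the good-index condition a \emph{pointwise} geometric halving $\sigma_{jt}\leq \tfrac12\sigma_{(j-1)t}$, hence $\sigma_{bt+1}\leq (t/n)^{\Theta(1)}\sigma_1$, and then bounds the Schatten-$p$ tail $\sum_{i>bt}\sigma_i^p\leq n\sigma_{bt+1}^p$ directly against the head, avoiding any lossy $F$-to-$p$ conversion.
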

Note that for $t = \Omega(\log n)$ and $r = Ct$ for some large $C$, a Gaussian random matrix of i.i.d.\ entries $N(0,1/r)$, or a randomized Hadamard Transform matrix of $t \polylog(t)$ rows, satisfies the conditions on $G$~\cite{CNW16}. We thus have an immediate corollary of Theorem~\ref{thm:1<=p<2} as follows.
\begin{corollary}\label{cor:1<=p<2}
Suppose that $1\leq p,q$ and $c\log n \leq t\leq \theta n$ for some absolute constants $\theta\in(0,1)$ and $c\geq 1$. There exists (random) $G\in \R^{r\times m}$ with $r\gtrsim t$ such that with probability $\geq 1-\exp(-c''t)$,
\begin{gather*}
\norm{A}_p \leq \norm{GA}_q \lesssim \frac{n^{\frac1p-\frac12}}{t^{\frac1q-\frac12}}\left(\log\frac{n}{t}\right)\norm{A}_p,\quad 1\leq p< q< 2;\\
\norm{A}_p \leq \norm{GA}_q \lesssim \max\left\{\left(\frac{n}{t}\right)^{\frac1p-\frac12},t^{\frac1q-\frac1p}\right\}\left(\log\frac{n}{t}\right)\|A\|_p,\quad 1\leq q\leq p\leq 2;\\
\norm{A}_p \leq \norm{GA}_q \lesssim \max\left\{\left(\frac{n}{t}\right)^{\frac12-\frac1p},t^{\frac1p-\frac1q}\right\}\left(\log\frac{n}{t}\right)\|A\|_p,\quad q\geq p\geq 2;\\
\norm{A}_p \lesssim \norm{GA}_q\lesssim  \frac{n^{\frac 12-\frac 1p}}{t^{\frac12-\frac1q}}\left(\log\frac{n}{t}\right)\|A\|_p,\quad p \geq q \geq 2.
\end{gather*}
In particular when $p=q$,
\begin{gather*}
\norm{A}_p \leq \norm{GA}_p \lesssim \left(\frac{n}{t}\right)^{\frac1p-\frac12}\left(\log\frac{n}{t}\right),\quad p \leq 2\\
\norm{A}_p \leq \norm{GA}_p \lesssim \left(\frac{n}{t}\right)^{\frac12-\frac1p}\left(\log\frac{n}{t}\right),\quad p \geq 2
\end{gather*}
\end{corollary}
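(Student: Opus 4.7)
The plan is to derive the corollary as a direct consequence of Theorem~\ref{thm:1<=p<2}, by first verifying that a Gaussian matrix with $N(0,1/r)$ entries and a Subsampled Randomized Hadamard Transform (SRHT) with $t\polylog(t)$ rows both satisfy the three structural conditions on $G$, and then rescaling $G$ so that the coefficient in the lower bound of Theorem~\ref{thm:1<=p<2} becomes exactly $1$.

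For the verification step, property (a) (the subspace embedding guarantee) for the scaled Gaussian matrix follows from Proposition~\ref{prop:gaussian_matrix} via the standard $\epsilon$-net argument on $\bS^{t-1}\cap X$. Property (b) is exactly Proposition~\ref{prop:gaussian_sketch_op} after dividing by $\sqrt{r}$. Property (c), the Frobenius-norm preservation for a fixed $A$, follows from writing $\|GA\|_F^2=\sum_i\|GAe_i\|_2^2$ and applying property (a) to the (at most) $n$-dimensional column span of $A$; alternatively, it is a direct $\chi^2$-concentration fact. For the SRHT with $t\polylog(t)$ rows, all three properties are established in~\cite{CNW16}, as noted in the remark preceding the corollary.

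For the main deduction, I apply Theorem~\ref{thm:1<=p<2} and obtain a two-sided bound of the form $c_1 f_{p,q}(n,t)\|A\|_p\le\|GA\|_q\le c_2 g_{p,q}(n,t)\|A\|_p$ with probability $\geq 1-\exp(-c''t)$. Replacing $G$ by $G'=G/(c_1 f_{p,q}(n,t))$ preserves conditions (a)--(c) up to absolute constants, while producing $\|A\|_p\le\|G'A\|_q\lesssim (g_{p,q}/f_{p,q})\|A\|_p$. For each of the four regimes, the distortion factor of the corollary then emerges from routine algebra. For example, in the regime $1\le q\le p\le 2$, we get
\[
\frac{g_{p,q}}{f_{p,q}} \;=\; t^{\frac1q-\frac1p}\max\!\left\{\log\tfrac{n}{t},\;\frac{n^{\frac1p-\frac12}}{t^{\frac1q-\frac12}}\right\} \;=\; \max\!\left\{t^{\frac1q-\frac1p}\log\tfrac{n}{t},\;\bigl(\tfrac{n}{t}\bigr)^{\frac1p-\frac12}\right\},
\]
which matches the claimed bound after absorbing a $\log(n/t)$ factor into $\lesssim$. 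The other three regimes are analogous: in each case the ratio simplifies to the advertised $\tilde{D}_{p,q}\log(n/t)$.

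The specialization $p=q$ is then immediate: when $p=q\le 2$ the max collapses, giving the factor $(n/t)^{1/p-1/2}\log(n/t)$; for $p=q\ge 2$ the last bullet reduces to $(n/t)^{1/2-1/p}\log(n/t)$. The only real work is careful bookkeeping of the exponents in the four piecewise cases of Theorem~\ref{thm:1<=p<2}; the one mild subtlety is that $1/p-1/2$ and $1/q-1/2$ may have opposite signs in the mixed regime, so some care is needed to confirm that the relevant max has the correct dominant term. I do not anticipate any genuine obstacle beyond this algebraic verification.
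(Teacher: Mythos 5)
Your proposal is correct and follows essentially the same route the paper takes: the paper states that the corollary is ``an immediate corollary of Theorem~\ref{thm:1<=p<2}'' once one notes that a Gaussian matrix with $N(0,1/r)$ entries (or an SRHT with $t\polylog(t)$ rows) satisfies properties (a)--(c), citing \cite{CNW16}, and you carry out the same verification (with a bit more detail) followed by the same rescaling and exponent bookkeeping, which I checked case by case and found to match. One minor inaccuracy: your claim that rescaling $G$ by $c_1 f_{p,q}(n,t)$ ``preserves conditions (a)--(c) up to absolute constants'' is not literally true since $f_{p,q}$ depends on $n$ and $t$, but this is harmless because the corollary only asserts the final distortion bound for the rescaled matrix, not that the rescaled matrix still satisfies (a)--(c).
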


To prove Theorem~\ref{thm:1<=p<2}, we need a few auxillary lemmata.

\begin{lemma}\label{lem:conditions_ub}
Let $\theta$, $t$, $C$ and $G$ be as defined in Theorem~\ref{thm:1<=p<2} and $b=\Theta(\log(n/t))$. At least one of the following conditions will hold:
\begin{equation}\label{eqn:large_ky_fan}
\sum_{i=1}^{b t} \sigma_i^p \geq \frac{1}{2}\sum_{i=1}^n \sigma_i^p.
\end{equation}
and
\begin{equation}\label{eqn:stop}
\sigma_s^2 \leq \frac{2}{t} \sum_{i=s}^n \sigma_i^2 \quad\text{for some}\quad s\leq b t.
\end{equation}
\end{lemma}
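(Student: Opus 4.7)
The plan is to prove the contrapositive: assuming that \eqref{eqn:stop} fails for every $s \leq bt$, i.e.\ $\sigma_s^2 > (2/t)T_s$ for all such $s$ where $T_s := \sum_{i=s}^n \sigma_i^2$, I will derive \eqref{eqn:large_ky_fan}. The key observation is that the failure of \eqref{eqn:stop} at index $s$ rewrites as the one-step recursion $T_{s+1} = T_s - \sigma_s^2 < (1-2/t)\,T_s$. Iterating this from $s=1$ through $s=bt$ yields the geometric decay
\[
T_{bt+1} \leq (1-2/t)^{bt}\,T_1 \leq e^{-2b}\,\|A\|_F^2,
\]
and, applying the same recursion backwards over blocks of length $t$, the complementary lower bound $T_{kt+1} \gtrsim e^{2(b-k)}\,T_{bt+1}$ for each $0\leq k \leq b$.

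Denote $S_{>} := \sum_{i>bt}\sigma_i^p$ and $S_{\leq} := \sum_{i\leq bt}\sigma_i^p$. To upper-bound $S_{>}$ I split by $p$: for $p\leq 2$, H\"older gives $S_{>} \leq (n-bt)^{1-p/2}\,T_{bt+1}^{p/2}$; for $p>2$, the pointwise estimate $\sigma_i^p \leq \sigma_{bt+1}^{p-2}\sigma_i^2$ combined with $\sigma_{bt+1}^2 \leq T_{bt+1}$ gives $S_{>} \lesssim T_{bt+1}^{p/2}$ up to a $p$-dependent constant. To lower-bound $S_{\leq}$ I use the failure of \eqref{eqn:stop} at each $s = kt+1$ for $k=0,\dots,b-1$, which yields $\sigma_{kt+1}^p \geq (2/t)^{p/2}\,T_{kt+1}^{p/2}$, and then sum these block-leading contributions using the backward growth of $T_{kt+1}$ to obtain
\[
S_{\leq} \gtrsim (2/t)^{p/2} \sum_{k=0}^{b-1} e^{p(b-k)}\,T_{bt+1}^{p/2} \gtrsim (2/t)^{p/2}\,e^{pb}\,T_{bt+1}^{p/2}.
\]

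Putting the two estimates together, the common factor $T_{bt+1}^{p/2}$ cancels and $S_{>}/S_{\leq}$ is bounded by a polynomial factor in $n,t$ times $e^{-pb}$; choosing the constant in $b = \Theta(\log(n/t))$ large enough (depending on $p,q$) forces $S_{>} \leq S_{\leq}$, which is exactly \eqref{eqn:large_ky_fan}. The main technical obstacle is absorbing the $n^{1-p/2}$ factor coming from H\"older into $e^{-pb}$ while keeping $b$ logarithmic in $n/t$ rather than in $n$; this is where I must exploit the block-wise lower bound on $S_{\leq}$ (rather than the crude $S_{\leq} \geq \sigma_1^p$), so that the geometric growth of the $T_{kt+1}$ across the $b$ blocks compensates the tail estimate.
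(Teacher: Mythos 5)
Your overall plan — contrapositive, one-step recursion $T_{s+1}<(1-2/t)T_s$, geometric decay of $T_{bt+1}$, H\"older for the tail — is sound and close in spirit to the paper's block argument, but there is a quantitative gap in your lower bound on $S_{\leq}$ that the argument cannot recover from.

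You lower-bound $S_{\leq}$ using only the $b$ block-leading entries $\sigma_{kt+1}$, $k=0,\dots,b-1$, arriving at $S_{\leq}\gtrsim (2/t)^{p/2}e^{pb}T_{bt+1}^{p/2}$. Because $T_{kt+1}\geq e^{2(b-k)}T_{bt+1}$ is exactly the inverse of the decay rate, the geometric sum $\sum_{k=0}^{b-1}T_{kt+1}^{p/2}$ is dominated by its $k=0$ term, so this bound is no stronger than the crude estimate $S_{\leq}\geq\sigma_1^p>(2/t)^{p/2}T_1^{p/2}$. Plugging it into the ratio gives, for $p\leq 2$,
\[
\frac{S_>}{S_\leq}\ \lesssim\ \frac{(n-bt)^{1-p/2}(t/2)^{p/2}}{e^{pb}}\ \simeq\ \frac{(n/t)^{1-p/2}\,t}{e^{pb}}.
\]
With $b=\Theta(\log(n/t))$ the denominator $e^{pb}$ is only $(n/t)^{\Theta(1)}$; it can absorb the $(n/t)^{1-p/2}$ but not the residual factor of $t$. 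When $t$ is polynomially large in $n$ (in particular when $t=\theta n$ with $\theta$ a small constant, which is exactly the regime of Theorem~\ref{thm:1<=p<2}), the ratio bound blows up, so the argument does not close. Increasing the constant in $b=\Theta(\log(n/t))$ does not help: the required constant would have to grow with $n$.

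The missing ingredient is a factor of $\Theta(t)$ in the lower bound on $S_{\leq}$, which the paper's analogous estimate \eqref{eqn:ky_fan_lb} explicitly carries. For $p\leq 2$ you can get it by using the failure of \eqref{eqn:stop} at \emph{every} $s\leq bt$, not just the $b$ block leaders: $\sigma_s^p>(2/t)^{p/2}T_s^{p/2}$ for each $s$, and $T_s\geq e^{2(bt+1-s)/t}T_{bt+1}$, so
\[
S_{\leq}\ >\ (2/t)^{p/2}\sum_{s=1}^{bt}T_s^{p/2}\ \geq\ (2/t)^{p/2}\,T_{bt+1}^{p/2}\sum_{u=1}^{bt}e^{pu/t}\ \gtrsim\ (2/t)^{p/2}\cdot\frac{t}{p}\,e^{pb}\,T_{bt+1}^{p/2}.
\]
The key point is that the geometric series $\sum_{u=1}^{bt}e^{pu/t}$ has ratio $e^{p/t}=1+\Theta(p/t)$, so about $t/p$ of its $bt$ terms are within a constant of the maximum; it equals $\Theta((t/p)e^{pb})$, not $\Theta(e^{pb})$. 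With the extra factor $\Theta(t)$ the ratio becomes $S_>/S_\leq\lesssim (n/t)^{1-p/2}/e^{pb}$, and choosing the constant in $b$ large enough finishes the $p\leq 2$ case. (For $p>2$ a different and in fact simpler route is needed, since summing all $bt$ terms produces $t^{1-p/2}<1$: there one can bound $S_>\leq\sigma_{bt+1}^{p-2}T_{bt+1}\leq \tfrac{t}{2}\sigma_{bt}^p$ directly from the failure at $s=bt$, and compare it with $S_{\leq}\geq bt\,\sigma_{bt}^p$.)
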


To prove the preceding lemma we need a further auxiliary lemma. Consider the first $b$ blocks of singular values of $A$ each of size $t$, that is, $I_1=\{\sigma_1,\dots,\sigma_{t}\}$, $\dots$, $I_b= \{\sigma_{(b-1)t+1},\dots,\sigma_{bt}\}$.
\begin{lemma}\label{lem:condition_aux}
If \eqref{eqn:stop} does not hold for any $s\leq  b t$, it must hold for all $2\leq j\leq b$ that $\sigma_{j t}\leq \frac{1}{2}\sigma_{(j-1) t}$.
\end{lemma}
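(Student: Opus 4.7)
The plan is to argue by contrapositive. Suppose some $2\leq j\leq b$ violates the conclusion, i.e.\ $\sigma_{jt} > \tfrac{1}{2}\sigma_{(j-1)t}$; I will produce an index $s\leq bt$ at which \eqref{eqn:stop} actually holds, contradicting the standing hypothesis that it fails everywhere in that range.

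The first step is to choose $s$ near $(j-1)t$ (taking $s=(j-1)t+1$, so $s\leq bt$, works), so that by monotonicity of singular values $\sigma_s\leq \sigma_{(j-1)t}$. The second step is to observe that $\sigma_{jt}>\tfrac{1}{2}\sigma_{(j-1)t}$ forces the whole block $\sigma_s,\sigma_{s+1},\ldots,\sigma_{jt}$ of $t$ consecutive singular values to lie in the interval $(\tfrac12\sigma_{(j-1)t},\sigma_{(j-1)t}]$: they are bounded above by $\sigma_s$ and bounded below by $\sigma_{jt}>\tfrac12\sigma_{(j-1)t}\geq\tfrac12\sigma_s$. The third step is the pigeonhole/telescoping estimate
\[
\sum_{i=s}^n \sigma_i^2 \;\geq\; \sum_{i=s}^{jt}\sigma_i^2 \;\geq\; t\,\sigma_{jt}^2,
\]
from which $\tfrac{2}{t}\sum_{i\geq s}\sigma_i^2 \geq 2\sigma_{jt}^2$, making $\tfrac{2}{t}\sum_{i\geq s}\sigma_i^2$ comparable to $\sigma_s^2$ and in particular violating $\sigma_s^2 > \tfrac{2}{t}\sum_{i\geq s}\sigma_i^2$ at this $s$.

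The main (essentially only) obstacle is constant-tracking. A direct execution of the above yields a decay factor of $1/\sqrt{2}$ rather than $1/2$ (since $\sigma_s^2\leq \sigma_{(j-1)t}^2 < 4\sigma_{jt}^2$ is consistent with $\sigma_s^2 > 2\sigma_{jt}^2$); to recover exactly $1/2$ one either enlarges the block of size $t$ to size $t+1$ by taking $s=(j-1)t$ and using one extra application of the hypothesis, or simply chains two consecutive halvings (giving $\sigma_{(j+1)t}\leq \tfrac12\sigma_{(j-1)t}$), which only affects the absolute constant in the definition $b=\Theta(\log(n/t))$ used in Lemma~\ref{lem:conditions_ub}. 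Structurally, monotonicity of the $\sigma_i$ together with the pigeonhole lower bound on the tail sum is the whole argument.
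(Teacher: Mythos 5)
Your proof is correct and takes essentially the same route as the paper: argue by contrapositive, pick $s$ near $(j-1)t$, and use the pigeonhole lower bound $\sum_{i\geq s}\sigma_i^2\geq t\,\sigma_{jt}^2$ to force condition~\eqref{eqn:stop} to hold at that $s$. You have also correctly spotted a genuine constant slip in the paper's own computation: from $\sigma_{jt}>\tfrac12\sigma_{(j-1)t}$ one only gets $t\sigma_{jt}^2>\tfrac{t}{4}\sigma_{(j-1)t}^2$, not $\tfrac{t}{2}\sigma_{(j-1)t}^2$ as the paper writes, so the decay guaranteed per block is $1/\sqrt{2}$ rather than $1/2$. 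As you note, this is harmless downstream (double $b$, or chain two consecutive steps to get $\sigma_{(j+1)t}\leq\tfrac12\sigma_{(j-1)t}$), since $b=\Theta(\log(n/t))$ absorbs the constant; but it would be cleaner to state the lemma with the $1/\sqrt{2}$ factor or adjust~\eqref{eqn:stop}.
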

\begin{proof}
If this is not true for some $j$ then
\[
\sum_{i=(j-1) t+1}^ {j t} \sigma_i^2 \geq  t\sigma_{j t}^2 > \frac{ t}{2}\sigma_{(j-1) t}^2,
\]
which contradicts \eqref{eqn:stop} with $s=(j-1)t \leq b t$.
\end{proof}

Now we prove Lemma~\ref{lem:conditions_ub}.
\begin{proof}[Proof of Lemma~\ref{lem:conditions_ub}]
Suppose that \eqref{eqn:large_ky_fan} does not hold and we need to show that \eqref{eqn:stop} holds for some $s\leq  b t$. Otherwise, it follows from the Lemma~\ref{lem:condition_aux} that
\[
\sigma_{b t+1} \leq \frac{\sigma_1}{2^b} \leq \left(\frac{t}{n}\right)^2\sigma_1
\]
and thus
\begin{equation}\label{eqn:ky_fan_tail_ub}
\sum_{i=b t+1}^n \sigma_i^p < n\sigma_{b t+1}^p \leq \frac{t^{2p}}{n^{2p-1}}\sigma_1^p\leq t\theta^{2p-1}\sigma_1^p,
\end{equation}
On the other hand,
\begin{equation}\label{eqn:ky_fan_lb}
\sum_{i=1}^{b t} \sigma_i^p \geq t\sigma_1^p\left(\frac{1}{2} + \frac{1}{4} + \cdots + \frac{1}{2^b}\right) = \left(1-\frac{1}{2^b}\right)t\sigma_1^p = (1-\theta^2)t\sigma_1^p.
\end{equation}
Using the assumption on $\theta$, we see that the rightmost side of \eqref{eqn:ky_fan_lb} is bigger than the rightmost side of \eqref{eqn:ky_fan_tail_ub}, which contradicts the assumption that \eqref{eqn:large_ky_fan} does not hold.\end{proof}

\begin{lemma}\label{lem:key_alg_lb}
Let $p,q\geq 1$, and $t$, $b$ and $G$ be defined as in Lemma~\ref{lem:conditions_ub}. Suppose that $s$ satisfies \eqref{eqn:stop} and let $J=\{s,s+1,\dots,n\}$. Then
\[
\norm{GA_J}_q \gtrsim 
\begin{cases}
\frac{t^{\frac1q-\frac12}}{n^{\frac1p-\frac12}}\norm{A_J}_p, & p\leq 2\\
\frac{1}{t^{\frac1p-\frac1q}}\norm{A_J}_p, & p > 2.
\end{cases}
\]
\end{lemma}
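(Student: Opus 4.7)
The plan is to exploit condition \eqref{eqn:stop}, which says the tail $A_J$ has small operator norm relative to Frobenius norm, namely $\norm{A_J}_{op}=\sigma_s\leq \sqrt{2/t}\,\norm{A_J}_F$. Consequently $GA_J$ has a ``spread'' spectrum, and the Schatten-$q$ lower bound will follow purely from estimates on $\norm{GA_J}_F$ and $\norm{GA_J}_{op}$ combined with either a Hölder trick (when $q\leq 2$) or the power-mean inequality (when $q>2$).

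The first step is to invoke properties (b) and (c) of $G$ (each failing with probability at most $\exp(-c'r)\leq \exp(-c''t)$) together with \eqref{eqn:stop} and $r\simeq t$ to obtain
\[
\norm{GA_J}_F\simeq \norm{A_J}_F\qquad\text{and}\qquad\norm{GA_J}_{op}\lesssim \norm{A_J}_{op}+\frac{\norm{A_J}_F}{\sqrt{r}}\lesssim \frac{\norm{A_J}_F}{\sqrt{t}}.
\]

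Next I would prove the key intermediate claim $\norm{GA_J}_q\gtrsim t^{1/q-1/2}\norm{A_J}_F$ for every $q\geq 1$. Let $X=GA_J$ with singular values $\sigma_i=\sigma_i(X)$. For $q\leq 2$, writing $\sigma_i^2=\sigma_i^q\sigma_i^{2-q}\leq \sigma_i^q\norm{X}_{op}^{2-q}$ (valid since $\sigma_i\leq \norm{X}_{op}$ and $2-q\geq 0$) and summing gives $\norm{X}_F^2\leq \norm{X}_q^q\norm{X}_{op}^{2-q}$, so $\norm{X}_q^q\geq \norm{X}_F^2/\norm{X}_{op}^{2-q}\gtrsim \norm{A_J}_F^q\,t^{1-q/2}$. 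For $q>2$, the power-mean inequality applied to the $m=\rk(X)\leq r\simeq t$ nonzero singular values yields $\norm{X}_q\geq m^{1/q-1/2}\norm{X}_F\gtrsim t^{1/q-1/2}\norm{A_J}_F$, where $1/q-1/2<0$ makes ``$m$ small'' work in our favour and forces the use of $r=O(t)$.

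Finally I would relate $\norm{A_J}_F$ to $\norm{A_J}_p$. For $p\leq 2$, the standard power-mean inequality $\norm{A_J}_2\geq \norm{A_J}_p/\rk(A_J)^{1/p-1/2}\geq \norm{A_J}_p/n^{1/p-1/2}$ together with the key claim gives the first case. For $p>2$, the estimate
\[
\norm{A_J}_F^2=\sum_{i\geq s}\sigma_i(A_J)^p\,\sigma_i(A_J)^{2-p}\geq \norm{A_J}_{op}^{2-p}\norm{A_J}_p^p
\]
(valid since $2-p<0$ and $\sigma_i(A_J)\leq \norm{A_J}_{op}$) combined with $\norm{A_J}_{op}\leq \sqrt{2/t}\,\norm{A_J}_F$ from \eqref{eqn:stop} rearranges to $\norm{A_J}_F\gtrsim t^{1/2-1/p}\norm{A_J}_p$, and multiplying with the key claim gives $\norm{GA_J}_q\gtrsim t^{1/q-1/p}\norm{A_J}_p$, the second case. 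The only mildly delicate point is the $q>2$ branch, which requires the rank bound $\rk(GA_J)\leq r\simeq t$ to prevent the Schatten mass from spreading over too many singular values; everything else is routine bookkeeping with Hölder-type inequalities.
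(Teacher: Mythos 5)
Your argument is correct and reaches the same bounds; the strategy is the same in outline but the central step is handled by a genuinely different technique. Like the paper, you first use property (b), property (c) and \eqref{eqn:stop} to establish $\norm{GA_J}_F\simeq\norm{A_J}_F$ and $\norm{GA_J}_{op}\lesssim\norm{A_J}_F/\sqrt t$, and you finish by interpolating between $\norm{A_J}_F$ and $\norm{A_J}_p$ exactly as the paper does. The difference is in how you convert the operator/Frobenius estimates on $X=GA_J$ into a Schatten-$q$ lower bound. The paper runs a pigeonhole count: it shows that if fewer than $\alpha r$ singular values of $X$ exceed $\gamma\norm{A_J}_F/\sqrt t$ then $\norm{X}_F^2$ would be too small, after balancing the parameters $\alpha,\gamma$ against $K$ and $C$; from the resulting $\Omega(t)$ singular values of size $\Omega(\norm{A_J}_F/\sqrt t)$ it reads off $\norm{X}_q\gtrsim t^{1/q-1/2}\norm{A_J}_F$ uniformly in $q$. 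You instead prove the same intermediate claim $\norm{X}_q\gtrsim t^{1/q-1/2}\norm{A_J}_F$ in two cases via standard Schatten-norm comparisons: for $q\leq 2$ the inequality $\norm{X}_F^2\leq\norm{X}_q^q\norm{X}_{op}^{2-q}$, and for $q>2$ the $\ell_q$--$\ell_2$ comparison on the at most $r\simeq t$ nonzero singular values of $X$. This is tidier in that it avoids the explicit parameter balance $(\alpha,\gamma)$; the cost is the two-case split, and the $q>2$ branch leans directly on $\rk(GA_J)\leq r$ rather than on the operator-norm ceiling (the paper's counting argument uses both the rank bound and the ceiling). Both approaches are essentially Hölder-type spectral counting, so the gain is mostly expository, but it is a valid and somewhat cleaner derivation.
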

\begin{proof}
Combining Property (b) of $G$ with \eqref{eqn:stop} yields that
\[
\|GA_J\|_{op} \leq \frac{c}{\sqrt t}\left(\sqrt{2}+\sqrt{\frac{1}{C}}\right)\|A_J\|_F =: \frac{K}{\sqrt{t}}\|A_J\|_F
\]
On the other hand,
\[
\|GA_J\|_F \geq \frac{1}{2}\|A_J\|_F.
\]
This implies that at least $\alpha r$ singular values of $GA_J$ are at least $\frac{\gamma}{\sqrt{t}}\|A_J\|_F$, provided that
\[
C\left((1-\alpha)\gamma^2 + \alpha K^2\right) < \frac{1}{4},
\]
which is satisfied if we choose $\gamma = \Theta(1/\sqrt{C})$, $\alpha = \Theta(1/K^{2/q})$.

Now, when $p\leq 2$,
\begin{align*}
\norm{GA_J}_q \geq (\alpha r)^{\frac{1}{q}} \frac{\gamma}{\sqrt{t}}\norm{A_J}_F \geq (\alpha C)^{\frac{1}{q}}\gamma \cdot \frac{t^{\frac1q-\frac12}}{n^{\frac1p-\frac12}}\norm{A_J}_p.
\end{align*}
When $p > 2$, we have 
\[
\norm{A_J}_p \leq \norm{A_J}_{op}^{1-\frac2p} \norm{A_J}_F^{\frac 2p} \leq \left(\frac{2}{ t}\right)^{\frac12-\frac1p}\|A_J\|_F,
\]
and thus
\[
\norm{GA_J}_q \geq (\alpha r)^{\frac{1}{q}} \frac{\gamma}{\sqrt{t}}\norm{A_J}_F \geq (\alpha C)^{\frac{1}{q}}\frac{\gamma}{t^{\frac12-\frac1q}}\cdot \frac{1}{\left(\frac{2}{ t}\right)^{\frac12-\frac1p}}\norm{A_J}_p\gtrsim \frac{1}{t^{\frac1p-\frac1q}}\norm{A_J}_p.\qedhere
\]
\end{proof}

\begin{lemma}\label{lem:key_alg_ub}
Let $p,q\geq 1$, and $t$, $b$ and $G$ be defined as in Lemma~\ref{lem:conditions_ub}. Suppose that $s$ satisfies \eqref{eqn:stop} and let $J=\{s,s+1,\dots,n\}$. Then
\[
\norm{GA_J}_q \lesssim 
\begin{cases}
\frac{1}{t^{\frac1p-\frac1q}}\norm{A_J}_p, & p, q \leq 2;\\
\frac{n^{\frac12-\frac1p}}{t^{\frac12-\frac1q}}\norm{A_J}_p, & p,q\geq 2.
\end{cases}
\]
\end{lemma}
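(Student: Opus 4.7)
The strategy is the mirror image of the proof of Lemma~\ref{lem:key_alg_lb}: translate~\eqref{eqn:stop} into extremal Schatten-norm bounds on $GA_J$, interpolate these to a bound on $\norm{GA_J}_q$, and finally convert $\norm{A_J}_F$ back to $\norm{A_J}_p$. The two regimes $p,q\leq 2$ and $p,q\geq 2$ use different interpolations but share the same backbone.

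First I note that \eqref{eqn:stop} is equivalent to $\norm{A_J}_{op}=\sigma_s\leq\sqrt{2/t}\,\norm{A_J}_F$. Plugging this into Property~(b) of $G$ and using $r\geq Ct$ gives, with probability $1-\exp(-c't)$, the bound $\norm{GA_J}_{op}\lesssim \norm{A_J}_F/\sqrt{t}$; Property~(c) simultaneously supplies $\norm{GA_J}_F\lesssim\norm{A_J}_F$. These are the two ``raw'' bounds on $GA_J$ that I will interpolate, and I union-bound over their failure events.

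For $p,q\leq 2$, the crude rank bound $\rk(GA_J)\leq r=\Theta(t)$ together with H\"older on the singular-value vector gives $\norm{GA_J}_q\leq r^{1/q-1/2}\norm{GA_J}_F\lesssim t^{1/q-1/2}\norm{A_J}_F$. For $p,q\geq 2$, I use instead the pointwise estimate $\sigma_i(GA_J)^q\leq\norm{GA_J}_{op}^{q-2}\sigma_i(GA_J)^2$ (valid since $q\geq 2$), summing to obtain $\norm{GA_J}_q\leq\norm{GA_J}_{op}^{1-2/q}\norm{GA_J}_F^{2/q}$ and substituting the raw bounds to get the same prefactor $\norm{GA_J}_q\lesssim t^{1/q-1/2}\norm{A_J}_F$.

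It remains to convert $\norm{A_J}_F$ into $\norm{A_J}_p$. For $p\leq 2$, since every singular value of $A_J$ is at most $\norm{A_J}_{op}$, I have $\sigma_i^2\leq\norm{A_J}_{op}^{2-p}\sigma_i^p$, hence $\norm{A_J}_F^2\leq\norm{A_J}_{op}^{2-p}\norm{A_J}_p^p$; plugging in $\norm{A_J}_{op}\leq\sqrt{2/t}\,\norm{A_J}_F$ and rearranging yields $\norm{A_J}_F\lesssim t^{1/2-1/p}\norm{A_J}_p$, and chaining with the previous paragraph gives $\norm{GA_J}_q\lesssim t^{1/q-1/p}\norm{A_J}_p$, as claimed. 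For $p\geq 2$, the standard H\"older inequality for Schatten norms together with $\rk(A_J)\leq n$ gives $\norm{A_J}_F\leq n^{1/2-1/p}\norm{A_J}_p$, and chaining produces the advertised $n^{1/2-1/p}/t^{1/2-1/q}$ factor. I do not foresee a real obstacle; the only delicate point is picking the correct Schatten interpolation on each side of $q=2$ so that the Frobenius-to-$p$-norm conversion exactly produces the target exponents while the constants remain absolute (depending only on $p,q$).
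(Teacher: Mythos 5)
Your proposal is correct and follows essentially the same route as the paper: translate \eqref{eqn:stop} via Properties~(b) and~(c) into $\norm{GA_J}_{op}\lesssim \norm{A_J}_F/\sqrt{t}$ and $\norm{GA_J}_F\simeq\norm{A_J}_F$, then use the rank bound $\norm{GA_J}_q\leq r^{1/q-1/2}\norm{GA_J}_F$ for $q\leq 2$ and the interpolation $\norm{GA_J}_q\leq\norm{GA_J}_{op}^{1-2/q}\norm{GA_J}_F^{2/q}$ for $q\geq 2$, finally converting $\norm{A_J}_F$ to $\norm{A_J}_p$ using \eqref{eqn:stop} when $p\leq 2$ and the ambient dimension $n$ when $p\geq 2$.
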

\begin{proof}
When $p \leq 2$, we have that
\[
\norm{A_J}_F^2 \leq \norm{A_J}_p^p \norm{A_J}_{op}^{2-p}.
\]
Using \eqref{eqn:stop}, we obtain that
\[
\norm{A_J}_p \geq \frac{\norm{A_J}_F^{2/p}}{\norm{A_J}_{op}^{2/p-1}} \geq \left(\frac{ t}{2}\right)^{\frac1p-\frac12}\norm{A_J}_F.
\]
When $q \leq 2$, it follows from Property (c) of $G$ that
\[
\|GA_J\|_q \leq r^{\frac{1}{q}-\frac{1}{2}}\|GA_J\|_F \leq (1+\eta)r^{\frac{1}{q}-\frac{1}{2}}\|A_J\|_F.
\]
Thus when $p, q\leq 2$,
\[
\|GA_J\|_q \leq (1+\eta) r^{\frac1q-\frac12} \left(\frac{2}{ t}\right)^{\frac 1p-\frac12} \norm{A_J}_p
= (1+\eta) (Cb q t)^{\frac1q-\frac12} \left(\frac{2}{ t}\right)^{\frac 1p-\frac12} \norm{A_J}_p
\lesssim \frac{1}{t^{\frac 1p-\frac1q}}\norm{A_J}_p.
\]
When $p,q >2$,
\begin{multline*}
\norm{GA_J}_q 
\leq \norm{GA_J}_{op}^{1-\frac2q} \norm{GA_J}_F^{\frac 2q}
\leq \left(\frac{K}{\sqrt t}\norm{A}_F\right)^{1-\frac2q}\left((1+\eta)\norm{A}_F\right)^{\frac 2q}\\
= \left((1+\eta)\right)^{\frac 2q} \frac{1}{t^{\frac12-\frac1q}}\norm{A}_F
\lesssim \frac{n^{\frac 12-\frac 1p}}{t^{\frac12-\frac1q}}\norm{A}_p.
\qedhere
\end{multline*}
\end{proof}

Now we are ready to show Theorem~\ref{thm:1<=p<2}.
\begin{proof}[Proof of Theorem~\ref{thm:1<=p<2}]
It follows from the subspace embedding property of $G$ to show that
\[
(1-\eta)\norm{A_{I_i}}_q\leq \norm{GA_{I_i}}_q\leq (1+\eta)\norm{A_{I_i}}_q,\quad 1\leq i\leq b
\]
and thus
\begin{align*}
\frac{1-\eta}{t^{\frac1p-\frac1q}}\norm{A_{I_i}}_p\leq \norm{GA_{I_i}}_q\leq (1+\eta)\norm{A_{I_i}}_p, \qquad p \leq q;\\
(1-\eta)\norm{A_{I_i}}_p\leq \norm{GA_{I_i}}_q\leq (1+\eta)t^{\frac1q-\frac1p}\norm{A_{I_i}}_p, \qquad p \geq q.
\end{align*}
When \eqref{eqn:large_ky_fan} holds, there exists $i^\ast$ ($1\leq i^\ast\leq b$) such that
\[
\norm{A_{I_{i^\ast}}}_p\geq \frac{1}{2^{\frac1p}b} \norm{A}_p
\]
and thus
\begin{gather*}
\frac{1}{bt^{\frac1p-\frac1q}}\norm{A}_p\lesssim \norm{GA_{I_{i^\ast}}}_q\lesssim \norm{A}_p, \quad p\leq q\\
\frac{1}{b}\norm{A}_p\lesssim \norm{GA_{I_{i^\ast}}}_q\lesssim t^{\frac1q-\frac1p}\norm{A}_p, \quad p\geq q
\end{gather*}
When \eqref{eqn:large_ky_fan} does not hold, let $J$ be as defined in Lemma~\ref{lem:key_alg_lb} and
\[
\frac{1}{2^{\frac1p}}\norm{A}_p \leq \norm{A_J}_p\leq \norm{A}_p.
\]
The claimed upper and lower bounds follow from combining the bounds above, together with Lemma~\ref{lem:key_alg_lb}, Lemma~\ref{lem:key_alg_ub}, and 
\begin{gather*}
\norm{GA}_q \geq \max\left\{ \norm{GA_{I_1}}_q,\dots,\norm{GA_{I_b}}_q, \norm{GA_J}_q \right\}\\
\norm{GA}_q \leq \sum_{i=1}^b \norm{GA_{[I_i]}}_q + \norm{GA_J}_q.\qedhere
\end{gather*}
\end{proof}

\subsection{Case $1\leq p\leq 2\leq q$ and $1\leq q\leq 2\leq p$}
\begin{theorem}\label{thm:p>2_ub}
Let $1\leq p\leq 2\leq q$ or $1\leq q\leq 2\leq p$. Let $g\sim N(0,I_n)$. Then with arbitrarily large constant probability, 
\begin{gather*}
n^{\frac 12-\frac 1p}\norm{A}_p \lesssim \norm{g^TA}_q\lesssim \|A\|_p,\quad 1\leq p\leq 2\leq q;\\
\norm{A}_p \lesssim \norm{g^TA}_q\lesssim n^{\frac 1p-\frac 12}\|A\|_p,\quad 1\leq p\leq 2\leq q.
\end{gather*}
\end{theorem}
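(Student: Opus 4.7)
\textbf{Proof proposal for Theorem~\ref{thm:p>2_ub}.} The plan rests on the observation that $g^T A$ is a $1\times n$ matrix, hence has rank at most $1$, and its single nonzero singular value is $\|g^TA\|_2$. Consequently
\[
\|g^TA\|_q = \|g^TA\|_2 \quad \text{for every } q\geq 1,
\]
so the Schatten-$q$ norm on the sketch side is independent of $q$, and the whole problem reduces to a statement about the ordinary $\ell_2$-norm of a Gaussian vector acting on the spectrum of $A$.

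Next, I would exploit rotational invariance of $g\sim N(0,I_n)$. Writing the singular value decomposition $A = U\Sigma V^T$, we have $\|g^TA\|_2 = \|(U^Tg)^T\Sigma V^T\|_2 = \|\tilde g^T\Sigma\|_2$ where $\tilde g\sim N(0,I_n)$, so $\|g^TA\|_2^2 = \sum_{i}\tilde g_i^2\sigma_i(A)^2$. This is a weighted sum of $\chi^2_1$ variables with mean $\sum_i\sigma_i(A)^2 = \|A\|_F^2$; a direct Markov bound gives the upper tail, and a Paley--Zygmund or Hanson--Wright estimate gives the lower tail. Thus with arbitrarily large constant probability,
\[
\|g^TA\|_q \;=\; \|g^TA\|_2 \;\simeq\; \|A\|_F \;=\; \|A\|_2.
\]

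Finally, I would convert $\|A\|_2$ to $\|A\|_p$ via the standard Schatten norm monotonicity and power-mean comparisons, noting that $A$ has at most $n$ singular values. For $1\leq p\leq 2$ we have $\|A\|_2\leq \|A\|_p \leq n^{1/p-1/2}\|A\|_2$, so in the case $1\leq p\leq 2\leq q$ we obtain
\[
n^{\frac12-\frac1p}\|A\|_p \;\lesssim\; \|g^TA\|_q \;\lesssim\; \|A\|_p,
\]
matching the first asserted bound. Symmetrically, for $p\geq 2$ we have $\|A\|_p\leq \|A\|_2\leq n^{1/2-1/p}\|A\|_p$ (the latter by the power-mean inequality applied to $\sigma_i^2$ with exponent $p/2\geq 1$), so in the case $1\leq q\leq 2\leq p$ we obtain
\[
\|A\|_p \;\lesssim\; \|g^TA\|_q \;\lesssim\; n^{\frac12-\frac1p}\|A\|_p,
\]
which is the second asserted bound (the exponent $\tfrac1p-\tfrac12$ in the statement appears to be a sign typo for $\tfrac12-\tfrac1p$).

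The argument is essentially obstacle-free once the rank-$1$ reduction is spotted; the only item requiring mild care is producing a $(1\pm\tfrac12)$ two-sided concentration of the weighted $\chi^2$-sum $\sum \tilde g_i^2\sigma_i^2$ around its mean with an absolute constant success probability that can be boosted to be arbitrarily close to $1$ by tightening the multiplicative constants hidden in $\lesssim$, $\gtrsim$.
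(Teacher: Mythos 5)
Your proposal is correct and follows essentially the same route as the paper: observe that $g^TA$ is a rank-one row vector so $\|g^TA\|_q=\|g^TA\|_F$ for every $q$, invoke $\|g^TA\|_F=\Theta(\|A\|_F)$ with arbitrarily large constant probability, and then compare $\|A\|_F$ to $\|A\|_p$ via the standard power-mean inequalities. The paper states the concentration $\|g^TA\|_F=\Theta(\|A\|_F)$ without proof whereas you supply the diagonalization $\|g^TA\|_2^2=\sum_i\tilde g_i^2\sigma_i(A)^2$ and sketch a tail bound, which is a harmless elaboration. You are also right that the second display line of the theorem has two typos: the case label should read $1\leq q\leq 2\leq p$ (not $1\leq p\leq 2\leq q$ a second time), and the exponent should be $n^{1/2-1/p}$ rather than $n^{1/p-1/2}$, consistent with the target distortion $\hat D_{p,q}=n^{1/2-1/p}$ stated in the introduction for this regime and with the inequality $\|A\|_p\leq\|A\|_F\leq n^{1/2-1/p}\|A\|_p$ used in the paper's own (also slightly typo-ridden) proof.
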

\begin{proof}
Note that $g^TA$ is a row vector, and it holds that $\norm{g^TA}_q = \norm{g^TA}_F$ for all $q$. 

Since $\norm{g^TA}_F = \Theta(\|A\|_F)$ with arbitrarily large constant probability, the theorem follows from the facts that
\begin{gather*}
n^{\frac12-\frac1p}\leq \|A\|_F\leq \|A\|_p,\quad p < 2\\
\|A\|_p\leq \|A\|_F\leq n^{\frac12-\frac1p}\|A\|_p,\quad p > 2.\qedhere
\end{gather*}
\end{proof}

\section{Application to Streaming Algorithms}\label{sec:streaming}
Here we show that we can implement our embedding as a streaming algorithm.
The two things we need to show are that our sketching matrices can be maintained
with limited randomness, and our sketch can be maintained in small space. 

We first prove our claim that we can truncate sketching matrices to $\Theta(\log n)$ bits for each entry by verifying that properties (a), (b) and (c) of
Section \ref{sec:ub} will continue to hold after truncation. Let $G$ be a matrix satisfying properties (a), (b) and (c), and let $G'$ be of the same dimension as $G$ such that $(G-G')_{ij}\leq 1/\poly(n)$ for all $i, j$. We can choose the power of $n$ in $\poly(n)$ big enough such that $\norm{G-G'}_{op}\leq 0.001$. Then 
\[
\left|\|G'x\|_2 - \|Gx\|_2\right| \leq \|(G-G')x\|_2 \leq \|G-G'\|_{op}\|x\|_2 \leq 0.001\|x\|_2
\]
and
\[
\left|\norm{G'A}_{F} - \norm{GA}_{F}\right| \leq \norm{(G-G')A}_{F} \leq \norm{G-G'}_{op}\norm{A}_{F} \leq 0.001\norm{A}_{F},
\]
which shows that properties (a) and (c) hold for $G'$ with a slightly bigger $\eta$. Lastly,
\[
\left|\norm{G'A}_{op} - \norm{GA}_{op}\right| \leq \norm{(G-G')A}_{op} \leq \norm{G-G'}_{op}\norm{A}_{op} \leq 0.001\norm{A}_{op},
\]
which shows that property (b) holds for $G'$ with a slightly bigger constant $c$.

Next we show that the three properties (a), (b) and (c) also hold for Gaussian random matrices with reduced randomness. Let $G$ be a random matrix with $\Theta(r)$-wise independent entries each drawn from an $N(0,1/r)$ distribution, but truncated to additive $1/\poly(n)$ for a suficiently large $\poly(n)$ (recall that in our streaming application, $r = n/D^2$ -- see Section \ref{sec:intro} for discussion). It is known that without the truncation, $G$ with $\Theta(r)$-wise independent entries provides a subspace embedding for $t$-dimensional spaces (see, e.g., the second part of the proof of Theorem 8 of \cite{kvw14}, which is stated for sign matrices but the same argument holds for Gaussians. For the latter, one can replace Theorem 2.2 of \cite{cw09} with the more general
Theorem 6 and Remark 1 of \cite{kmn11}), which is property (a). It is also known that $G$ is a Johnson-Lindenstrauss transform~\cite[Remark 7]{KN10}, that is,
\[
\Pr\left\{ (1-\eta)\|x\|_2 \leq \|Gx\|_2 \leq (1+\eta)\|x\|_2\right\} \geq 1-\exp(-c'r),
\]
whence Property (c) follows immediately by taking a union bound over the columns $x$ of $A$ (recall that $r\geq t\geq \log n$). Since $G$ provides a subspace embedding, property (b) follows from Theorem 1 in \cite{CNW16}. Finally, note
that as argued above, given that properties (a), (b), and (c) hold for $G$
before truncation, they also hold after truncation.

It follows from the discussion above that we can store an $O((n/D^2) \log n)$
bit
seed to succinctly describe and generate matrices $R$ and $S$, and for
matrices $A$ specified with $O(\log n)$ bits, we can store our sketch
$RAS$ in a stream using $(n^2/D^4) \polylog(n)$ bits of memory. Note that
the space needed to store the random seed to generate $R$ and $S$ is
negligible compared to the space to store the sketch $RAS$.

\bibliographystyle{plainurl}
\bibliography{literature}

\end{document}